\documentclass[oneside,11pt,a4paper,article]{memoir}
\settypeblocksize{23cm}{15cm}{*}
\setbinding{0cm}
\setulmargins{*}{*}{1}
\setlrmargins{*}{*}{1}

\checkandfixthelayout

\raggedbottom

\usepackage{microtype}
\title{An All-But-One Entropic Uncertainty Relation, and Application to Password-Based Identification}
\author{Niek J. Bouman$^{1}$, Serge Fehr$^{1}$,\\Carlos Gonz\'alez-Guill\'en$^{2,3}$ and Christian Schaffner$^{4,1}$\\
\small \emph{$^1$Centrum Wiskunde \& Informatica (CWI), Amsterdam, The Netherlands}\\
\small \emph{$^2$Universidad Polit\'ecnica de Madrid, Spain \hspace{.5em} 
$^3$IMI, Universidad Complutense de Madrid, Spain}\\
\small\emph{$^4$University of Amsterdam (UvA), The Netherlands}}

\date{}

\usepackage{amssymb}
\usepackage{amsmath}
\usepackage{amsfonts}
\usepackage{mathrsfs}
\usepackage{amsthm}
\usepackage{xspace}
\usepackage{comment}
\usepackage{booktabs}
\usepackage{graphicx}
\usepackage{ifpdf}
\ifpdf
\usepackage[usenames,dvipsnames]{color}
\else
\usepackage[usenames,dvips]{color}
\fi
\usepackage{mdwlist}

\usepackage{verbatim}
\usepackage{ifthen}

\usepackage{hyperref}
\usepackage{float}
\floatstyle{ruled}
\newfloat{protocol}{thp}{lop}
\floatname{protocol}{Protocol}

\definecolor{RemarkRed}{rgb}{0.8,0.0,0.0}

\newcommand{\co}{0}
\newcommand{\cl}{1}

\ifpdf

\else

\fi

\newcommand{\bias}{\ensuremath{\mathrm{bias}}}

\renewcommand{\Pr}{\ensuremath{\mathrm{Pr}}}


\newcommand{\nbit}{\{ 0,1 \} ^n}

\renewcommand{\P}{P}
\newcommand{\ol}[1]{\overline{#1}}

\newcommand{\out}[2]{|#1\rangle \!\langle #2|\xspace}
\newcommand{\outs}[1]{|#1\rangle \!\langle #1|\xspace}
\newcommand{\inprod}[2]{\langle #1|#2 \rangle\xspace}

\newcommand{\tvec}[1]{\langle #1 |\xspace}
\newcommand{\kron}{\otimes\xspace}
\newcommand{\xor}{\oplus\xspace}
\newcommand{\schur}{\odot\xspace}

\newcommand{\ket}[1]{\ensuremath{| #1 \rangle}\xspace}         
\newcommand{\bra}[1]{\tvec{#1}}
\newcommand{\braket}[2]{\inprod{#1}{#2}}
\newcommand{\ketbra}[2]{\out{#1}{#2}}
\newcommand{\proj}[1]{\outs{#1}}
\newcommand{\tr}{\mathrm{tr}}

\newcommand{\trace}{\mathrm{tr}}
\newcommand{\I}{\mathbb{I}}
\newcommand{\dist}{\delta}

\newcommand{\set}[1]{\{#1\}}
\newcommand{\Set}[2]{\{#1:#2\}}
\newcommand{\setn}[1][n]{[#1]}
\newcommand{\cnum}{\mathbb{C}}
\newcommand{\bat}{\{0,1\}}
\newcommand{\mcal}[1]{\mathcal{#1}}

\newcommand{\sdm}{\mcal{D}}
\newcommand{\R}{\mathbb{R}}
\newcommand{\N}{\mathbb{N}}

\newcommand{\encoding}{\mathfrak{c}\xspace}

\newcommand{\Tset}[1]{\ensuremath{\mathcal{#1}}\xspace}

\newcommand{\etal}{{\emph{et al.}}\xspace}

\newcommand{\assign}{\ensuremath{:=}}

\newcommand{\B}{{\mathcal{B}}}

\newcommand{\E}{\mathcal{E}}
\newcommand{\C}{\mathcal{C}}
\renewcommand{\L}{\mathcal{L}}

\newcommand{\gs}{p_\mathrm{guess}}

\newcommand{\QID}{{\bfseries\texttt{Q-ID}}\xspace}

\newcommand{\SD}{\mathrm{SD}}

\newcommand{\stdist}{\mathrm{SD}}

\newcommand{\distuni}{d_\mathrm{unif}}
\newcommand{\q}[1]{\hat{#1}}   

\newcommand{\refsec}[1]{Section~\ref{sec:#1}\xspace}
\newcommand{\refthm}[1]{Theorem~\ref{thm:#1}\xspace}
\newcommand{\reflem}[1]{Lemma~\ref{lem:#1}\xspace}
\newcommand{\refcor}[1]{Corollary~\ref{cor:#1}\xspace}

\newcommand{\refapp}[1]{Appendix~\ref{app:#1}\xspace}
\newcommand{\refdef}[1]{Definition~\ref{def:#1}\xspace}
\newcommand{\refprop}[1]{Proposition~\ref{prop:#1}\xspace}

\newcommand{\linspan}{\ensuremath{\mathrm{span}}}
\newcommand{\lswon}[1][F]{\ensuremath{\linspan({#1})\setminus \set{0^n}}\xspace} 
\newcommand{\hmin}[1][]{\ensuremath{H^{#1}_\mathrm{min}\hspace{-1pt}}}
\newcommand{\hmax}{\ensuremath{H_\mathrm{max}\hspace{-1pt}}}

\newcommand{\id}{\ensuremath{\mathbb{I}}\xspace}

\renewcommand{\H}{\mathcal{H}}

\newcommand{\refeq}[1]{(\ref{eq:#1})\xspace}

\theoremstyle{plain} \newtheorem{thm}{Theorem} 
\theoremstyle{plain} \newtheorem{lemma}[thm]{Lemma} 
\theoremstyle{plain} \newtheorem{corollary}[thm]{Corollary} 
\theoremstyle{plain} \newtheorem{prop}[thm]{Proposition} 

\theoremstyle{definition} \newtheorem{definition}[thm]{Definition}

\newtheoremstyle{example}{\topsep}{\topsep}%
     {}
     {}
     {\bfseries}
     {.}
     {\newline}
     {\thmname{#1}\thmnumber{ #2}\thmnote{ {\normalfont(#3)}}}

   \theoremstyle{example}

\definecolor{darkgreen}{rgb}{0,0.6,0}

\newcommand{\user}{\ensuremath{\mathsf{U}}\xspace}
\newcommand{\server}{\ensuremath{\mathsf{S}}\xspace}
\newcommand{\dishuser}{\ensuremath{\mathsf{U}^*}\xspace}
\newcommand{\dishserver}{\ensuremath{\mathsf{S}^*}\xspace}

\begin{document}

\bibliographystyle{alpha}

\maketitle

\begin{abstract} 
Entropic uncertainty relations are quantitative characterizations of Heisenberg's uncertainty principle, which make use of an entropy measure to quantify uncertainty. 
In quantum cryptography, they are often used as convenient tools in security proofs. 

We propose a new entropic uncertainty relation. It is the first such uncertainty relation that lower bounds 
the uncertainty in the measurement outcome for \emph{all but one} choice for the measurement from 
an {\em arbitrarily large} (but specifically chosen) set of possible measurements, and, at the same time, uses the {\em min-entropy} as entropy measure, rather than the Shannon entropy. This makes it especially suited for quantum cryptography.

As application, we propose a new {\em quantum identification scheme} in the bounded-quantum-storage model. Because the scheme requires a perfectly single-qubit source to operate securely, it is currently mainly of theoretical interest. Our new uncertainty relation forms the core of the new scheme's security proof in the bounded-quantum-storage model. In contrast to the original quantum identification scheme proposed by Damg{\aa}rd \etal, our new scheme also offers some security in case the bounded-quantum-storage assumption fails to hold. Specifically, our scheme remains secure against an adversary that has unbounded storage capabilities but is restricted to non-adaptive single-qubit operations. The scheme by Damg{\aa}rd \etal, on the other hand, completely breaks down under such an attack. 
\vspace{1.5em}\\
\footnotesize NJB is supported by an NWO Open Competition grant. CGG
is supported by Spanish Grants I-MATH, MTM2008-01366, QUITEMAD and
QUEVADIS. CS is supported by an NWO VENI grant.
\end{abstract}

\newpage
\tableofcontents*
\newpage

\chapter{Introduction }
\section{A New Uncertainty Relation}
\label{sec:newuncert}
In this work, we propose and prove a new general entropic uncertainty relation. Uncertainty relations are quantitative characterizations of the uncertainty principle of quantum mechanics, which expresses that for certain pairs of measurements, there exists no state for which the measurement outcome is determined for {\em both} measurements: at least one of the outcomes must be somewhat uncertain. {\em Entropic} uncertainty relations express this uncertainty in at least one of the measurement outcomes by means of an entropy measure, usually the Shannon entropy. 
Our new entropic uncertainty relation distinguishes itself from previously known uncertainty relations by the following collection of features:

\begin{enumerate}
\item\label{it:minentropy} It uses the \emph{min-entropy} as entropy measure, rather than the Shannon entropy. Such an uncertainty relation is sometimes also called a \emph{high-order} entropic uncertainty relation.%
\footnote{This is because the min-entropy coincides with the R{\'e}nyi entropy $H_\alpha$ of high(est) order $\alpha = \infty$. In comparison, the Shannon entropy coincides with the R{\'e}nyi entropy of (relatively) low order $\alpha = 1$.} Since privacy amplification needs a lower bound on the min-entropy, high-order entropic uncertainty relations are useful tools in quantum cryptography. 

\item\label{it:allbutone} 
It lower bounds the uncertainty in the measurement outcome for \emph{all but one} measurement, chosen from an \emph{arbitrary} (and arbitrarily large) family of possible measurements. This is clearly \emph{stronger} than typical entropic uncertainty relations that lower bound the uncertainty on \emph{average} (over the choice of the measurement).

\item\label{it:qubitwise} The measurements can be chosen to be qubit-wise measurements, in the computational or Hadamard basis, and thus the uncertainty relation is applicable to practical schemes (which can be implemented using current technology). 
\end{enumerate}

To the best of our knowledge, no previous entropic uncertainty relation satisfies (\ref{it:minentropy}) and (\ref{it:allbutone}) simultaneously, let alone in combination with~(\ref{it:qubitwise}). Indeed, as pointed out in a recent overview article by Wehner and Winter~\cite{WW10}, little is known about entropic uncertainty relations for more than two measurement outcomes, and even less when additionally considering min-entropy. 

To explain our new uncertainty relation, we find it helpful to first discuss a simpler variant, which does not satisfy~(\ref{it:minentropy}), and which follows trivially from known results. 
Fix an arbitrary family $\set{\B_1,\ldots,\B_m}$ of bases for a given quantum system (i.e., Hilbert space). The \emph{maximum overlap} of such a family is defined as 
\[
c := \max\Set{|\braket{\phi}{\psi}|}{\ket{\phi} \in \B_j, \ket{\psi} \in \B_k, 1 \!\leq\! j \!<\! k \!\leq\! m},
\]
and let $d := -\log(c^2)$. 
Let $\rho$ be an arbitrary quantum state of that system, and let $X$ denote the measurement outcome when $\rho$ is measured in one of the bases. We model the choice of the basis by a random variable $J$, so that $H(X|J\!=\!j)$ denotes the Shannon entropy of the measurement outcome when $\rho$ is measured in basis $\B_j$. 
It follows immediately from Maassen and Uffink's uncertainty relation~\cite{MU88} that
\[
H(X|J =j)+H(X|J=k) \geq -\log(c^2) = d \quad \forall j \neq k.
\]
As a direct consequence, there exists a choice $j'$ for the measurement so that $H(X|J\!=\!j) \geq \frac{d}{2}$ for all $j \in \set{1,\ldots,m}$ with $j \neq j'$. In other words, for any state $\rho$ there exists $j'$ so that unless the choice for the measurement coincides with $j'$, which happens with probability at most $\max_j P_J(j)$, there is at least $d/2$ bits of entropy in the outcome~$X$.

Our new high-order entropic uncertainty relation shows that this very statement essentially still holds when we replace Shannon by min-entropy, except that $j'$ becomes randomized: for any $\rho$, there exists a \emph{random variable} $J'$, independent of $J$, such that%
\footnote{The rigorous version of the approximate inequality $\gtrsim$ is stated in \refthm{UR}. } 
$$
\hmin(X|J\!=\!j, J' \!=\! j') \gtrsim \frac{d}{2} \quad \forall \; j \neq j' \in \set{1,\ldots,m} 
$$
no matter what the distribution of $J$ is. 
Thus, unless the measurement $J$ coincides with $J'$, there is roughly $d/2$ bits of min-entropy in the outcome~$X$. Furthermore, since $J'$ is \emph{independent} of $J$, the probability that $J$ coincides with $J'$ is at most $\max_j P_J(j)$, as is the case for a fixed $J'$. 

Note that we have no control over (the distribution of) $J'$. We can merely guarantee that it exists and is independent of $J$. It may be insightful to interpret $J'$ as a \emph{virtual guess} for $J$, guessed by the party that prepares~$\rho$, and whose goal is to have little uncertainty in the measurement outcome $X$. 
The reader may think of the following specific way of preparing $\rho$: sample $j'$ according to some arbitrary distribution $J'$, and then prepare the state as the, say, first basis vector of $\B_{j'}$. If the resulting mixture $\rho$ is then measured in some basis  $\B_j$, sampled according to an arbitrary (independent) distribution~$J$, then unless $j = j'$ (i.e., our guess for $j$ was correct), there is obviously lower bounded uncertainty in the measurement outcome $X$ (assuming a non-trivial maximum overlap). 
Our uncertainty relation can be understood as saying that for \emph{any} state~$\rho$, no matter how it is prepared, there exists such a (virtual) guess $J'$, which exhibits this very behavior: if it differs from the actual choice for the measurement then there is lower bounded uncertainty in the measurement outcome $X$. 
As an immediate consequence, we can for instance say that $X$ has min-entropy at least $d/2$, except with a probability that is given by the probability of guessing $J$, e.g., except with probability $1/m$ if the measurement is chosen uniformly at random from the family. 
This is clearly the best we can hope for. 

We stress that because the min-entropy is more conservative than the Shannon entropy, our high-order entropic uncertainty relation does not follow from its simpler Shannon-entropy version. Neither can it be deduced in an analogous way; the main reason being that for fixed pairs $j \neq k$, there is no strong lower bound on $\hmin(X|J\!=\!j)+\hmin(X|J\!=\!k)$, in contrast to the case of Shannon entropy. More precisely and more generally, the \emph{average} uncertainty $\frac{1}{|J|}\sum_j\hmin(X|J\!=\!j)$ does not allow a lower bound higher than $\log|J|$. To see this, consider the following example for $|J|=2$ (the example can easily be extended to arbitrary $|J|$). Suppose that $\rho$ is the uniform mixture of two pure states, one giving no uncertainty when measured in basis $j$, and the other giving no uncertainty when measured in basis $k$. Then, $\tfrac12 \hmin(X|J\!=\!j) + \tfrac12 \hmin(X|J\!=\!k) = 1$. 
Because of a similar reason, we cannot hope to get a good bound for all but a {\em fixed} choice of $j'$; the probabilistic nature of $J'$ is necessary (in general).  
Hence, compared to bounding the average uncertainty, 
the all-but-one form of our uncertainty relation not only makes our uncertainty relation stronger in that uncertainty for all-but-one implies uncertainty on average (yet not vice versa), but it also allows for {\em more} uncertainty.

By using asymptotically good error-correcting codes, one can construct families of bases that have a large value of $d$, and thus for which our uncertainty relation guarantees a large amount of min-entropy (we discuss this in more detail in \refsec{goodfam}). These families consist of qubit-wise measurements in the computational or the Hadamard basis, hence these measurements can be performed with current technology. 

The proof of our new uncertainty relation comprises a rather involved probability reasoning to prove the existence of the random variable $J'$ and builds on earlier work presented in \cite{Schaffner07}. 

\section{Quantum Identification with ``Hybrid'' Security}
As an application of our entropic uncertainty relation, we propose a new \emph{quantum identification protocol}. Informally, the goal of (password-based) identification is to prove knowledge of a possibly low-entropy password $w$, without giving away any information on $w$ (beyond what is unavoidable). In~\cite{DFSS07}, Damg{\aa}rd \etal\ showed the existence of such an identification protocol%
in the \emph{bounded-quantum-storage model} (BQSM). This means that the proposed protocol involves the communication of qubits, and security is proven against any dishonest participant that can store only a limited number of these qubits (whereas legitimate participants need no quantum storage at all to honestly execute the protocol). 

Our uncertainty relation gives us the right tool to prove security of the new quantum identification protocol in the BQSM. The distinguishing feature of our new protocol is that it also offers some security in case the assumption underlying the BQSM fails to hold. Indeed, we additionally prove security of our new protocol against a dishonest server that has unbounded quantum-storage capabilities and can reliably store all the qubits communicated during an execution of the protocol, but is restricted to non-adaptive single-qubit operations and measurements.\footnote{It is known that \emph{some} restriction is necessary (see \cite{DFSS07}).} 
This is in sharp contrast to protocol \textsf{QID} by Damg{\aa}rd \etal, which completely breaks down against a dishonest server that can store all the communicated qubits in a quantum memory and postpone the measurements until the user announces the correct measurement bases. On the downside, our protocol only offers security in case of a perfectly single-qubit (e.g.\ single-photon) source, because multi-qubit emissions reveal information about $w$. Hence, given the immature state of single-qubit-source technology at the time of this writing, our protocol is currently mainly of theoretical interest. 

We want to stress that proving security of our protocol in this \emph{single-qubit-operations model} (SQOM) is non-trivial. Indeed, as we will see, standard tools like privacy amplification are not applicable. Our proof relies on a certain minimum-distance property of random binary matrices 
and makes use of Diaconis and Shahshahani's XOR inequality (\refthm{diaconis}, see also \cite{Diaconis88}). 

\section{Related Work}

The study of \emph{entropic} uncertainty relations, whose origin dates back to 1957 with the work of Hirschman~\cite{Hirschman57}, has received a lot of attention over the last decade due to their various applications in quantum information theory. We refer the reader to~\cite{WW10} for a recent overview on entropic uncertainty relations. Most of the known entropic uncertainty relations are 
of the form 
$$
\frac{1}{|J|}\sum_j H_\alpha(X|J\!=\!j) \geq h \, ,
$$
where $H_\alpha$ is the R\'enyi entropy.\footnote{The R\'enyi entropy \cite{renyi1961} is defined as $H_\alpha(X) := \frac{1}{1-\alpha}\log\sum_x P_X(x)^\alpha$.  Nevertheless, for most known uncertainty relations $\alpha=1$, i.e.\ the Shannon entropy.} I.e., most uncertainty relations only give a lower bound on the entropy
of the measurement outcome $X$ \emph{on average} over the (random) choice of the measurement. 
As argued in \refsec{newuncert}, 
the bound $h$ on the \emph{min}-entropy can be at most 
$\log|J|$, no matter the range of~$X$. Furthermore, an uncertainty relation of this form only guarantees that there is uncertainty in $X$ for \emph{some} measurement(s), but does not specify precisely for how many, and certainly it does not guarantee uncertainty for \emph{all but one} measurements. 
The same holds for the high-order entropic uncertainty relation from~\cite{dfrss07}, which considers an exponential number of measurement settings and guarantees that except with negligible probability over the (random) choice of the measurement, there is lower-bounded min-entropy in the outcome. On the other hand, the high-order entropic uncertainty relation from~\cite{DFSS05} only considers \emph{two} measurement settings and guarantees lower-bounded min-entropy with probability (close to) $\frac{1}{2}$. 

The uncertainty relation we know of that comes closest to ours is Lemma~2.13 in~\cite{FHS11}. Using our notation, it shows that $X$ is $\epsilon$-close to having roughly $d/2$ bits of min-entropy (i.e., the same bound we get), 
but only for all but an $\epsilon$-fraction of all the $m$ possible choices for the measurement $j$, where $\epsilon$ is about~\smash{$\sqrt{2/m}$}. 

With respect to our application, backing up the security of the identification protocol by Damg{\aa}rd \etal~\cite{DFSS07} against an adversary that can overcome the quantum-memory bound assumed by the BQSM was also the goal of~\cite{DFLSS09}. However, the solution proposed there relies on an unproven computational-hardness assumption, and as such, strictly speaking, can be broken by an adversary in the SQOM, i.e., by storing qubits and measuring them later qubit-wise and performing (possibly infeasible) classical computations. On the other hand, by \emph{assuming} a lower bound on the hardness of the underlying computational problem against quantum machines, the security of the protocol in~\cite{DFLSS09} holds against an adversary with much more quantum computing power than our protocol in the SQOM, which restricts the adversary to single-qubit operations. 

We hope that with future research on this topic, new quantum identification (or other cryptographic) protocols will be developed with security in the same spirit as our protocol, but with a more relaxed restriction on the adversary's quantum computation capabilities, for instance that he can only perform a limited number of quantum computation steps, and in every step he can only act on a limited number of qubits coherently.

\chapter{Preliminaries}
\section{Basic Notation}
Sets as well as families are written using a calligraphic font, e.g.\ $\Tset{A}, \Tset{X}$, and we write $|\Tset{A}|$ etc.\ for the cardinality. We use $\setn$ as a shorthand for $\set{1,\ldots,n}$. 


For an $n$-bit vector vector $v = (v_1,\ldots,v_n)$ in $\set{0,1}^n$, we write $|v|$ for its Hamming weight, and, for any subset $\Tset{I} \subseteq \setn$, we write $v_{\Tset{I}}$ for the restricted vector $(v_i)_{i\in \Tset{I}} \in \set{0,1}^{|\Tset{I}|}$.
For two vectors $v,w \in \bat^n$,
the \emph{Schur product} is defined as the element-wise product $v \schur w := (v_1 w_1, v_2 w_2, \ldots, v_n w_n) \in \set{0,1}^n$, and the \emph{inner product} between $v$ and $w$ is given by $v\cdot w := v_1 w_1 \oplus \cdots \oplus v_n w_n \in \set{0,1}$, where the addition is modulo~$2$. 
We write $\linspan(F)$ for the \emph{row span} of a matrix $F$; the set of vectors obtained by making all possible linear combinations (modulo~$2$) of the rows of $F$, i.e. 
the set $\Set{sF}{\forall s\in \bat^\ell}$, where $s$ should be interpreted as a row vector and $sF$ denotes a vector-matrix product. 

\section{Probability Theory} 
A finite probability space is a non-empty finite set $\Omega$ together with a function $\Pr: \Omega \rightarrow \mathbb{R}$ such that $\Pr(\omega)\geq 0 \quad \forall \omega \in \Omega$ and $\sum_{\omega\in \Omega}\Pr(\omega)=1$. An \emph{event} is a subset of $\Omega$. 
A {\em random variable} is a function $X: \Omega \rightarrow \mathcal{X}$ from a finite probability space 
$(\Omega,\Pr)$ to a finite set $\mathcal{X}$. 
We denote random variables as capital letters, for example $X$, $Y$, $Z$. 
The {\em distribution} of $X$, which we denote as $P_X$, is given by $P_X(x) = \Pr[X\!=\!x] = \Pr[\Set{\omega \in \Omega}{X(\omega)\!=\!x}]$. The joint distribution of two (or more) random variables $X$ and $Y$ is denoted by $P_{XY}$, i.e., $P_{XY}(x,y) = \Pr[X\!=\!x \wedge Y\!=\!y]$. Specifically, we write $U_{\Tset{X}}$ for the uniform probability distribution over \Tset{X}. Usually, we leave the probability space $(\Omega,\Pr)$ implicit, and understand random variables to be defined by their joint distribution, or by some ``experiment'' that uniquely determines their joint distribution. 

Random variables $X$ and $Y$ are {\em independent} if $P_{XY} = P_X P_Y$ (which should be understood as $P_{XY}(x,y) = P_X(x) P_Y(y) \;\forall\, x\in {\mathcal{X}},y\in \mathcal{Y}$). 
The random variables $X$, $Y$ and $Z$ form a (first-order) Markov chain, denoted by $X \leftrightarrow Y \leftrightarrow Z$, if $P_{XZ|Y} = P_{X|Y}P_{Z |Y}$.
The \emph{statistical distance} (also knows as variational distance) between distributions $P_X$ and $P_Y$ is written as $\SD(P_X,P_Y):=\tfrac12\|P_X-P_Y\|_1$. 

The \emph{bias} of a binary random variable $X$ is defined as
$\mathrm{bias}(X) := \big|P_X(0) - P_X(1) \big|.$ This also naturally defines the bias of $X$ conditioned on an event $\mathcal{E}$ as
$\mathrm{bias}(X|\mathcal{E}) := \big|P_{X|\mathcal{E}}(0) - P_{X|\mathcal{E}}(1) \big|$.  The bias thus ranges between $0$ and $1$ and can be understood as a degree of predictability of a bit: if the bias is small then the bit is close to random, and if the bias is large (i.e. approaches $1$) then the bit has essentially no uncertainty. 
For a sum of two independent binary random variables $X_1$ and $X_2$, the
bias of the sum is the product of the individual biases, i.e. $\bias(X_1 \xor X_2) = \bias(X_1) \bias(X_2)$.
\begin{thm}[Diaconis and Shahshahani's Information-Theoretic XOR Lemma]
\label{thm:diaconis}
Let $X$ be a random variable over $\Tset{X}:=\set{0,1}^n$ with distribution $P_{X}$. Then, the following holds,
\[
\SD(P_{X}, U_\mathcal{X}) \leq \frac12 \Big[\sum_{f \in \set{0,1}^n \setminus \set{0^n}} \bias(f\cdot X)^2\Big]^\frac12.
\]
\end{thm}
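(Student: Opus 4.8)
The plan is to express the statistical distance via Fourier analysis over $\mathbb{Z}_2^n$. First I would write down the characters of the group $\mathbb{Z}_2^n$: for each $f \in \{0,1\}^n$ the function $\chi_f(x) := (-1)^{f \cdot x}$ is a character, and these form an orthonormal basis (up to normalization) of the space of complex-valued functions on $\{0,1\}^n$ under the inner product $\langle g, h \rangle = \sum_x g(x)\overline{h(x)}$. The key observation linking the Fourier coefficients to the bias is that for $f \neq 0^n$, the Fourier coefficient $\hat{P}_X(f) := \sum_x P_X(x)(-1)^{f\cdot x} = \mathbb{E}[(-1)^{f\cdot X}] = P_{f\cdot X}(0) - P_{f\cdot X}(1) = \pm\,\bias(f\cdot X)$, so $|\hat P_X(f)| = \bias(f \cdot X)$. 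Meanwhile the uniform distribution $U_\mathcal{X}$ has $\hat U_\mathcal{X}(0^n) = 1$ and $\hat U_\mathcal{X}(f) = 0$ for all $f \neq 0^n$, and $\hat P_X(0^n) = 1$ as well since $P_X$ is a probability distribution. Hence $P_X - U_\mathcal{X}$ has vanishing Fourier coefficient at $f = 0^n$, and its remaining coefficients are exactly $\pm\bias(f\cdot X)$.

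Next I would bound the $\ell_1$ norm by the $\ell_2$ norm and invoke Parseval. Concretely, $\SD(P_X, U_\mathcal{X}) = \tfrac12 \|P_X - U_\mathcal{X}\|_1 \le \tfrac12 \sqrt{2^n}\,\|P_X - U_\mathcal{X}\|_2$ by Cauchy--Schwarz on the $2^n$ points of $\mathcal{X}$. Then Parseval's identity (with the normalization convention chosen so that the $2^n$ factor is absorbed) gives $\|P_X - U_\mathcal{X}\|_2^2 = 2^{-n}\sum_{f}|\hat P_X(f) - \hat U_\mathcal{X}(f)|^2 = 2^{-n}\sum_{f \neq 0^n} \bias(f\cdot X)^2$. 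Substituting this in, the two powers of $2^n$ cancel and we obtain exactly $\SD(P_X, U_\mathcal{X}) \le \tfrac12\big[\sum_{f \neq 0^n}\bias(f\cdot X)^2\big]^{1/2}$, as claimed.

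The main thing to be careful about — rather than a genuine obstacle — is keeping the normalization conventions of the Fourier transform and of Parseval's identity consistent, so that the factors of $2^n$ and $2^{-n}$ line up and cancel correctly; a sign error there would produce a spurious $2^{n/2}$ factor. Everything else is routine: the identification of $\hat P_X(f)$ with $\pm\bias(f\cdot X)$ is a one-line computation from the definition of bias, and the Cauchy--Schwarz step is standard. I would present the argument in the order: (i) define characters and Fourier coefficients; (ii) compute the Fourier coefficients of $P_X - U_\mathcal{X}$ and relate them to biases; (iii) apply Cauchy--Schwarz to pass from $\ell_1$ to $\ell_2$; (iv) apply Parseval and simplify.
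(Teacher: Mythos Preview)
Your argument is correct and is the standard Fourier-analytic proof: identify the nonzero Fourier coefficients of $P_X - U_{\mathcal X}$ with $\pm\bias(f\cdot X)$, pass from $\ell_1$ to $\ell_2$ via Cauchy--Schwarz, and conclude by Parseval. The normalization bookkeeping you flag is indeed the only point requiring care, and you handle it correctly.

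Note, however, that the paper does not actually prove this theorem: it is stated as a known result, attributed to Diaconis and Shahshahani (in representation-theoretic language) and, in this formulation, to Naor. So there is no ``paper's own proof'' to compare against. Your Fourier/Parseval argument is precisely the elementary version of what underlies those references, so nothing is missing.
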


\noindent The original version of \refthm{diaconis} appeared in \cite{Diaconis88}, where it is expressed in the language of representation theory. The version above is due to \cite{naor93}.

\begin{thm}[Hoeffding's Inequality]
Let $X_1, X_2, \ldots, X_n$ be independent binary random variables, each distributed according to the Bernoulli distribution with parameter $\mu$, and let $\bar X := n^{-1} \sum_{i \in \setn} X_i$.
Then for $0 < t < 1-\mu$
\[
\Pr[\bar X - \mu \geq t] \leq \exp(-2nt^2).
\] 
\label{thm:hoeffding}
\end{thm}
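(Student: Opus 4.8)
I would use the standard exponential-moment (Chernoff) method. Fix $\lambda>0$ and set $S := \sum_{i\in\setn}(X_i-\mu)$. Since $x\mapsto e^{\lambda x}$ is increasing, Markov's inequality gives
\[
\Pr[\bar X-\mu\geq t] \;=\; \Pr[e^{\lambda S}\geq e^{\lambda n t}] \;\leq\; e^{-\lambda n t}\,\mathbb{E}[e^{\lambda S}] ,
\]
and by independence of the $X_i$ the right-hand expectation factorizes as $\prod_{i\in\setn}\mathbb{E}[e^{\lambda(X_i-\mu)}]$. Thus everything reduces to bounding a single factor $\mathbb{E}[e^{\lambda(X-\mu)}]$ for one $\mathrm{Bernoulli}(\mu)$ variable $X$.

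The heart of the proof is this single-variable estimate, i.e.\ the Bernoulli case of Hoeffding's lemma: $\mathbb{E}[e^{\lambda(X-\mu)}]\leq e^{\lambda^2/8}$ for every $\lambda\in\mathbb{R}$. I would establish it by writing $\mathbb{E}[e^{\lambda(X-\mu)}]=e^{g(\lambda)}$ with $g(\lambda):=-\lambda\mu+\log\!\big((1-\mu)+\mu e^{\lambda}\big)$, and proving $g(\lambda)\leq\lambda^2/8$ via a second-order Taylor expansion at $0$: one computes $g(0)=0$ and $g'(0)=0$, while $g''(\lambda)=p_\lambda(1-p_\lambda)$ with $p_\lambda:=\mu e^{\lambda}/\big((1-\mu)+\mu e^{\lambda}\big)\in[0,1]$, so that $g''\leq\tfrac14$ everywhere; Taylor's theorem with Lagrange remainder then gives $g(\lambda)=\tfrac12 g''(\xi)\,\lambda^2\leq\lambda^2/8$ for some $\xi$ between $0$ and $\lambda$.

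Substituting back, $\Pr[\bar X-\mu\geq t]\leq\exp(-\lambda n t+n\lambda^2/8)$ for every $\lambda>0$. It remains to optimize the exponent over $\lambda$: the quadratic $\lambda\mapsto-\lambda t+\tfrac18\lambda^2$ is minimized at $\lambda=4t$, which is admissible since $t>0$, and there it equals $-2t^2$; this yields the claimed bound $\exp(-2nt^2)$. The hypothesis $t<1-\mu$ is not actually used in this one-sided estimate (only $t>0$ matters for the optimization); it merely guarantees the statement is non-vacuous. The one genuinely delicate point is the Taylor step in Hoeffding's lemma: one must check that $g''$ is bounded by $1/4$ \emph{globally} rather than just at the origin, which is precisely what the identity $g''=p_\lambda(1-p_\lambda)$ together with $p(1-p)\leq\tfrac14$ on $[0,1]$ provides.
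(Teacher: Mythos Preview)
Your argument is correct: the Chernoff--Markov bound combined with Hoeffding's lemma (proved via $g''(\lambda)=p_\lambda(1-p_\lambda)\leq\tfrac14$ and Taylor's theorem) and optimization at $\lambda=4t$ is exactly the standard proof, and your observation that the hypothesis $t<1-\mu$ is not needed for the bound itself is accurate. The paper does not give its own proof of this theorem; it simply refers the reader to Hoeffding's original article~\cite{hoeffding1963}, so there is nothing to compare against.
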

\noindent For a proof, the reader is referred to \cite{hoeffding1963}. 

\section{Quantum Systems and States}
We assume that the reader is familiar with the basic concepts of quantum information theory; the main purpose of this section is to fix some terminology and notation. 
A quantum system $A$ is associated with a complex Hilbert space, $\H = \mathbb{C}^d$, its {\em state space}. 
By default, we write $\H_A$ for the state space of system $A$, and $\rho_A$ (respectively $\ket{\varphi_A}$ in case of a pure state) for the state of $A$. We write $\sdm(\H)$ for the set of all density matrices on Hilbert space $\H$.

The state space of a {\em bipartite} quantum system $AB$, consisting of two (or more) subsystems, is given by $\H_{AB} = \H_A \otimes \H_B$. If the state of $AB$ is given by $\rho_{AB}$ then the state of subsystem $A$, when treated as a stand-alone system, is given by the {\em partial trace} $\rho_A = \tr_B(\rho_{AB})$, and correspondingly for $B$. 
{\em Measuring} a system $A$ in basis $\set{\ket{i}}_{i \in I}$, where $\set{\ket{i}}_{i \in I}$ is an orthonormal basis of $\H_A$, means applying the measurement described by the projectors $\set{\proj{i}}_{i \in I}$, such that outcome $i \in I$ is observed with probability $p_i = \tr(\proj{i} \rho_A)$ (respectively $p_i = |\braket{i}{\varphi_A}|^2$ in case of a pure state). 
If $A$ is a subsystem of a bipartite system $AB$, then it means applying the measurement described by the projectors $\set{\proj{i} \otimes \I_B}_{i \in I}$, where $\I_B$ is the identity operator on $\H_B$.

A {\em qubit} is a quantum system $A$ with state space~$\H_A = \mathbb{C}^2$. 
The {\em computational basis} $\set{\ket{\co},\ket{\cl}}$ (for a qubit) is given by $\ket{\co} = {1 \choose 0}$ and $\ket{\cl} = {0 \choose 1}$, and the {\em Hadamard basis} by $\set{H\ket{\co},H\ket{\cl}}$, where $H$ denotes the 2-dimensional {\em Hadamard matrix} $H = \frac{1}{\sqrt2} \big(\begin{smallmatrix} 1 & \;\; 1 \\ 1 & -1 \end{smallmatrix}\big)$. 
We also call the computational basis the {\em plus} basis and associate it with the `$+$'-symbol, and we call the Hadamard basis the {\em times} basis and associate it with the `$\times$'-symbol. 
For bit vectors $x = (x_1,\ldots,x_n) \in \set{0,1}^n$ and  $v = (v_1,\ldots,v_n) \in \set{+,\times}^n$ we then write $\ket{x}_v = \ket{x_1}_{v_i} \kron \cdots \kron \ket{x_n}_{v_n}$ where $\ket{x_i}_{+} := \ket{x_i}$  and $\ket{x_i}_{\times} := H\ket{x_i}$.

Subsystem $X$ of a bipartite quantum system $XE$ is called {\em classical}, if the state of $XE$ is given by a density matrix of the form
$$
\rho_{XE} = \sum_{x \in \mathcal X} P_X(x) \proj{x} \otimes \rho_{E}^x \, ,
$$
where $\mathcal X$ is a finite set of cardinality $|{\mathcal X}| = \dim(\H_X)$, $P_X:{\mathcal X} \rightarrow [0,1]$ is a probability distribution, $\set{\ket{x}}_{x \in \mathcal X}$ is some fixed orthonormal basis of $\H_X$, and $\rho_E^x$ is a density matrix on $\H_E$ for every \mbox{$x \in \mathcal X$}. Such a state, called {\em hybrid} or {\em cq-} (for {\em c}lassical-{\em q}uantum) state, can equivalently be understood as consisting of a {\em random variable} $X$ with distribution $P_X$, taking on values in $\mathcal X$, and a system $E$ that is in state $\rho_E^x$ exactly when $X$ takes on the value $x$. This formalism naturally extends to two (or more) classical systems $X$, $Y$ etc. For any event $\E$ (defined by $P_{\E|X}(x) = \Pr[\E|X=x]$ for all $x$), we may write 
\[
\rho_{XE|\E} := \sum_x P_{X|\E} \outs{x} \kron \rho_E^x.
\]
If the state of $XE$ satisfies $\rho_{XE} = \rho_X \otimes \rho_E$, where $\rho_X = \tr_E(\rho_{XE}) = \sum_x P_X(x) \proj{x}$ and $\rho_E = \tr_X(\rho_{XE}) = \sum_x P_X(x) \rho_E^x$, then $X$ is {\em independent} of $E$, and thus no information on $X$ can be obtained from system~$E$. Moreover, if $\rho_{XE} = \frac{1}{|{\mathcal X}|} \I_X \otimes \rho_E$, where $\I_X$ denotes the identity on $\H_X$, then $X$ is {\em random-and-independent} of $E$. 
We also want to be able to express that a random variable $X$ is (close) to being independent of a quantum system $E$ \emph{when given a random variable $Y$}. Formally, this is expressed by saying that $\rho_{XYE}$ equals $\rho_{X \leftrightarrow Y \leftrightarrow E}$, where
\[
\rho_{X \leftrightarrow Y \leftrightarrow E}:=\sum_{x,y} P_{XY}(x,y) \outs{x} \kron \outs{y} \kron \rho_E^y.
\]
This notion, called \emph{conditional independence}, for the quantum setting was introduced in \cite{DFSS07}. 

For a matrix $\rho$, the trace norm is defined as 
$\| \rho \|_1 := \tr \sqrt {\rho \rho^*}$, where $\rho^*$ denotes the Hermitian transpose of $\rho$.  
\begin{definition}
\label{def:tracedist}
The \emph{trace distance} between two density matrices $\rho,\sigma \in \sdm(\H)$ is defined as $\dist(\rho,\sigma) := \tfrac12 \| \rho - \sigma \|_1$.
\end{definition}

If two states $\rho$ and $\sigma$ are $\varepsilon$-close in trace distance, i.e. $\tfrac12 \| \rho - \sigma \|_1 \leq \varepsilon$, we use $\rho \approx_\varepsilon \sigma$ as shorthand. 
In case of classical states, the trace distance coincides with the statistical distance. Moreover, the trace distance between two states cannot increase when applying the same quantum operation (i.e., CPTP map) 
to both states. As a consequence, if $\rho \approx_\varepsilon \sigma$ then the states cannot be distinguished with statistical advantage better than $\varepsilon$. 
\begin{definition}\label{def:distuni}
For a density matrix $\rho_{XE}\in \sdm(\H_X \kron \H_E)$ with classical $X$, the \emph{distance to uniform} of $X$ given $E$ is defined as 
\[
\distuni(X|E) : = \tfrac12 \| \rho_{XE} - \rho_U \kron \rho_E\|_1, 
\]
where $\rho_U:= \frac{1}{\dim(\H_X)}\id_X$.
\end{definition}

\section{Min-Entropy and Privacy Amplification}
We make use of Renner's notion of the {\em conditional min-entropy} $\hmin(\rho_{AB}|B)$ of a  system $A$ conditioned on another system $B$~\cite{Renner05}. 
If the state $\rho_{AB}$ is clear from the context, we may write $\hmin(A|B)$ instead of $\hmin(\rho_{AB}|B)$.
The formal definition is given by $\hmin(\rho_{AB}|B):= \sup_{\sigma_B}\max\Set{h \in \R}{2^{-h} \cdot \id_A \kron \sigma_B - \rho_{AB} \geq 0}$ where the supremum is over all density matrices $\sigma_B$ on $\H_B$. If $\H_B$ is the trivial space $\cnum$, we obtain the unconditional min-entropy of $\rho_A$, denoted as $\hmin(\rho_A)$, which simplifies to $\hmin(\rho_A) = - \log \lambda_{\max}(\rho_A)$, where $\lambda_{\max}(\rho_A)$ is the largest eigenvalue of $\rho_A$.


We will need the following chain rule.

\begin{lemma}
\label{lem:bqsmchain}
For any density matrix $\rho$ on $\H_{XYE}$ with classical $X$ and $Y$ it holds that
\[
\hmin(X|YE) \geq \hmin(X|Y) - \hmax(E).
\]
\end{lemma}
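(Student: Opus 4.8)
The statement to prove is $\hmin(X|YE) \geq \hmin(X|Y) - \hmax(E)$ for a cq...-state $\rho_{XYE}$ with classical $X$ and $Y$. Let me think about what tools are available and what the cleanest route is.

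The key structural fact: since $X$ and $Y$ are both classical, $\rho_{XYE} = \sum_{x,y} P_{XY}(x,y)\, \outs{x}\kron\outs{y}\kron\rho_E^{x,y}$. Conditioning on $Y=y$ gives states $\rho_{XE|Y=y}$, and $\hmin(X|YE)$ should relate to these "per-$y$" min-entropies. Actually the standard approach: $\hmin(X|YE) = \hmin(XY|YE)$ essentially (appending a copy of classical $Y$), and one wants to bound a single min-entropy term by the worst $y$, or use a guessing-probability interpretation.

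So the plan is: first recall the operational characterization of conditional min-entropy as (the negative log of) a guessing probability, $2^{-\hmin(X|YE)} = p_{\mathrm{guess}}(X|YE)$, the maximal probability of correctly guessing $X$ given an optimal measurement on $E$ together with the value of $Y$. Since $Y$ is classical, the optimal strategy decomposes: $p_{\mathrm{guess}}(X|YE) = \sum_y P_Y(y)\, p_{\mathrm{guess}}(X|E)_{\rho_{X E | Y = y}}$. Second, bound each term: $p_{\mathrm{guess}}(X|E)_{\rho_{XE|Y=y}} \leq 2^{-\hmin(X)_{\rho_{X|Y=y}}} \cdot 2^{\hmax(E)_{\rho_{E|Y=y}}}$ — this is the standard "remove a side system of bounded dimension/max-entropy" step, following because measuring $E$ can gain at most $\hmax(E)$ bits of advantage (a measurement on a system with max-entropy $\hmax(E)$ has at most $2^{\hmax(E)}$ effectively distinguishable outcomes). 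Here $\hmax(E)$ should be taken as $\log\mathrm{rank}(\rho_E)$ (or $\log\dim\H_E$), which upper-bounds $\hmax(\rho_{E|Y=y})$ for every $y$ since $\rho_E = \sum_y P_Y(y)\rho_E^{y}$ has rank at least that of each $\rho_E^y$. Third, reassemble: $p_{\mathrm{guess}}(X|YE) \leq 2^{\hmax(E)} \sum_y P_Y(y) 2^{-\hmin(X)_{\rho_{X|Y=y}}} = 2^{\hmax(E)} \sum_y P_Y(y)\max_x P_{X|Y=y}(x) = 2^{\hmax(E)} p_{\mathrm{guess}}(X|Y)$, and taking $-\log$ of both sides gives the claim.

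Alternatively — and this is probably cleaner to write, avoiding the operational detour — one can argue directly from the SDP/semidefinite definition given in the excerpt. Pick $\sigma_Y$ optimal for $\hmin(X|Y)$, so $2^{-\hmin(X|Y)}\id_X\kron\sigma_Y \geq \rho_{XY}$. Pick $\sigma_E$ optimal for $\hmax(E)$; one has a bound of the form $\id_E \leq 2^{\hmax(E)}\sigma_E$ (this is exactly the kind of relation one extracts from the definition of $\hmax$ as the relevant Rényi-$1/2$ quantity, or simply $\id_E \leq \mathrm{rank}(\rho_E)\,\rho_E/\lambda_{\min}$... — care is needed here). Then tensor: from $\rho_{XY}\leq 2^{-\hmin(X|Y)}\id_X\kron\sigma_Y$ one wants to "lift" to $\rho_{XYE}$. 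The obstacle is that $\rho_{XYE}$ is not simply $\rho_{XY}\kron(\text{something})$. Using classicality of $X,Y$, write $\rho_{XYE} \leq \sum_{x,y}P_{XY}(x,y)\outs{x}\kron\outs{y}\kron 2^{\hmax(E)}\sigma_E \leq \ldots$ — this needs the per-block bound $\rho_E^{x,y}\leq 2^{\hmax(E)}\sigma_E$, which holds if $\sigma_E$ is chosen as a full-rank state dominating all the conditional states; the cleanest choice makes $2^{\hmax(E)}$ the log of the dimension. Combining, $\rho_{XYE} \leq 2^{-\hmin(X|Y)+\hmax(E)}\,\id_X\kron\sigma_Y\kron\sigma_E$, and $\sigma_Y\kron\sigma_E$ is a valid candidate $\sigma_{YE}$ in the definition of $\hmin(X|YE)$, yielding the bound.

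The main obstacle, in either route, is pinning down the precise definition of $\hmax$ being used and the exact inequality it supplies (dimension bound $\log\dim\H_E$, rank bound $\log\mathrm{rank}(\rho_E)$, or the smooth/Rényi-$\tfrac12$ version) so that the "transfer a side system" step is airtight; the combinatorics of the classical indices $x,y$ and the tensor-product manipulation are then routine. I would handle this by fixing $\hmax(E) := \log\mathrm{rank}(\rho_E)$ (consistent with how it is used elsewhere in the paper), noting $\id_{\mathrm{supp}(\rho_E)} \leq \mathrm{rank}(\rho_E)\,\sigma_E$ for an appropriate flat $\sigma_E$ on the support, and pushing everything through the semidefinite definition.
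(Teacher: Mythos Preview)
Your first route---via the guessing-probability decomposition $\gs(X|YE)=\sum_y P_Y(y)\,\gs(X|E,Y=y)$, then bounding each per-$y$ term by $2^{\hmax(E)}\cdot 2^{-\hmin(X|Y=y)}$, then reassembling---is exactly the paper's proof. The paper makes your ``remove a side system'' step precise in two sub-steps: first Renner's chain rule $\hmin(X|E)_{\rho|Y=y}\geq\hmin(XE)_{\rho|Y=y}-\hmax(E)$, then an eigenvalue observation that for classical $X$ the block-diagonal structure of $\rho_{XE|Y=y}$ forces $\lambda_{\max}(\rho_{XE|Y=y})\leq\lambda_{\max}(\rho_{X|Y=y})$, i.e.\ $\hmin(XE|Y=y)\geq\hmin(X|Y=y)$. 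This is precisely the ``pinning down'' you flagged as the main obstacle, and your instinct that $\hmax(E)=\log\mathrm{rank}(\rho_E)$ is the right reading is consistent with how the chain rule is applied.

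Your second route---working directly from the semidefinite definition by choosing a flat $\sigma_E$ on $\mathrm{supp}(\rho_E)$ so that $\rho_E^{x,y}\leq 2^{\hmax(E)}\sigma_E$ and tensoring with an optimal $\sigma_Y$---is a genuinely different and arguably cleaner argument: it avoids the operational detour and the separate eigenvalue lemma, at the cost of needing the per-block domination $\rho_E^{x,y}\leq 2^{\hmax(E)}\sigma_E$, which holds exactly because each $\rho_E^{x,y}$ has support inside $\mathrm{supp}(\rho_E)$ and trace one. Both routes are correct; the paper's buys modularity (it reuses Renner's chain rule as a black box), yours buys directness.
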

\noindent The proof can be found in \refapp{bqsmchainproof}.

For the special case of a hybrid state $\rho_{XE} \in \sdm(\H_X \kron \H_E)$ with classical $X$, it is shown in \cite{koenig09} that the conditional min-entropy of a quantum state coincides with the negative logarithm of the \emph{guessing probability conditional on quantum side information}
\[
\gs(X|E):= \max_{\set{M_x}}\sum_x P_X(x)\, \trace(M_x \rho_E^x),
\]
where the latter is the probability that the party holding $\H_E$ guesses $X$ correctly using the POVM $\set{M_x}_x$ on $\H_E$ that maximizes $\gs$.
Thus,
\begin{equation}
\label{eq:guessform}
\hmin(X|E) = -\log \gs(X|E).
\end{equation}
\noindent For random variables $X$ and $Y$, we have that $\gs(X|Y)$ simplifies to
\[
\gs(X|Y)=\sum_y P_Y(y) \gs(X|Y=y) = \sum_y P_Y(y)\max_x P_{X|Y}(x|y).
\] 


Finally, we make use of Renner's privacy amplification theorem~\cite{RK05,Renner05}, as given below. 
Recall that a function $g:\mathcal{R} \times \mathcal{X} \rightarrow \set{0,1}^\ell$ is called a {\em universal} (hash) function, if for the random variable $R$, uniformly distributed over $\mathcal{R}$, and for any distinct $x,y \in \mathcal{X}$: $\Pr[g(R,x)\!=\!g(R,y)] \leq 2^{-\ell}$.

\begin{thm}[Privacy amplification]\label{thm:PA}
Let $\rho_{XE}$ be a hybrid state with classical $X$. Let $g:\mathcal{R} \times \mathcal{X} \to \set{0,1}^\ell$ be a universal hash function, and let $R$ be uniformly distributed over $\mathcal{R}$, independent of $X$ and~$E$. Then $K = g(R,X)$ satisfies
$$
\distuni(K|RE) \leq \frac12 \cdot 2^{-\frac12(\hmin(X|E) - \ell)} \, . 
$$
\end{thm}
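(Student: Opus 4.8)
The plan is to prove this via the quantum leftover hash lemma argument (as in~\cite{Renner05}): reduce the trace distance to a reference-state-weighted Hilbert--Schmidt norm, expand it, and exploit the two-universality of $g$ together with the operator characterization of $\hmin$. Concretely, I would unfold $\distuni(K|RE) = \tfrac12 \| \rho_{KRE} - \rho_U \kron \rho_R \kron \rho_E\|_1$, note that $\rho_{KRE}$ is block diagonal in $R$ with blocks $\rho_{KE}^r = \sum_x P_X(x)\,\outs{g(r,x)} \kron \rho_E^x$, and fix a state $\sigma_E$ realizing (up to an arbitrarily small slack, which I send to $0$ at the end since the supremum in the definition of $\hmin$ need not be attained) the operator inequality $\rho_{XE} \le 2^{-\hmin(X|E)}\, \id_X \kron \sigma_E$, equivalently $P_X(x)\,\rho_E^x \le 2^{-\hmin(X|E)}\, \sigma_E$ for all $x$.

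The second step is a norm conversion. For Hermitian $M$ on $\H_A \kron \H_E$ with support inside $\id_A \kron \mathrm{supp}(\sigma_E)$,
\[
\|M\|_1 = \tr(UM) \le \sqrt{\dim \H_A}\cdot \bigl\| (\id_A \kron \sigma_E^{-1/4})\, M\, (\id_A \kron \sigma_E^{-1/4})\bigr\|_2,
\]
where $U$ is the Hermitian unitary sign of $M$; this follows by inserting $\sigma_E^{1/4}\sigma_E^{-1/4}$ on both sides of $M$ and applying Cauchy--Schwarz for the Hilbert--Schmidt inner product, using $\| (\id_A \kron \sigma_E^{1/4})\, U\, (\id_A \kron \sigma_E^{1/4})\|_2^2 \le \dim \H_A \cdot \tr(\sigma_E) = \dim \H_A$. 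I apply this with $A = KR$ and $M = \rho_{KRE} - \rho_U \kron \rho_R \kron \rho_E$. Because $R$ is uniform over $\mathcal{R}$, the block structure turns the weighted $2$-norm into $|\mathcal R|^{-1}\,\mathbb{E}_R\,\|(\id_K \kron \sigma_E^{-1/4})(\rho_{KE}^R - \rho_U \kron \rho_E)(\id_K \kron \sigma_E^{-1/4})\|_2^2$, so the naive dimension factor $\dim\H_K \cdot |\mathcal R| = 2^\ell |\mathcal R|$ loses its $|\mathcal R|$; after Jensen's inequality this leaves $\distuni(K|RE) \le \tfrac12 \sqrt{2^\ell}\cdot \bigl( \mathbb{E}_R\, \|(\id_K \kron \sigma_E^{-1/4})(\rho_{KE}^R - \rho_U \kron \rho_E)(\id_K \kron \sigma_E^{-1/4})\|_2^2\bigr)^{1/2}$.

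It then remains to show that the expectation above is at most $2^{-\hmin(X|E)}$. Expanding this squared norm and writing $A_x := \sigma_E^{-1/4}\, P_X(x)\rho_E^x\, \sigma_E^{-1/4} \ge 0$, the diagonal part ($x = x'$, where $g(R,x)=g(R,x')$ with certainty) equals $\sum_x \tr(A_x^2)$, whereas the off-diagonal parts ($x \ne x'$) carry a factor $\Pr[g(R,x)=g(R,x')] \le 2^{-\ell}$ by two-universality; a short computation then shows that, once the ideal product state $\rho_U \kron \rho_E$ is subtracted, the off-diagonal terms contribute non-positively, so the expectation is at most $\sum_x \tr(A_x^2)$. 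Finally, conjugating $P_X(x)\rho_E^x \le 2^{-\hmin(X|E)}\sigma_E$ by $\sigma_E^{-1/4}$ gives the \emph{sharp} operator bound $A_x \le 2^{-\hmin(X|E)}\, \sigma_E^{1/2}$, so $\tr(A_x^2) \le 2^{-\hmin(X|E)}\,\tr(A_x \sigma_E^{1/2}) = 2^{-\hmin(X|E)}\, P_X(x)$, and summing over $x$ gives exactly $2^{-\hmin(X|E)}$; collecting constants yields the claimed bound.

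The constant-chasing in this last step is what I expect to be the main obstacle. To land the factor $\tfrac12$ exactly, one must use the \emph{same} near-optimal $\sigma_E$ both as the norm-conversion weight and in the min-entropy inequality, and invoke the sharp operator bound $A_x \le 2^{-\hmin(X|E)}\sigma_E^{1/2}$ rather than the crude $\|A_x\|_\infty \le 2^{-\hmin(X|E)}$ (which would cost a spurious factor $\tr(\sigma_E^{-1/2}\rho_E) \ge 1$), and one must carefully check that the off-diagonal collision terms are genuinely absorbed by subtracting the ideal state. By contrast, the norm-conversion inequality, the two-universality estimate, and the support/limit technicalities around $\sigma_E$ are comparatively routine.
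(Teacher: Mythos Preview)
The paper does not actually prove Theorem~\ref{thm:PA}: it is stated as a known result and attributed to Renner~\cite{RK05,Renner05}. So there is no ``paper's own proof'' to compare against; what you have written is precisely the standard quantum leftover-hash-lemma argument from Renner's thesis, and it is correct. In particular, your handling of the constants is right: the cyclicity-of-trace step $\tr(A_x\,\sigma_E^{1/2}) = P_X(x)\,\tr(\rho_E^x\,\sigma_E^{-1/4}\sigma_E^{1/2}\sigma_E^{-1/4}) = P_X(x)$ is exactly what makes the sharp factor $\tfrac12$ come out, and your observation that the off-diagonal collision terms, bounded by two-universality, are fully absorbed by the subtracted ideal state (yielding the familiar $(1-2^{-\ell})\sum_x \tr(A_x^2) \le \sum_x \tr(A_x^2)$) is the standard computation. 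One minor simplification: rather than applying the norm-conversion with $A = KR$ and then cancelling the $|\mathcal R|$, it is slightly cleaner to first use that $\rho_{KRE}-\rho_U\kron\rho_{RE}$ is block-diagonal in $R$ to write $\distuni(K|RE) = \tfrac12\,\mathbb{E}_R\|\rho_{KE}^R - \rho_U\kron\rho_E\|_1$, apply the norm-conversion with $A = K$ only, and then use Jensen; this avoids the bookkeeping of the $|\mathcal R|$ factor altogether.
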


\noindent
Informally, Theorem~\ref{thm:PA} states that if $X$ contains sufficiently more than $\ell$ bits of entropy when given $E$, then $\ell$ nearly random-and-independent bits can be extracted from $X$.

\chapter{The All-But-One Entropic Uncertainty Relation} \label{sec:moreunbiasedbases}

Throughout this section, $\set{\B_1,\ldots,\B_m}$ is an arbitrary but fixed family of bases for the state space $\H$ of a quantum system. For simplicity, we restrict our attention to an $n$-qubit system, such that $\H = (\mathbb{C}^2)^{\otimes n}$ for $n \in \N$, but our results immediately generalize to arbitrary quantum systems. 
We write the $2^n$ basis vectors of the $j$-th basis $\B_j$ as $\B_j = \Set{\ket{x}_j}{x \in \set{0,1}^n}$. 
Let $c$ be the maximum overlap of $\set{\B_1,\ldots,\B_m}$, i.e., 
\[
c:= \max\Set{|\bra{x}_j \ket{y}_k|}{x,y \in \set{0,1}^n, 1 \!\leq\! j \!<\! k \!\leq\! m}.
\]

%


In order to obtain our entropic uncertainty relation that lower bounds the min-entropy of the measurement outcome for all but one measurement, we first show an uncertainty relation that expresses uncertainty by means of the probability measure of given sets. 
\begin{thm}[Theorem 4.18 in \cite{Schaffner07}] \label{thm:morehadamard}
  Let $\rho$ be an arbitrary state of $n$ qubits. For 
$j \in \setn[m]$, let $Q^j(\cdot)$ be the distribution of the outcome when
  $\rho$ is measured in the $\mathcal{B}_j$-basis, i.e., $Q^j(x) = \bra{x}_{j} \: \rho \: \ket{x}_{j}$ for any $x \in \set{0,1}^n$.
  Then, for any family $\set{\L^j}_{j \in \setn[m]}$ of subsets $\L^j \subset \set{0,1}^n$, it holds that
\[ \sum_{j \in \setn[m]} Q^j(\L^j) \leq 1 + c \, (m-1) \cdot  \max_{j \neq k \in \setn[m]} \sqrt{|\L^j|
   |\L^k|}. \]
\end{thm}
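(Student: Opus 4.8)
The plan is to bound the sum $\sum_j Q^j(\L^j)$ by rephrasing each term via the measurement projectors and then invoking a norm inequality together with the maximum-overlap bound. First I would write $Q^j(\L^j) = \tr(\Pi^j \rho)$ where $\Pi^j := \sum_{x \in \L^j} \proj{x}_j$ is the projector onto the span of those basis vectors of $\B_j$ indexed by $\L^j$. Then $\sum_{j} Q^j(\L^j) = \tr\bigl(\bigl(\sum_j \Pi^j\bigr)\rho\bigr) \le \lambda_{\max}\bigl(\sum_j \Pi^j\bigr)$, since $\rho$ is a density matrix and the quantity $\tr(M\rho)$ is maximized over states by the top eigenvalue of the Hermitian operator $M := \sum_j \Pi^j$. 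So the whole problem reduces to upper bounding $\lambda_{\max}(M)$, i.e.\ the operator norm $\|M\|$.

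Next I would estimate $\|M\|$ by splitting $M = \sum_j \Pi^j$ into its ``diagonal'' and ``off-diagonal'' contributions in the spirit of a Gershgorin / triangle-inequality argument. Write $M = \sum_j \Pi^j$; one natural route is to bound $\|M\| \le \max_j \bigl\| \Pi^j + \sum_{k \neq j} \Pi^k \Pi^j \cdots \bigr\|$ — more cleanly, use that for any unit vector $\ket{\psi}$, $\bra{\psi} M \ket{\psi} = \sum_j \|\Pi^j \ket{\psi}\|^2$, and expand each $\Pi^j\ket\psi$ in the eigenbasis. The key technical input is that for $j \neq k$, the cross term $\|\Pi^j \Pi^k\|$ is small: since $\Pi^j$ projects onto $|\L^j|$ vectors of $\B_j$ and $\Pi^k$ onto $|\L^k|$ vectors of $\B_k$, and any two such basis vectors have overlap at most $c$, one gets $\|\Pi^j \Pi^k\| \le c\sqrt{|\L^j||\L^k|}$ (this is essentially a rank-times-entry-size bound on the matrix of inner products, e.g.\ via $\|A\| \le \sqrt{\|A\|_1 \|A\|_\infty}$ or a direct Cauchy--Schwarz on $\sum_{x \in \L^j, y \in \L^k} |\bra{x}_j\ket{y}_k|^2$). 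Feeding this into the expansion, the diagonal terms contribute $\sum_j \|\Pi^j\ket\psi\|^2$, which after regrouping yields the $1$ plus the sum over the $m(m-1)$ ordered pairs of cross terms, each at most $c\sqrt{|\L^j||\L^k|} \le c \max_{j\neq k}\sqrt{|\L^j||\L^k|}$, giving the claimed $1 + c(m-1)\max_{j\neq k}\sqrt{|\L^j||\L^k|}$.

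The main obstacle I anticipate is getting the combinatorics of the cross-term bookkeeping exactly right so that the final constant is $c(m-1)$ and not something like $c\,m$ or $c\binom{m}{2}$ — in particular, isolating one ``privileged'' index whose self-overlap accounts for the leading $1$, and showing the remaining $m-1$ indices each contribute a single $\max_{j\neq k}\sqrt{|\L^j||\L^k|}$ factor rather than a full row of them. This likely requires a careful choice of how to apply the triangle inequality (treating $M - \lambda \id$ for a suitable reference and bounding row sums), or equivalently a Schur-test-style argument where the ``test vector'' is chosen to exploit that one basis can be singled out. The overlap-to-operator-norm step, while the conceptual heart, should be routine once set up, since it is a standard estimate; but I would double-check the square-root factor, as a naive bound would give $c\,|\L|$ instead of $c\sqrt{|\L^j||\L^k|}$ and lose the theorem.
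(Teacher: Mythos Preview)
Your overall architecture matches the paper's: define $\Pi^j := \sum_{x\in\L^j}\proj{x}_j$, reduce $\sum_j Q^j(\L^j) = \tr\bigl((\sum_j\Pi^j)\rho\bigr)$ to bounding $\bigl\|\sum_j\Pi^j\bigr\|$, and separately establish $\|\Pi^j\Pi^k\|\le c\sqrt{|\L^j|\,|\L^k|}$ for $j\neq k$. Both endpoints are exactly as in the paper, and your sketch of the overlap bound (Cauchy--Schwarz on the matrix of inner products) is the right idea.

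The gap is the middle step, the norm inequality $\bigl\|\sum_j\Pi^j\bigr\|\le 1+(m-1)\max_{j\neq k}\|\Pi^j\Pi^k\|$, which you yourself flag as the obstacle. The expansion $\bra\psi M\ket\psi=\sum_j\|\Pi^j\ket\psi\|^2$ is correct but does not ``regroup'' into $1$ plus cross terms: there is no algebraic identity that converts a sum of squared norms into $1$ plus bilinear expressions in the $\Pi^j\Pi^k$. Gershgorin and the Schur test are entry-wise bounds and do not apply directly to a sum of projectors; even a block version would first require you to exhibit a block matrix whose norm equals $\bigl\|\sum_j\Pi^j\bigr\|$ and whose blocks are the $\Pi^j\Pi^k$ --- and producing that matrix is precisely the missing idea.

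The paper's device (following Kittaneh) is this: form the $m\times m$ block matrices $X$ with first block-row $(\Pi^1,\dots,\Pi^m)$ and zeros elsewhere, and $Y=X^*$. Then $XY$ has $\sum_j\Pi^j$ in its top-left block and zeros elsewhere, while $YX$ is the block matrix $(\Pi^j\Pi^k)_{j,k}$. Since $XY$ and $YX$ share eigenvalues and both are Hermitian, $\bigl\|\sum_j\Pi^j\bigr\|=\|XY\|=\|YX\|$. Now decompose $YX$ as the block-diagonal part $\mathrm{diag}(\Pi^1,\dots,\Pi^m)$, of norm $1$, plus $m-1$ cyclically column-shifted pieces, each unitarily equivalent to a block-diagonal matrix with blocks of the form $\Pi^j\Pi^{\sigma(j)}$ and hence of norm at most $\max_{j\neq k}\|\Pi^j\Pi^k\|$. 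The triangle inequality then gives exactly $1+(m-1)\max_{j\neq k}\|\Pi^j\Pi^k\|$. This eigenvalue trick $\|XY\|=\|YX\|$ is what manufactures the sharp $(m-1)$ you were worried about; the ``single out one privileged index'' heuristic you propose does not get there without this block-matrix detour.
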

%
A special case of Theorem~\ref{thm:morehadamard}, obtained by restricting the 
family of bases to 
%
the specific choice $\set{\B_+,\B_\times}$ with $\B_+ = \Set{\ket{x}}{x \in \set{0,1}^n}$ and $\B_\times = \Set{H^{\otimes n}\ket{x}}{x \in \set{0,1}^n}$
(i.e. either the computational or Hadamard basis for all qubits), is an uncertainty relation that was proven and used in the original paper about the BQSM~\cite{DFSS05}. 
The proof of Theorem~\ref{thm:morehadamard} goes along similar lines as the proof in the journal version of~\cite{DFSS05} for the special case outlined above. 
It is based on the norm inequality 
$$
\big\| A_1+\ldots+A_m \big\| \leq 1 + (m-1) \cdot \max_{j \neq k \in \setn[m]} 
\big\|A_j A_k\big\| \, ,
$$
which holds for arbitrary orthogonal projectors $A_1,\ldots,A_m$. 
Recall that for a linear operator $A$ on the complex Hilbert space ${\mathcal{H}}$, the {\em operator norm} is defined as $\|A \| \assign \sup \|A \ket{\psi}\|$, 
where the supremum is over all norm-$1$ $\ket{\psi} \in \H$; this is identical to $\| A \| \assign \sup |\bra{\varphi}A\ket{\psi}|$, where the supremum is over all norm-$1$ $\ket{\varphi},\ket{\psi} \in \H$. 
Furthermore, $A$ is called an \emph{orthogonal projector} if $A^2 = A$ and $A^* = A$. The proof of this norm inequality can be found in Appendix~\ref{sec:proofinequality}. The proof of Theorem~\ref{thm:morehadamard} is given here. 
\begin{proof}[Proof of Theorem~\ref{thm:morehadamard}]
For $j \in \setn[m]$, we define the orthogonal projectors $A^j \assign \sum_{x \in \L^j} \ket{x}_{j} \bra{x}_{j}$. 
Using the spectral decomposition of $\rho = \sum_w \lambda_w
\proj{\varphi_w}$ and the linearity of the trace, we have
\begin{align*}
\sum_{j \in \setn[m]} Q^j(\L^j) &= \sum_{j \in \setn[m]} \tr(A^j\rho) 
 = \sum_{j \in \setn[m]} \sum_w
\lambda_w \tr(A^j \proj{\varphi_w})
= \sum_w \lambda_w \bigg( \sum_{j \in \setn[m]} \bra{\varphi_w}A^j\ket{\varphi_w}  \bigg)\\
&= \sum_w \lambda_w \bra{\varphi_w} \bigg( \sum_{j \in \setn[m]} A^j \bigg) \ket{\varphi_w}
\leq \bigg\| \sum_{j \in \setn[m]} A^j \bigg\| \leq 1 + (m-1) \cdot \max_{j \neq k \in\setn[m]} \big\|A^j A^k\big\|,
\end{align*}
where the last inequality is the norm inequality (Proposition~\ref{prop:morebases} in Appendix~\ref{sec:proofinequality}).
To conclude, we show that $\|A^j A^k\| \leq c  \sqrt{|\L^j|
  |\L^k|}$. Let us fix $j \neq k \in \setn[m]$. Note that by the restriction on the overlap of the family of bases $\set{\mathcal{B}_j}_{j \in \setn[m]}$, we have that
%
$| \bra{x}_{j} \ket{y}_k | \leq c$ holds for all $x,y \in \set{0,1}^n$. 
Then, with the sums over $x$ and $y$ understood as over $x \in \L^j$ and $y \in \L^k$, respectively, 
\begin{align*}
\Big\| A^j A^k \ket{\psi} \Big\|^2 &= \bigg\| \sum_x \ket{x}_{j} \bra{x}_{j} \sum_y \ket{y}_k \bra{y}_k \ket{\psi} \bigg\|^2 
= \bigg\| \sum_x \ket{x}_{j} \sum_y \bra{x}_{j} \ket{y}_k \, \bra{y}_k \ket{\psi} \bigg\|^2  \\
&= \sum_x \bigg| \sum_y \bra{x}_{j} \ket{y}_k \, \bra{y}_k \ket{\psi} \bigg|^2 
\leq \sum_x \bigg(\sum_y \big|\bra{x}_{j} \ket{y}_k \, \bra{y}_k \ket{\psi} \big| \bigg)^2 \\
&\leq c^2 \sum_x \bigg(\sum_y \big|\bra{y}_k \ket{\psi}\big|\bigg)^2 \leq c^2  \big|\L^j\big| \big|\L^k\big| .
\end{align*}
The third equality follows from Pythagoras, the first inequality holds by triangle inequality, the second inequality by the bound on $| \bra{x}_{j} \ket{y}_k|$, and the last follows from Cauchy-Schwarz. 
This implies $\|A^j A^k\| \leq c \sqrt{|\L^j| |\L^k|}$ and finishes the proof. 
\end{proof}

In the same spirit as in (the journal version of)~\cite{DFSS05}, we reformulate above uncertainty relation in terms of a ``good event'' $\mathcal{E}$, which occurs with reasonable probability, and if it occurs, the measurement outcomes have high min-entropy. The statement is obtained by choosing the sets $\L^j$ in Theorem~\ref{thm:morehadamard} appropriately. 

Because we now switch to entropy notation, it will be 
convenient to work with a measure of overlap between bases that is
logarithmic in nature and \emph{relative} to the number $n$ of qubits. Hence, we define
\[
\delta := - \frac{1}{n}\log c^2 \, . 
\]
We will later see that for ``good'' choices of bases, $\delta$ stays constant for growing $n$.

\begin{corollary} \label{cor:morehadamard}
Let $\rho$ be an arbitrary $n$-qubit state, let $J$ be a random variable over $\setn[m]$ (with arbitrary distribution~$P_J$), and let $X$ be the outcome when measuring $\rho$ in basis $\mathcal{B}_J$.%
\footnote{I.e., $P_{X\mid J}(x|j) = Q^j(x)$, using the notation from Theorem~\ref{thm:morehadamard}.} 
Then, for any $0< \epsilon< \delta/4 $, there exists 
an event $\mathcal{E}$ such that
$$
\sum_{j \in \setn[m]} \Pr[{\mathcal{E}} | J\!=\!j ] \geq (m-1) - (2m-1) \cdot 2^{-\epsilon n}  
$$
and 
$$
\hmin(X | J\!=\!j,{\mathcal{E}}) \geq \Bigl(\frac{\delta}{2} - 2 \epsilon\Bigr) n  
$$
for $j \in \setn[m]$ with $P_{J\mid {\mathcal{E}}}(j) > 0$. 
\end{corollary}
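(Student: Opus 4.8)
The plan is to derive Corollary~\ref{cor:morehadamard} from Theorem~\ref{thm:morehadamard} by a suitable choice of the sets $\L^j$, namely by taking $\L^j$ to be the set of ``likely'' measurement outcomes for basis $\B_j$, and then packaging the complement of a union-type bad event as the good event $\mathcal{E}$. Concretely, fix $0 < \epsilon < \delta/4$ and set the threshold $\tau := 2^{-(\delta/2 - 2\epsilon)n}$. For each $j$, define $\L^j := \Set{x \in \set{0,1}^n}{Q^j(x) > \tau}$; the point of this choice is that, conditioned on the outcome landing in the complement of $\L^j$ (which will be essentially the event $\mathcal{E}$), every outcome has probability at most $\tau$, giving min-entropy at least $(\delta/2 - 2\epsilon)n$ as required. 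Since $\sum_x Q^j(x) = 1$, each $\L^j$ has size at most $1/\tau = 2^{(\delta/2 - 2\epsilon)n}$.

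Next I would feed these sets into Theorem~\ref{thm:morehadamard}. Using $c = 2^{-\delta n/2}$, the bound $|\L^j|,|\L^k| \le 2^{(\delta/2-2\epsilon)n}$, and $c\sqrt{|\L^j||\L^k|} \le 2^{-\delta n/2}\cdot 2^{(\delta/2-2\epsilon)n} = 2^{-2\epsilon n}$, Theorem~\ref{thm:morehadamard} yields
\[
\sum_{j \in \setn[m]} Q^j(\L^j) \le 1 + (m-1)\,2^{-2\epsilon n}.
\]
Now define the event $\mathcal{E}$ to be ``$X \notin \L^J$'', i.e.\ $\Pr[\mathcal{E}\mid J\!=\!j] = Q^j(\set{0,1}^n \setminus \L^j) = 1 - Q^j(\L^j)$. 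Summing over $j$ gives
\[
\sum_{j\in\setn[m]}\Pr[\mathcal{E}\mid J\!=\!j] = m - \sum_j Q^j(\L^j) \ge m - 1 - (m-1)\,2^{-2\epsilon n},
\]
which is already slightly stronger than the claimed $(m-1) - (2m-1)2^{-\epsilon n}$ (using $2^{-2\epsilon n} \le 2^{-\epsilon n}$ and $(m-1) \le (2m-1)$), so the first inequality of the corollary follows.

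For the min-entropy statement, I would argue that for any $j$ with $P_{J\mid\mathcal{E}}(j) > 0$, conditioning on $\mathcal{E}$ and $J\!=\!j$ restricts $X$ to $\set{0,1}^n\setminus\L^j$, where the conditional distribution is $P_{X\mid J=j,\mathcal{E}}(x) = Q^j(x)/Q^j(\set{0,1}^n\setminus\L^j)$. Since $Q^j(x) \le \tau$ for $x\notin\L^j$ and the normalizing denominator is $\Pr[\mathcal{E}\mid J\!=\!j]$, I need a lower bound on this denominator to control the conditional max-probability; this is the one place requiring a little care. The cleanest route is: either invoke that $\hmin$ conditioned on an event can only be analyzed when the event has positive probability, and observe $\max_x P_{X\mid J=j,\mathcal{E}}(x) \le \tau/\Pr[\mathcal{E}\mid J\!=\!j]$, then show $\Pr[\mathcal{E}\mid J\!=\!j] \ge 2^{-\epsilon n}$ is \emph{not} automatic — so instead one should fold a factor into $\mathcal{E}$ or accept the weaker $j$-wise bound. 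The standard fix (as in the journal version of \cite{DFSS05}) is to redefine the threshold slightly, taking $\L^j := \Set{x}{Q^j(x) > 2^{-\epsilon n}\cdot\tau'}$ with $\tau' = 2^{-(\delta/2-\epsilon)n}$ so that on the complement $Q^j(x) \le 2^{-(\delta/2-\epsilon)n}$, and then $\max_x P_{X\mid J=j,\mathcal{E}}(x) \le 2^{-(\delta/2-\epsilon)n}/\Pr[\mathcal{E}\mid J\!=\!j]$; combined with a worst-case lower bound $\Pr[\mathcal{E}\mid J\!=\!j]\ge 2^{-\epsilon n}$ that must be built into the event's definition (by simply declaring $\mathcal{E}$ to fail whenever that conditional probability would be too small, which only costs another $2^{-\epsilon n}$ term in the sum), one gets $\hmin(X\mid J\!=\!j,\mathcal{E}) \ge (\delta/2 - 2\epsilon)n$. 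The main obstacle is thus entirely bookkeeping: arranging the event $\mathcal{E}$ so that simultaneously (a) its conditional probabilities sum to nearly $m-1$, and (b) each individual conditional probability is large enough that the renormalization in the min-entropy bound costs at most an extra $\epsilon n$ bits. I would handle this by the threshold-shifting trick above, tracking the two $2^{-\epsilon n}$ losses (one from the norm inequality, one from the renormalization) to arrive at the stated constants.
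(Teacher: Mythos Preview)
Your proposal is correct and arrives at essentially the paper's proof: the paper takes the threshold $2^{-(\delta/2-\epsilon)n}$ for $\mathcal S^j$ (your ``standard fix'') and defines $\mathcal E := \{X \in \mathcal S^J \wedge Q^J(\mathcal S^J) \ge 2^{-\epsilon n}\}$, which is exactly the ``declare $\mathcal E$ to fail whenever the conditional probability is too small'' trick you describe; the two $2^{-\epsilon n}$ losses you track are precisely the $(m-1)2^{-\epsilon n}$ from Theorem~\ref{thm:morehadamard} and the $m\cdot 2^{-\epsilon n}$ from this extra clause, giving $(2m-1)2^{-\epsilon n}$. Your initial attempt with threshold $2^{-(\delta/2-2\epsilon)n}$ was a detour (and note a small slip in your ``fix'' paragraph: with $\L^j = \{Q^j(x) > 2^{-\epsilon n}\tau'\}$ the complement has $Q^j(x) \le 2^{-\delta n/2}$, not $2^{-(\delta/2-\epsilon)n}$; you want simply $\L^j = \{Q^j(x) > \tau'\}$), but you correctly identified the renormalization issue and its resolution.
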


\begin{proof}
For $j \in \setn[m]$ define
\begin{align*}
\mathcal S^j \assign \big\{ x \in \nbit &: Q^j(x) \leq  2^{-(\delta/2-\epsilon )n} \big\}
\end{align*} 
to be the sets of strings with small probabilities and denote by $\L^j
\assign \ol{\mathcal S}^j$ their complements\footnote{Here's the mnemonic: $\mathcal S$ for the strings with \emph{S}mall probabilities, $\L$ for \emph{L}arge.}. Note that for all $x \in \L^j$, we have that $Q^j(x) > 2^{-(\delta/2 - \epsilon )n}$ and therefore $|\L^j| < 2^{(\delta/2 - \epsilon)n}$. 
It follows from Theorem~\ref{thm:morehadamard} that 
\begin{align*}
\sum_{j \in \setn[m]} Q^j(\mathcal S^j) &= \sum_{j \in \setn[m]} (1- Q^j(\L^j) ) \geq m - (1 + (m-1) \cdot 2^{-\epsilon n} )
=(m-1) - (m-1)2^{-\epsilon n}.
\end{align*}

We define ${\mathcal{E}} \assign \set{X \in \mathcal S^J \, \wedge \, Q^J(\mathcal S^J) \geq 2^{-\epsilon n}}$ to be the event that $X \in \mathcal S^J$ and at the same time the probability that this happens is not too small.
Then $\Pr[{\mathcal{E}}|J\!=\!j] = \Pr[X\in \mathcal S^j \wedge Q^j(\mathcal S^j)\geq 2^{-\epsilon n} |J\!=\!j]$ either vanishes (if $Q^j(\mathcal S^j) < 2^{-\epsilon n}$) or else equals $Q^j(\mathcal S^j)$. In either case, $\Pr[{\mathcal{E}}|J\!=\!j] \geq Q^j(\mathcal S^j) - 2^{-\epsilon n}$ holds and thus the first claim follows by summing over $j \in\setn[m]$ and using the derivation above. 
Furthermore,  let $p=\max_j P_J(j)$, then
$\Pr[ \bar{ \mathcal{E}}] = \sum_{j \in \setn[m]} P_J(j) \Pr[ \bar{\mathcal{E}}|J\!=\!j] \leq p \sum_{j \in \setn[m]} \Pr[\bar{\mathcal{E}}|J\!=\!j] \leq p (m-(\sum_{j\in \setn[m]} Q^j(\mathcal S^j) - 2^{-\epsilon n})) \leq p (1 +(2m-1) \cdot 2^{-\epsilon n})$, and $\Pr[{\mathcal{E}}] \geq  (1-p) - p(2m-1) \cdot 2^{-\epsilon n}$ 

Regarding the second claim, in case $J=j$, we have
\begin{align*}
  \hmin(X|J\!=\!j, {\mathcal{E}}) 
&= -\log\left(\max_{x \in \mathcal S^j} \frac{Q^j(x)}{Q^j(\mathcal S^j)}\right) 
 \\& \geq -\log\left(\frac{2^{-(\delta/2 -\epsilon)n}}{Q^j(\mathcal S^j)}\right) = (\delta/2 -\epsilon) n + \log(Q^j(\mathcal S^j)). 
\end{align*}
As $Q^j(\mathcal S^j) \geq 2^{-\epsilon n}$ by definition of ${\mathcal{E}}$, we have $\hmin(X|J\!=\!j, {\mathcal{E}}) \geq (\delta/2 - 2\epsilon) n$. 
\end{proof}

\section{Main Result and Its Proof}
We are now ready to state and prove our new all-but-one entropic uncertainty relation. 

\begin{thm}
\label{thm:UR}
Let $\rho$ be an arbitrary $n$-qubit state, let $J$ be a random variable over $\setn[m]$ (with arbitrary distribution~$P_J$), and let $X$ be the outcome when measuring $\rho$ in basis $\mathcal{B}_J$.
Then, for any $0< \epsilon < \delta /4 $, there exists a random variable $J'$ with joint distribution $P_{JJ'X}$ such that (1) $J$ and $J'$ are 
independent
and (2) there exists an event $\Psi$ with $\Pr[\Psi] \geq 1-2\cdot2^{-\epsilon n} $ such that%
\footnote{Instead of introducing such an event $\Psi$, we could also express the min-entropy bound by means of the {\em smooth} min-entropy of $X$ given $J=j$ and $J'=j'$. }
\[
\hmin(X|J=j,J'=j',\Psi) \geq \Bigl(\frac{\delta}{2} - 2 \epsilon\Bigr) n  - 1 
\]
for all $j,j' \in \setn[m]$ with $j \neq j'$ and $\P_{JJ'|\Psi}(j,j')>0$. 
\end{thm}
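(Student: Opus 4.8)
The starting point is \refcor{morehadamard}, which already provides an event $\mathcal{E}$ such that $\sum_j \Pr[\mathcal{E}\mid J\!=\!j] \geq (m-1) - (2m-1)2^{-\epsilon n}$ and $\hmin(X\mid J\!=\!j,\mathcal{E}) \geq (\delta/2 - 2\epsilon)n$ whenever $P_{J\mid\mathcal{E}}(j)>0$. The crucial observation is that the bound on $\sum_j \Pr[\mathcal{E}\mid J\!=\!j]$ does \emph{not} depend on the distribution $P_J$ at all; it is a statement about the family of conditional probabilities $p_j := \Pr[\mathcal{E}\mid J\!=\!j]$, which are numbers in $[0,1]$ summing to at least $(m-1) - (2m-1)2^{-\epsilon n}$, i.e. all but essentially one of them are close to $1$. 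So the $\mathcal{E}$ from \refcor{morehadamard} "fails" for at most one value of $j$ (up to the small error), but \emph{which} $j$ it fails for is not controlled---and the bad index may in fact depend on $J$ in a correlated way if one is not careful. The goal is to repackage this into a single random variable $J'$, independent of $J$, that acts as the "virtual guess" of the bad index.

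\textbf{Construction of $J'$ and $\Psi$.} The plan is to define $J'$ purely from the numbers $p_j$ (hence independent of $J$, since those numbers are determined by $\rho$ and the bases alone, not by $P_J$). Concretely, let $\bar p_j := 1 - p_j$, so $\sum_j \bar p_j \leq 1 + (2m-1)2^{-\epsilon n}$. One natural choice is to let $J'$ take value $j'$ with probability proportional to $\bar p_{j'}$, with the leftover mass (at least $1 - (1+(2m-1)2^{-\epsilon n})/(\text{something})$) assigned arbitrarily---or, more cleanly, to use a "splitting" argument: on the probability space underlying $\mathcal{E}$, further refine things so that conditioned on $\{J=j\}$ and on $\bar{\mathcal{E}}$ (the complement), we declare $J'=j$, and conditioned on $\mathcal{E}$ we draw $J'$ from a fixed distribution $\mu$ on $[m]$ that does \emph{not} depend on $j$, chosen so that the marginal of $J'$ is the same whether or not we condition on $J=j$. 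The point of the $-1$ in the min-entropy bound and the event $\Psi$ with $\Pr[\Psi]\geq 1-2\cdot 2^{-\epsilon n}$ is to absorb the cases where this re-coupling forces $J'=J$ or where $\mathcal{E}$ itself failed: inside $\Psi$ we want $\mathcal{E}$ to hold \emph{and} $J'$ to have been drawn from the "good" distribution, so that conditioning on $J=j,J'=j',\Psi$ with $j\neq j'$ still implies conditioning on (a subevent of) $\mathcal{E}\wedge\{J=j\}$, whence the min-entropy bound of \refcor{morehadamard} applies, losing at most one bit from the conditioning on $J'=j'$ (a conditioning on an event of probability at least $1/2$ costs at most one bit of min-entropy).

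\textbf{The main obstacle.} The delicate part is engineering $J'$ so that \emph{simultaneously} (i) $J'\perp J$ exactly (not just approximately), (ii) inside $\Psi$, conditioning on $\{J'=j'\}$ is "benign" in the sense that it does not destroy the min-entropy of $X$ given $J=j,\mathcal{E}$ by more than one bit, and (iii) the event $\{J=J'\}$ has been made to carry all the "bad mass" $\bar p_j$ so that on $\{j\neq j'\}$ we are genuinely in the good case. Achieving exact independence while only having an \emph{averaged} (summed) guarantee $\sum_j p_j \geq m-1-\cdots$ is what forces the probabilistic/splitting argument rather than a deterministic choice of $j'$; one has to invoke a combinatorial lemma (of the flavour alluded to by "a rather involved probability reasoning" and the reference to \cite{Schaffner07}) that takes a family $(p_j)_j$ with $\sum_j(1-p_j)\leq \eta$ and produces a joint distribution of $(J,J')$ with $J'\perp J$, with $\{J\neq J'\}$ contained (up to an event of probability $O(\eta)$) in the union of the events $\mathcal{E}$ on which the entropy bound holds. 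I expect the bookkeeping to make the independence and the probability estimate $\Pr[\Psi]\geq 1-2\cdot 2^{-\epsilon n}$ work out simultaneously to be the technical heart of the argument; once $J'$ and $\Psi$ are in place, the min-entropy bound is just \refcor{morehadamard} plus the observation that restricting a classical $X$ to a subevent of probability $\geq \tfrac12$ (here, the event $\{J'=j'\}$ within $\{J=j,\mathcal{E}\}$, arranged to have conditional probability at least one half) decreases $\hmin$ by at most $1$.
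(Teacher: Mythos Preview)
Your high-level plan is the paper's: start from \refcor{morehadamard} and build $J'$ so that the event $\{J\neq J'\}$ coincides (after a small adjustment) with the good event~$\mathcal{E}$. But two of the concrete steps you propose would not work, and fixing them is exactly where the paper's argument lives.

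First, on $\mathcal{E}$ you cannot draw $J'$ from a distribution $\mu$ that is \emph{independent of~$j$}: for $j'\neq j$ this gives $P_{J'|J}(j'|j)=p_j\,\mu(j')$, which varies with $j$ unless all $p_j$ coincide, so exact independence fails. The paper's trick is to set, on $\bar{\mathcal{E}}$, $J'=J$, and on $\mathcal{E}$, $P_{J'|J,X,\mathcal{E}}(j'|j,x)=\bar p_{j'}/p_j$ for $j'\neq j$ (and $0$ for $j'=j$). This \emph{does} depend on $j$, yet a two-line computation gives $P_{JJ'}(j,j')=P_J(j)\,\bar p_{j'}$ in every case, whence $J\perp J'$ with marginal $P_{J'}(j')=\bar p_{j'}$. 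For this to define a probability distribution one needs $\sum_{j'\neq j}\bar p_{j'}=p_j$ for all $j$, i.e.\ $\sum_j\bar p_j=1$ \emph{exactly}. The ``inflation/deflation'' of $\mathcal{E}$ is what forces this equality, and that adjustment---not any combinatorial lemma---is the source of both $\Psi$ (when $\sum_j\bar p_j>1$: enlarge $\mathcal{E}$ by an auxiliary event $\mathcal{E}_\circ$ of probability $\leq 2\cdot 2^{-\epsilon n}$ and set $\Psi=\bar{\mathcal{E}}_\circ$) and the $-1$ (when $\sum_j\bar p_j<1$: shrink $\mathcal{E}$ to $\mathcal{E}'$ with $\Pr[\mathcal{E}'|\mathcal{E}]=\frac{m-1}{\sum_j p_j}\geq\tfrac12$, costing at most one bit).

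Second, the $-1$ does \emph{not} come from conditioning on $\{J'=j'\}$. By the construction above, on $\mathcal{E}$ the law of $J'$ depends only on $J$ and not on $X$, so $X\leftrightarrow J\leftrightarrow J'$ holds conditioned on $J\neq J'$; therefore additionally fixing $J'=j'$ leaves the conditional distribution of $X$ unchanged and costs zero bits of min-entropy.
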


Note that, as phrased, Theorem~\ref{thm:UR} requires that $J$ is fixed and known, and only then the existence of $J'$ can be guaranteed. This is actually not necessary. By looking at the proof, we see that $J'$ can be defined simultaneously in all $m$ probability spaces $P_{X|J=j}$ with $j \in \setn[m]$, without having assigned a probability distribution to $J$ yet, so that the resulting random variable $J'$ we obtain by assigning an {\em arbitrary} probability distribution $P_J$ to $J$, satisfies the claimed properties. This in particular implies that the (marginal) distribution of $J'$ is fully determined by $\rho$. 


The idea of the proof of Theorem~\ref{thm:UR} is to (try to) define the random variable $J'$ in such a way that the event $J \neq J'$ coincides with the ``good event'' $\mathcal{E}$ from Corollary~\ref{cor:morehadamard}. It then follows immediately from Corollary~\ref{cor:morehadamard} that $\hmin(X|J=j,J' \neq J) \geq (\delta/2-2\epsilon) n$, which is already close to the actual min-entropy bound we need to prove. This approach dictates that if the event $\mathcal{E}$ does not occur, then $J'$ needs to {\em coincide} with~$J$. Vice versa, if $\mathcal{E}$ does occur, then $J'$ needs to be {\em different} to $J$. However, it is a priori unclear {\em how} to choose $J'$ different to $J$ in case $\mathcal{E}$ occurs. 
There is only one way to set $J'$ to be equal to $J$, but there are many ways to set $J'$ to be different to $J$ (unless $m = 2$).
It needs to be done in such a way that without conditioning on $\mathcal{E}$ or its complement, $J$ and $J'$ are independent.

Somewhat surprisingly, it turns out that the following does the job. To simplify this informal discussion, we assume that the sum of the $m$ probabilities $\Pr[{\mathcal{E}} | J\!=\!j ]$ from Corollary~\ref{cor:morehadamard} equals $m-1$ exactly. It then follows that the corresponding complementary probabilities, $\Pr[\bar{\mathcal{E}} | J\!=\!j ]$ for the $m$ different choices of $j \in \setn[m]$, add up to $1$ and thus form a probability distribution. $J'$ is now chosen, in the above spirit depending on the event $\mathcal{E}$, so that its marginal distribution $P_{J'}$ coincides with this probability distribution: $P_{J'}(j') = \Pr[\bar{\mathcal{E}} | J\!=\!j' ]$ for all $j'\in \setn[m]$. Thus, in case the event $\mathcal{E}$ occurs, $J'$ is chosen according to this distribution but conditioned on being different to the value $j$, taken on by $J$. 
The technical details, and how to massage the argument in case the sum of the $\Pr[{\mathcal{E}} | J\!=\!j ]$'s is not exactly $m-1$, are worked out in the proof below.

\begin{proof}[Proof of Theorem~\ref{thm:UR}]
From \refcor{morehadamard} we know that for any $0<\epsilon<\delta/4$, there exists an event $\E$ such that $\sum_{j \in \setn[m]} \Pr[\E | J=j] =  m-1- \alpha$, and thus $\sum_{j \in \setn[m]} \Pr[\bar{\E} | J=j] = 1 + \alpha$, for $-1 \leq \alpha \leq (2m-1) 2^{-\epsilon n}$. We make a case distinction between $\alpha=0$, $\alpha>0$ and $\alpha <0$; we start with the case $\alpha = 0$, we subsequently prove the other two cases by reducing them to the case $\alpha = 0$ by ``inflating'' and ``deflating'' the event $\E$ appropriately. 
The approach for the case $\alpha = 0$ is to define $J'$ in such way that $\E \iff J \neq J'$, i.e., the event $ J \neq J'$ coincides with the event $\E$. The min-entropy bound from \refcor{morehadamard} then immediately translates to $\hmin(X | J=j, J' \neq J) \geq (\delta/2-2\epsilon) n$, and to $\hmin(X|J=j, J'=j') \geq (\delta/2-2\epsilon) n$ for $j' \neq j$ with $P_{JJ'}(j,j')>0$, as we will show. 
What is not obvious about the approach is how to define $J'$ when it is supposed to be different from $J$, i.e., when the event $\E$ occurs, so that in the end $J$ and $J'$ are independent. 

Formally, we define $J'$ by means of the following conditional probability distributions:  
\[
P_{J' | J X \bar \E}(j'|j,x) := 
\left\{ \begin{array}{cl} 1 & \text{if } j=j' \\
0 &  \text{if } j \neq j'  
\end{array}\right.  
\quad
\text{and}
\quad
P_{J'|J X \E}(j'|j,x) := 
\left\{ \begin{array}{cc} 0 & \text{if } j=j' \\ \displaystyle 
\frac{\Pr[\bar \E | J=j']}{\Pr[\E|J=j]}
 &  \text{if } j \neq j'
\end{array}\right. 
\]
We assume for the moment that the denominator in the latter expression does not vanish for any $j$; we take care of the case where it does later. 
Trivially, $P_{J'|JX\bar\E}$ is a proper distribution, with non-negative probabilities that add up to $1$, and the same holds for $P_{J'|JX\E}$:
\begin{align*}
\sum_{j' \in \setn[m]} P_{J'|JX\bar \E} = \sum_{j' \in \setn[m] \setminus \set{j}} P_{J'|JX\bar \E} &= \sum_{j' \in \setn[m] \setminus \set{j}}  \frac{\Pr[\bar \E | J=j']}{\Pr[\E |J=j]} 
= 1
\end{align*}
where we used that $\sum_{j \in \setn[m]} \Pr[\bar{\E} | J=j] =  1$ (because $\alpha = 0$) in the last equality. 
Furthermore, it follows immediately from the definition of $J'$ that $\bar \E \implies J=J'$ and $\E \implies J \neq J'$. Hence, $\E \iff J \neq J'$, and thus the bound from \refcor{morehadamard} translates to $\hmin(X | J=j, J' \neq J) \geq (\delta/2-2\epsilon) n$. It remains to argue that $J'$ is independent of $J$, and that the bound also holds for $\hmin(X|J=j, J'=j')$ whenever $j \neq j'$. 

The latter follows immediately from the fact that conditioned on $J \neq J'$ (which is equivalent to $\E$), $X,J$ and $J'$ form a Markov chain $X \leftrightarrow J \leftrightarrow J'$, and thus, given $J=j$, additionally conditioning on $J'=j'$ does not change the distribution of $X$. 
For the independence of $J$ and $J'$, consider the joint probability distribution of $J$ and $J'$, given by
\begin{align*}
P_{JJ'} (j,j') &= P_{J'J\E}(j',j) + P_{J'J\bar\E}(j',j)\\
&= P_J(j)  \Pr[ \E | J=j] P_{J' | J \E}(j'|j)  + P_J(j)  \Pr[\bar \E | J=j] P_{J' | J \bar \E}(j' | j)
\\&= P_J(j) \Pr[\bar \E|J=j'],
\end{align*}
where the last equality follows by separately analyzing the cases $j=j'$ and $j \neq j'$. 
It follows immediately that the marginal distribution of $J'$ is $P_{J'}(j') = \sum_{j } P_{JJ'}(j,j') = \Pr[\bar \E|J=j']$, and thus $P_{JJ'} = P_J \cdot P_{J'}$.

What is left to do for the case $\alpha = 0$ is to deal with the case where there exists $j^*$ with $\Pr[\E | J=j^*]=0$. 
Since $\sum_{j \in \setn[m]} \Pr[\bar \E | J=j] = 1$, 
it holds that $\Pr[\bar \E | J=j] =0 $ for $j\neq j^*$. This motivates to define $J'$ as $J' := j^*$ with probability $1$.
Note that this definition directly implies that $J'$ is independent from $J$. Furthermore, by the above observations: $\E \iff J \neq J'$. 
This concludes the case $\alpha=0$. 

Next, we consider the case $\alpha > 0$. The idea is to ``inflate'' the event $\E$ so that $\alpha$ becomes $0$, i.e., to define an event $\E'$ that contains $\E$ (meaning that $\E \implies \E'$) so that $\sum_{j \in \setn[m]} \Pr[\E'|J=j] = m-1$, and to define $J'$ as in the case $\alpha = 0$ (but now using $\E'$). Formally, we define $\E'$ as the disjoint union $\E' = \E \vee \E_\circ$ of $\E$ and an event $\E_\circ$. The event $\E_\circ$ is defined by means of $\Pr[\E_\circ|\E, J=j,X=x] = 0$, so that $\E$ and $\E_\circ$ are indeed disjoint, and $\Pr[\E_\circ|J=j,X=x] = \alpha/m$, so that indeed 
$$
\sum_{j \in \setn[m]} \Pr[\E'|J=j] = \sum_{j \in \setn[m]} (\Pr[\E|J=j]+\Pr[\E_\circ|J=j]) = (m-1-\alpha)+\alpha = m-1 \, .
$$

We can now apply the analysis of the case $\alpha = 0$ to conclude the existence of $J'$, independent of $J$, such that $J \neq J' \iff \E'$ and thus $(J \neq J') \wedge \bar{\E}_\circ \iff \E'\wedge \bar{\E}_\circ \iff \E$. Setting $\Psi := \bar{\E}_\circ$, it follows that
\[
\hmin(X|J=j,J \neq J',\Psi) = \hmin(X|J=j,\E) \geq (\delta/2-2\epsilon) n \, ,
\]
where $\Pr[\Psi] = 1 - \Pr[\E_\circ] = 1 - \alpha/m \geq 1 - (2m-1) 2^{-\epsilon n}/m \geq 1 - 2 \cdot 2^{-\epsilon n}$. Finally, using similar reasoning as in the case $\alpha = 0$, it follows that the same bound holds for $\hmin(X|J=j, J'=j',\Psi)$ whenever $j \neq j'$. This concludes the case $\alpha > 0$. 

Finally, we consider the case $\alpha < 0$. The approach is the same as above, but now $\E'$ is obtained by ``deflating'' $\E$. 
Specifically, we define $\E'$ by means of $\Pr[\E'|\bar\E,J=j,X=x ]=\Pr[\E'|\bar\E ] =0$, so that $\E'$ is contained in $\E$, and $\Pr[\E' | \E, J=j, X=x] = \Pr[\E' | \E] = \frac{m-1}{m-1-\alpha}$, so that
$$
\sum_{j \in \setn[m]} \Pr[\E'|J=j] = \sum_{j \in \setn[m]} \Pr[\E'| \E]\cdot \Pr[ \E |J=j]  = m - 1 \, .
$$
Again, from the $\alpha = 0$ case we obtain $J'$, independent of $J$, such that the event $J \neq J'$ is equivalent to the event $\E'$.

It follows that 
\begin{align*}
\hmin(&X|J=j,J \neq J') = \hmin(X|J=j,\E')  = \hmin(X|J=j,\E',\E)  \\
&\geq \hmin(X|J=j,\E) - \log(P[\E'|\E,J=j]) \geq (\delta/2-2\epsilon) n - 1\, ,
\end{align*}
where the second equality holds because $\E' \implies \E$, the first inequality holds because additionally conditioning on $\E'$ increases the probabilities of $X$ conditioned on $J=j$ and $\E$ by at most a factor $1/P[\E'|\E,J=j])$, and the last inequality holds by \refcor{morehadamard}) and because $P[\E'|\E,J=j]) = \frac{m-1}{m-1-\alpha} \geq \frac12$, where the latter holds since $\alpha \geq -1$. 
Finally, using similar reasoning as in the previous cases, it follows that the same bound holds for $\hmin(X|J=j, J'=j')$ whenever $j \neq j'$. This concludes the proof. 
\end{proof}

\section{Constructing Good Families of Bases}
\label{sec:goodfam}
Here, we discuss some interesting choices for the family $\set {\B_1,\ldots,\B_m}$ of bases. We say that such a family is ``good'' if $\delta = - \frac{1}{n}\log(c^2)$ converges to a strictly positive constant as $n$ tends to infinity. There are various ways to construct such families. For example, a family obtained through sampling according to the Haar measure will be good with overwhelming probability (a precise statement, in which ``good'' means $\delta=0.9$, can be found at the very end of the proof of Theorem 2.5 of \cite{FHS11}). The best possible constant $\delta = 1$ is achieved for a family of \emph{mutually unbiased bases}. However, for arbitrary quantum systems (i.e., not necessarily multi-qubit systems) it is not well understood how large such a family may be, beyond that its size cannot exceed the dimension plus $1$. 

In the upcoming section, we will use the following simple and well-known construction. For an arbitrary binary code $\Tset{C} \subset \set{+,\times}^n$ of size $m$, minimum distance $d$ and encoding function $\encoding:\setn[m]\rightarrow \C$, we can construct a family $\set{\B_1,\ldots,\B_m}$ of bases as follows. We identify the $j$th codeword, i.e. $\encoding(j) =  (c_1,\ldots,c_n)$ for $j\in \setn[m]$, with the basis $\B_j  = \Set{\ket{x}_{\encoding(j)}}{x \in \set{0,1}^n}= \Set{ (H^{c_1} \!\kron \cdots \kron\! H^{c_n}) \ket{x}}{x \in \set{0,1}^n}$. In other words, $\B_j$ measures qubit-wise in the computational or the Hadamard basis, depending on the corresponding coordinate of $\encoding(j)$.  
It is easy to see that the maximum overlap $c$ of the family obtained this way is directly related to the minimum distance of $\C$, namely $\delta = -\frac 1 n \log(c^2)$ coincides with the relative minimal distance $d/n$ of $\C$. Hence, choosing an asymptotically good code immediately yields a good family of bases.

\chapter{Application: A New Quantum Identification Scheme} \label{sec:identification}

Our main application of the new uncertainty relation is in proving security of a new  identification scheme in the quantum setting.
The goal of (password-based) identification is to ``prove'' knowledge of a password $w$ (or some other low-entropy key, like a PIN) without giving $w$ away. 
More formally, given a user \user and a server \server that hold a pre-agreed password $w \in \Tset{W}$, \user wants to convince \server that he indeed knows $w$, but in such a way that he gives away as little information on $w$ as possible in case he is actually interacting with a dishonest server \dishserver. 

In~\cite{DFSS07}, Damg{\aa}rd \etal\ showed the existence of a secure identification scheme in the {\em bounded-quantum-storage} model. The scheme involves the communication of qubits, and is secure against an arbitrary dishonest server \server that has limited quantum storage capabilities and can only store a certain fraction of the communicated qubits, whereas the security against a dishonest user \dishuser holds unconditionally. 

On the negative side, it is known that {\em without} any restriction on (one of) the dishonest participants, secure identification is impossible (even in the quantum setting). Indeed, if a quantum scheme is unconditionally secure against a dishonest user, then unavoidably it can be broken by a dishonest server with unbounded quantum-storage and unbounded quantum-computing power; this follows essentially from \cite{lo96} (see also~\cite{DFSS07}). Thus, the best one can hope for (for a scheme that is unconditionally secure against a dishonest user) is that in order to break it, unbounded quantum storage {\em and} unbounded quantum-computing power is {\em necessary} for the dishonest server. 
This is not the case for the scheme of~\cite{DFSS07}: storing all the communicated qubits as they are, and measuring them qubit-wise in one or the other basis at the end, completely breaks the scheme. Thus, no quantum computing power at all is necessary to break the scheme, only sufficient quantum storage. 

In this section, we propose a new identification scheme, which can be regarded as a first step towards closing the above gap. Like the scheme from \cite{DFSS07}, our new scheme is secure against an unbounded dishonest user and against a dishonest server with limited quantum storage capabilities. The new uncertainty relation forms the main ingredient 
in the user-security proof in the BQSM.
Furthermore, and in contrast to~\cite{DFSS07}, a minimal amount of quantum computation power is {\em necessary} to break the scheme, beyond sufficient quantum storage. 
Indeed, next to the security against a dishonest server with bounded quantum storage, we also prove---in \refsec{usecsqom}---security against a dishonest server that can store all the communicated qubits, but is restricted to measure them qubit-wise (in arbitrary qubit bases) at the end of the protocol execution. Thus, beyond sufficient quantum storage, quantum computation that involves {\em pairs} of qubits is necessary (and in fact sufficient) to break the new scheme.

Restricting the dishonest server to qubit-wise measurements may look restrictive; however, we stress that in order to break the scheme, the dishonest server needs to store many qubits {\em and} perform quantum operations on them that go beyond single-qubit operations; this may indeed be considerably more challenging than storing many qubits and measuring them qubit-wise. Furthermore, it turns out that proving security against such a dishonest server that is restricted to qubit-wise measurements is already challenging; indeed, standard techniques do not seem applicable here. Therefore, handling a dishonest server that can, say, act on {\em blocks} of qubits, must be left to future research. 


\section{Security Definitions}

We first formalize the security properties we want to achieve. We borrow the definitions from \cite{DFSS07}, which are argued to be ``the right ones'' in \cite{FS09}. 

\begin{definition}[Correctness]
\label{def:correctness}
An identification protocol is said to be \emph{$\varepsilon$-correct} if, after an execution by honest \user and honest \server, \server accepts with probability $1-\varepsilon$.
\end{definition}

\begin{definition}[User security]
\label{def:usec}
An identification protocol for two parties \user, \server is $\varepsilon$-secure for the user \user 
against (dishonest) server \dishserver if the following holds: If the initial state of \dishserver is independent of $W$, then its state $E$ after execution of the protocol is such that there exists a random variable $W^\prime$ that is independent of $W$ and such that 
\[
\rho_{WW^\prime E|W\neq W^\prime} \approx_\varepsilon \rho_{W \leftrightarrow W^\prime \leftrightarrow E | W\neq W^\prime}.
\]
\end{definition}

\begin{definition}[Server security]\label{def:servsec}
An identification protocol for two parties \user, \server is $\varepsilon$-secure for the server \server 
against (dishonest) user \dishuser if the following holds: whenever the initial state of \dishuser is independent of $W$, then there exists a random variable $W^\prime$ (possibly $\perp$) that is independent of $W$ such that if $W\neq W^\prime$ then \server 
accepts with probability at most $\varepsilon$. 
Furthermore, the common state $\rho_{W E}$ after execution of the protocol (including \server's announcement to accept or reject) satisfies
\[
\rho_{WW^\prime E|W\neq W^\prime} \approx_\varepsilon \rho_{W \leftrightarrow W^\prime \leftrightarrow E | W\neq W^\prime}.
\]
\end{definition}

We will prove the user-security of the protocol in two different models, in which different assumptions are made. Because these assumptions are in some sense ``orthogonal'', the hope is that if security would break down in one model to a failing assumption, the protocol is still secure by the other model.

\section{Description of the New Quantum Identification Scheme}
\label{sec:schemedescr}
Let $\Tset{C} \subset \set{+,\times}^n$ be a binary code with minimum distance $d$, and let $\encoding: \Tset{W} \rightarrow \Tset{C}$ be its 
encoding function. Let $m:=|\Tset{W}|$, and typically, $m < 2^n$. Let $\mathcal{F}$ be the class of all linear functions from $\set{0,1}^n$ to $\set{0,1}^\ell$, where $\ell < n$, represented as $\ell \times n$ 
binary matrices. It is well-known that this class is two-universal. Furthermore, let $\mathcal{G}$ be a strongly two-universal class of hash functions from $\Tset{W} $ to $\{0,1\}^\ell$. Protocol \QID is shown below.
\begin{protocol}
\begin{enumerate*}
\item \user picks $x\in \{0,1\}^n$ independently and uniformly at random and sends $\ket{x}_{\encoding(w)}$ to \server.
\item \server measures in basis $\encoding(w)$. Let $x^\prime$ be the outcome.
\item \user picks $f \in \mathcal{F}$ independently and uniformly at random and sends it to \server
\item \server picks $g\in \mathcal{G}$ independently and uniformly at random and sends it to \user
\item \user computes and sends $z:=f(x)\oplus g(w) $ to \server
\item  \server accepts if and only if $z=z^\prime$ where $z^\prime :=f(x') \oplus g(w)$
\end{enumerate*}
\caption{\QID}
\end{protocol}

Our scheme is quite similar to the scheme in~\cite{DFSS07}. The difference is that in our scheme, both parties, \user and \server, use $\encoding(w)$ as basis for preparing/measuring the qubits in step (1) and (2), whereas in~\cite{DFSS07}, only \server uses $\encoding(w)$ and \user uses a {\em random} basis $\theta \in \set{+,\times}^n$ instead, and then \user communicates $\theta$ to \server and all the positions where $\theta$ and $\encoding(w)$ differ are dismissed. 
Thus, in some sense, our new scheme is more natural since why should \user use a random basis when he knows the right basis (i.e., the one that \server uses)? 
In~\cite{DFSS07}, using a random basis (for \user) was crucial for their proof technique, which is based on an entropic uncertainty relation of a certain form, which asks for a random basis. 
However, using a random basis, which then needs to be announced, renders the scheme insecure against a dishonest server \dishserver that is capable of storing all the communicated qubits and then measure them in the right basis once it has been announced. 
Our new uncertainty relation applies to the case where an $n$-qubit state is measured in a basis that is sampled from a code $\Tset{C}$, and thus is applicable to the new scheme where \user uses basis $\encoding(w) \in \Tset{C}$. Since this basis is common knowledge (to the honest participants), it does not have to be communicated, and as such a straightforward store-and-then-measure attack as above does not apply. 

A downside of our scheme is that security only holds in case of a perfect quantum source, which emits exactly one qubit when triggered. Indeed, a multi-photon emission enables a dishonest server \dishserver to learn information on the basis used, and thus gives away information on the password $w$ in our scheme.  As such, our scheme is currently mainly of theoretical interest. 

It is straightforward to verify that (in the ideal setting with perfect sources, no noise, etc.) \QID satisfies the correctness property (Definition~\ref{def:correctness}) perfectly, i.e.\ $\varepsilon=0$. In the remaining sections, we prove (unconditional) security against a dishonest user, and we prove security against two kinds of restricted dishonest servers. First, against a dishonest server that has limited quantum storage capabilities, and then against a dishonest server that can store an unbounded number of qubits, but can only store and measure them qubit-wise. 

\section{(Unconditional) Server Security}
First, we claim security of \QID against an arbitrary dishonest user \dishuser (that is merely restricted by the laws of quantum mechanics).
\label{sec:serversec}
\begin{thm}
\label{thm:serversec}
\QID is $\varepsilon$-secure for the server with $\varepsilon=\binom{m}{2}2^{-\ell}$.
\end{thm}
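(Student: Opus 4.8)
The plan is to show that a dishonest user \dishuser, whose initial state is independent of $W$, learns essentially nothing about $W$ from the protocol, and that \server rejects except with probability $\binom{m}{2}2^{-\ell}$ whenever $W$ differs from a suitable $W'$. The natural candidate for $W'$ is the (virtual) value that \dishuser is ``committed to'' by the time step~(5) is reached: once \dishuser has sent the final message $z$, the transcript together with \dishuser's state determines, for each possible password $w$, whether \server would accept, namely whether $z = f(x') \oplus g(w)$. I would first observe that since \dishuser is the one sending qubits in step~(1) and the classical messages $f$ and $z$, while the only thing \server contributes is the measurement basis $\encoding(w)$, the measurement outcome $x'$ is a deterministic function of \dishuser's step-(1) state, the basis $\encoding(w)$, and \server's measurement randomness. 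The key point is that $g$ is sent \emph{after} \dishuser has committed to the qubits and to $f$, and $z$ is sent by \dishuser knowing $g$; so from \dishuser's perspective the acceptance predicate ``$z \oplus f(x') = g(w)$'' pins down the value $g(w)$.

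The main step is a collision argument using strong two-universality of $\mathcal{G}$. Fix the transcript up to and including $g$, and let $W'$ be defined (as a randomized function of \dishuser's view, including $g$) as follows: if there is a $w$ consistent with \dishuser's strategy for which \server would accept, set $W' := w$ for one such $w$ (chosen by any fixed rule), else set $W' := \perp$. Conditioned on $W \neq W'$, \server accepts only if $z \oplus f(x') = g(W)$ while also, for the committed value $w_0 := W'$, we have $z \oplus f(x'_0) = g(w_0)$ — but here one must be careful, because $x'$ itself depends on $W$ through the measurement basis $\encoding(W)$. The clean way around this is to define $W'$ \emph{before} $g$ is sent: after \dishuser sends the qubits and (after seeing nothing more) the function $f$, the pair $(\text{qubits},f)$ together with \dishuser's still-to-come behaviour determines, for each $w$, the distribution of $x'$ and hence of $z' = f(x')\oplus g(w)$; \server's acceptance on password $w$ is then the event that \dishuser's announced $z$ equals $z'$. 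I would let $W'$ be any $w$ maximizing \dishuser's probability of making \server accept on input $w$ (with $W'=\perp$ if this max is below, say, $2^{-\ell}$). By strong two-universality, for $w \neq w'$ the values $g(w), g(w')$ are uniform and independent over $\set{0,1}^\ell$, so the events ``accept on $w$'' and ``accept on $w'$'' constrain $g$ on two independent coordinates; a union bound over the $\binom{m}{2}$ pairs shows that the probability that \dishuser can make \server accept on \emph{any} $w \neq W'$ is at most $\binom{m}{2} 2^{-\ell}$. This gives the acceptance bound in Definition~\ref{def:servsec}.

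For the second part of Definition~\ref{def:servsec} — that $\rho_{WW'E|W\neq W'}\approx_\varepsilon \rho_{W\leftrightarrow W'\leftrightarrow E|W\neq W'}$ — I would note that \dishuser's entire view $E$ (the messages $f, g, z$ it chose, plus $g$ received, plus its quantum register) is, by construction of $W'$, a function of randomness independent of $W$ together with $W'$: \dishuser never receives any message that depends on $W$ except implicitly through whether \server accepts, and conditioning on $W \neq W'$ the acceptance bit is $0$ except with probability $\varepsilon$. Thus on the event $W \neq W'$, up to statistical distance $\varepsilon$ the bit ``accept'' is constant $0$ and $E$ carries no further information on $W$, giving the required conditional independence. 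I expect the main obstacle to be the bookkeeping around \emph{when} $W'$ is defined and the dependence of the measurement outcome $x'$ on the basis $\encoding(W)$: one has to fix $W'$ early enough (before $g$) that the two-universality of $\mathcal{G}$ can be applied to two fixed distinct inputs, yet late enough that $W'$ genuinely captures \dishuser's behaviour; getting this ordering right, and formalizing ``the value \dishuser is committed to,'' is the delicate part, whereas the collision bound and the union bound over pairs are then routine.
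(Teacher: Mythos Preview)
You correctly identify the overall structure: \dishuser learns nothing about $W$ in steps (1)--(5), so it suffices to exhibit $W'$ independent of $W$ such that $W \neq W'$ forces rejection except with probability $\binom{m}{2}2^{-\ell}$; and the $\binom{m}{2}$ should come from a union bound over pairs using strong two-universality of $\mathcal{G}$. You also correctly locate the real difficulty: the outcome $x'$ depends on $W$ via the basis $\encoding(W)$, so for different candidate passwords the $x'$-values live in different experiments and cannot naively be compared.

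However, your proposed resolution --- define $W'$ \emph{before} $g$ as the argmax of \dishuser's acceptance probability --- does not close this gap. Your union-bound sentence asserts that the probability of accepting on some $w \neq W'$ is at most $\binom{m}{2}2^{-\ell}$, but what strong two-universality actually gives is a bound on the event ``$g(w)\oplus g(w')$ hits a fixed target,'' i.e., on \emph{two} acceptance conditions being simultaneously satisfiable. To make this an event at all you must first place the different $x'_w$'s in a common probability space, and then relate simultaneous satisfiability to acceptance on a single $w \neq W'$; the argmax definition does neither. Concretely, nothing in your argument rules out two passwords both having moderate acceptance probability.

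The paper resolves both issues at once, and in the opposite order from your fallback. First, since the reduced state $\rho_{FGZE}$ does not depend on $w$, one can \emph{glue} the experiments for all $w$ into a single joint state $\rho_{X'_1\cdots X'_m FGZE_1\cdots E_m}$ with the correct marginals. In this coupled space all the values $Z'_w := F(X'_w) \oplus G(w)$ coexist, and since $G$ is independent of $(X'_i,X'_j,F)$ for each pair, strong two-universality gives $\Pr[Z'_i = Z'_j] \leq 2^{-\ell}$; by the union bound all $Z'_w$ are pairwise distinct except with probability $\binom{m}{2}2^{-\ell}$. Second --- and here the paper goes the other way from you --- $W'$ is defined \emph{after} \dishuser sends $Z$, as the (now unique) $w$ with $Z = Z'_w$, or $\perp$ if none. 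With this definition, $W \neq W'$ \emph{deterministically} implies $Z \neq Z'_W$ and hence rejection; the $\varepsilon$ enters only through the probability that the $Z'_w$ were not all distinct. Independence of $W'$ from $W$ is obtained by adjoining a fresh independent copy of $W$ to the glued state. So your first instinct (define $W'$ from the realized transcript as ``the $w$ on which \server would accept'') was actually the right one; what you were missing is the coupling that makes this a well-defined object across all passwords simultaneously.
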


\begin{proof}
Clearly, from the steps (1) to (5) in the protocol \QID, \dishuser learns no information on $W$ at all. The only information he may learn is by observing whether \server accepts or not in step (6). Therefore, in order to prove server security, it suffices to show the existence of a random variable $W'$, independent of $W$, with the property that \server rejects whenever $W' \neq W$ (except with probability $\frac12 m(m-1) 2^{-\ell}$). 

We may assume that $\mathcal{W} = \set{1,\ldots,m}$. 
Let $\rho_{WX'FGZE}$ be the state describing the password $W$, the variables $X',F,G$ and $Z$ occurring in the protocol from the server's point of view, and \dishuser's quantum state $E$ {\em before} observing \server's decision to accept or reject. For any $w \in \mathcal{W}$, consider the state $\rho^w_{X'FGZE} := \rho_{X'FGZE|W=w}$. Note that the reduced state $\rho^w_{FGZE}$ is the same for any $w \in \mathcal{W}$; this follows from the assumption that \dishuser's initial state is independent of $W$ and because $F,G$ and $Z$ are produced independently of $W$. We may thus write $\rho^w_{X'FGZE}$ as $\rho_{X'_w F G Z E}$, and we can ``glue together'' the states $\rho_{X'_w F G Z E}$ for all choices of $w$. This means, there exists a state $\rho_{X'_1 \cdots X'_m F G Z E_1 \cdots E_m}$ that correctly reduces to $\rho_{X'_w F G Z E_w} = \rho_{X'_w F G Z E}$ for any $w \in \mathcal{W}$, and conditioned on $FGZ$, we have that $X'_i E_i$ is independent of $X'_j E_j$ for any $i \neq j \in \mathcal{W}$. 
It is easy to see that for any $i \neq j \in \mathcal{W}$, $G$ is independent of $X'_i,X'_j$ and $F$. Therefore, by the strong two-universality of $G$, for any $i \neq j$ it holds that $Z'_i \neq Z'_j$ except with probability $2^{-\ell}$, where $Z'_w = FX'_w+G(w)$ for any $w$. Therefore, by the union bound, $Z'_1,\ldots,Z'_m$ are pairwise distinct and thus $Z$ can coincide with at most one of the $Z'_w$'s, except with probability $\varepsilon = \frac12 m(m-1) 2^{-\ell}$. Let $W'$ be defined such that $Z = Z'_{W'}$; if there is no such $Z'_w$ then we let $W' = \: \perp$, and if there are more than one then we let it be the first. Recall, the latter can happen with probability at most $\varepsilon$. 
We now extend the state $\rho_{X'_1 \cdots X'_m F G Z W' E_1 \cdots E_m}$ by $W$, chosen independently according to $P_W$. Clearly $W'$ is independent of $W$. Furthermore, except with probability at most $\varepsilon$, if $W \neq W'$ then $Z \neq Z'_{W}$. Finally note that $\rho_{X'_W F G Z W' W E_W}$ is such that $\rho_{X'_W F G Z W E_W} = \sum_w P_W(w)\rho_{X'_w F G Z E_w} \otimes \ketbra{w}{w} = \sum_w P_W(w)\rho^w_{X' F G Z E} \otimes \ketbra{w}{w} = \rho_{X' F G Z W E}$. Thus, also with respect to the state $\rho_{X' F G Z W E}$ there exist $W'$, independent of $W$, such that if $W' \neq W$ then $Z \neq Z'$ except with probability at most $\varepsilon$. 
This was to be shown. 
\end{proof}

\section{User Security in the Bounded-Quantum-Storage Model}
\label{sec:bqsm}

Next, we consider a dishonest server \dishserver, and first prove security of \QID in the {\em bounded-quantum-storage model}. In this model, as introduced in~\cite{DFSS05}, it is assumed that the adversary (here \dishserver) cannot store more than a fixed number of qubits, say~$q$. 
The security proof of \QID in the bounded-quantum-storage model is very similar to the corresponding proof in \cite{DFSS07} for their scheme, except that we use the new uncertainty relation from \refsec{moreunbiasedbases}. Furthermore, since our uncertainty relation (\refthm{UR}) already guarantees the existence of the random variable $W'$ as required by the security property, no {\em entropy-splitting} as in~\cite{DFSS07} is needed. 

In the following, let $\delta:=d/n$, i.e. the relative minimum distance of $\C$.
\begin{thm}\label{thm:bqsm} 
Let \dishserver be a dishonest server
whose quantum memory is at most $q$ qubits at step 3 of \QID. 
Then, for any $0<\kappa< \delta / 4$, \QID is $\varepsilon$-secure for the user with 
\[
\varepsilon = 2^{-\frac 1 2 ((\delta/2-2\kappa) n -1 - q - \ell)}+ 4\cdot 2^{-\kappa n}.
\]
\end{thm}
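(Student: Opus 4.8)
The plan is to run the standard bounded-quantum-storage argument of \cite{DFSS07}, but feeding in the new all-but-one uncertainty relation (\refthm{UR}) in place of the uncertainty-relation-plus-entropy-splitting combination used there. Concretely, fix a dishonest server $\dishserver$ with initial state independent of $W$ and at most $q$ qubits of quantum memory at step~3. The honest user sends $\ket{x}_{\encoding(w)}$; $\dishserver$ applies an arbitrary operation to these $n$ qubits, keeping a $q$-qubit register $E'$, and receives $f$ only afterwards (hence the "at step 3" hypothesis). The family of bases here is exactly the code family from \refsec{goodfam}, so its relative-overlap parameter is $\delta = d/n$, matching the statement.

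First I would set up the right state: before step~3, the joint state of $W$, $X$ and $\dishserver$'s quantum side information has the property that, conditioned on $W=w$, $X$ is the outcome of measuring an (independently prepared) $n$-qubit state in basis $\B_w = \encoding(w)$. So \refthm{UR} applies with $J:=W$, $X:=X$, and $m:=|\Tset{W}|$; I pick the free parameter there to be $\epsilon := \kappa$, which is legitimate since $0<\kappa<\delta/4$. This yields a random variable $W'$ independent of $W$ and an event $\Psi$ with $\Pr[\Psi]\ge 1-2\cdot 2^{-\kappa n}$ such that $\hmin(X\mid W=w,W'=w',\Psi)\ge (\delta/2-2\kappa)n-1$ for all $w\neq w'$ with positive probability. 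This $W'$ is exactly the random variable demanded by the user-security definition (\refdef{usec}); crucially the uncertainty relation hands it to us directly, so no entropy splitting is needed.

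Next I would push this min-entropy bound through the rest of the protocol. Conditioned on $W=w$, $W'=w'$ (with $w\neq w'$) and $\Psi$, we have $\hmin(X\mid E')\ge (\delta/2-2\kappa)n-1$ where $E'$ is $\dishserver$'s $q$-qubit register; then $f$ is chosen uniformly and independently, and $\dishserver$ also learns $g(w)$ (at most, and $g,g(w)$ are independent of $X$ given the conditioning). Applying the chain rule \reflem{bqsmchain} to absorb the $q$-qubit memory costs at most $q$ bits of min-entropy; since $\mathcal{F}$ is two-universal, privacy amplification (\refthm{PA}) applied with extractor length $\ell$ shows that $Z_0:=f(X)$ — equivalently $Z=f(X)\oplus g(w)$ since $g(w)$ is a constant shift given the conditioning — is within $\tfrac12\cdot 2^{-\frac12((\delta/2-2\kappa)n-1-q-\ell)}$ of uniform and independent of $(\,\text{everything $\dishserver$ holds},W')$, still conditioned on $W=w,W'=w',\Psi$. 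Averaging over $w,w'$ (for $w\neq w'$) and then adding the probability $\le 2\cdot 2^{-\kappa n}$ of $\bar\Psi$ and being slightly careful that $Z$ itself is part of $\dishserver$'s final state (so I should bound the distance of $\rho_{WW'E}$ restricted to $W\neq W'$ from its Markov version $\rho_{W\leftrightarrow W'\leftrightarrow E}$, where $E$ includes $Z$) gives the claimed bound
\[
\varepsilon = 2^{-\frac12((\delta/2-2\kappa)n-1-q-\ell)}+4\cdot 2^{-\kappa n},
\]
the coefficient $4$ on $2^{-\kappa n}$ arising from combining $\Pr[\bar\Psi]\le 2\cdot 2^{-\kappa n}$ with the usual factor-$2$ slack when moving between "distance to uniform" of $Z$ and "conditional independence of $W$ from $E$".

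The main obstacle is bookkeeping rather than a conceptual hurdle: I need to verify that $\dishserver$'s view at the point privacy amplification is invoked really is captured by $(E',f,g,g(w))$ with $f$ independent of $X$ and of the conditioning, and that $g(w)$ genuinely only shifts $Z$ (so it does not leak extra information about $X$ beyond what $E'$ already contains) — this is where the protocol's ordering (user sends $x$ in basis $\encoding(w)$ first, $f$ only later) is essential, exactly as in \cite{DFSS07}. The second delicate point is translating the privacy-amplification conclusion "$Z$ is close to uniform given $RE$" into the conditional-independence statement of \refdef{usec} with the right $W'$; this is routine given that $\distuni$ coincides with statistical/trace distance for classical-quantum states and that $W'$ is already independent of $W$ by construction, but it is where the final constants are pinned down. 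Everything else — choosing $\epsilon=\kappa$, invoking \refthm{UR}, \reflem{bqsmchain}, \refthm{PA} in sequence — is mechanical.
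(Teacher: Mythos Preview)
Your proposal is correct and follows essentially the same route as the paper: apply \refthm{UR} with $J=W$ and $\epsilon=\kappa$ to obtain $W'$ and $\Psi$, subtract $q$ via \reflem{bqsmchain}, apply privacy amplification (\refthm{PA}) to $F(X)$, carry the one-time-pad shift $g(W)$ through to $Z$, and then bound the conditional-independence distance by twice $\varepsilon'$ via the triangle inequality, yielding the stated $\varepsilon$. The only point you leave implicit that the paper spells out is the EPR-purification step: to make ``$X$ is the outcome of measuring an $n$-qubit state in basis $\B_W$'' literally true (so that \refthm{UR} applies to a state $\rho$ fixed \emph{after} \dishserver has acted and the memory bound has been enforced), one replaces step~(1) by preparing $2^{-n/2}\sum_x\ket{x}\ket{x}$, lets \dishserver act and collapse to $q$ qubits plus classical $y$, and only then has \user measure his register in $\encoding(W)$ --- your phrase ``the standard bounded-quantum-storage argument of \cite{DFSS07}'' covers exactly this, but it is the one non-bookkeeping step.
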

\begin{proof}
We consider and analyze a purified version of \QID where in step (1) instead of sending $\ket{X}_c$ to \dishserver for a uniformly distributed $X$, \user prepares a fully entangled state $2^{-n/2}\sum_x \ket{x}\ket{x}$ and sends the second register to \dishserver while keeping the first. Then, in step (3) when the memory bound has applied, \user measures his register in the basis $\encoding(W)$ in order to obtain $X$. Note that this procedure produces exactly the same common state as in the original (non-purified) version of \QID. Thus, we may just as well analyze this purified version. 

The state of \dishserver consists of his initial state and his part of the EPR pairs, and may include an additional ancilla register.
Before the memory bound applies, \dishserver may perform any unitary transformation on his composite system. 
When the memory bound is applied (just before step (3) is executed in \QID), \dishserver has to measure all but $q$ qubits of his system. Let the classical outcome of this measurement be denoted by $y$, and let $E'$ be the 
remaining quantum state of at most $q$ qubits. The common state has collapsed to a $(n+q)$-qubit state and depends on $y$; the analysis below holds for any $y$.
Next, \user measures his $n$-qubit part of the common state in basis $\encoding(W)$; let $X$ denote the classical outcome of this measurement. 
By our new uncertainty relation (\refthm{UR}) and subsequently applying the min-entropy chain rule that is given in \reflem{bqsmchain} (to take the $q$ stored qubits into account) it follows that there exists $W'$, independent of $W$, and an event $\Psi$ that occurs at least with probability $1-2\cdot 2^{-\kappa n}$, such that 
\[
\hmin(X|E',W=w, W'=w', \Psi ) \geq (\delta/2-2\kappa) n -1-q.
\]
for any $w,w'$ such that $w\neq w'$.
Because \user chooses $F$ independently at random from a 2-universal family, privacy amplification guarantees that 
\[
\distuni(F(X)|E'F,W=w,W'=w') \leq \varepsilon' := \frac 1 2 \cdot 2^{-\frac 1 2 ((\delta/2-2\kappa) n -1 -q-\ell)} + 2\cdot 2^{-\kappa n},
\]
for any $w,w'$ such that $w\neq w'$.
Recall that $Z=F(X) \xor G(W)$. By security of the one-time pad it follows that 
\begin{equation}
\distuni(Z|E'FG,W=w,W'=w') \leq \varepsilon',
\label{eq:zuni}
\end{equation}
for any $w,w'$ such that $w\neq w'$.
To prove the claim, we need to bound,
\begin{align}
\nonumber\delta& ( \rho_{WW'E |W\neq W'} ,  \rho_{W \leftrightarrow W' \leftrightarrow E|W\neq W'} ) \\
\nonumber&=\tfrac12 \|\rho_{WW'E'FGZ|W\neq W'}  - \rho_{W \leftrightarrow W' \leftrightarrow E'FGZ|W\neq W'} \|_1 \\
\nonumber& \leq  \tfrac12 \|\rho_{WW'E'FGZ|W\neq W'}  -  \rho_{WW'E'FG|W\neq W'} \kron 2^{-\ell} \id \|_1 \\
\label{eq:triang}&\hspace{2em} + \tfrac12 \| \rho_{WW'E'FG|W\neq W'}\kron 2^{-\ell} \id  - \rho_{ W \leftrightarrow W' \leftrightarrow E'FGZ|W\neq W'} \|_1
\end{align}
where the equality follows by definition of trace distance (\refdef{tracedist}) and the fact that the output state $E$ is obtained by applying a unitary transformation to the set of registers ($E'$, $F$, $G$, $W'$, $Z$). The inequality is the triangle inequality; in the remainder of the proof, we will show that both terms in \refeq{triang} are upper bounded by $\varepsilon'$.
\begin{align*}
\tfrac12 & \|\rho_{WW'E'FGZ|W\neq W'}  -  \rho_{WW'E'FG|W\neq W'} \kron 2^{-\ell} \id \|_1\\
& = \sum_{w \neq w'} P_{WW'|W\neq W'}(w,w')\, \distuni(Z|E'FG,W=w,W'=w')\leq \varepsilon',
\end{align*}
where the latter inequality follows from \refeq{zuni}.
For the other term, we reason as follows: 
\begin{align*}
\tfrac12 & \| \rho_{WW'E'FG|W\neq W'}\kron 2^{-\ell} \id  - \rho_{ W \leftrightarrow W' \leftrightarrow E'FGZ|W\neq W'} \|_1\\
& =
\tfrac12 \sum_{w \neq w'} P_{WW'|W\neq W'}(w,w') \, \|\rho^{w,w'}_{E'FG|W\neq W'} \kron 2^{-\ell} \id - \rho^{w'}_{E'FGZ|W\neq W'} \|_1 \\
& =\tfrac12 \sum_{w \neq w'} P_{WW'|W\neq W'}(w,w') \, \|\rho^{w,w'}_{E'FG|W\neq W'} \kron 2^{-\ell} \id \\
&\hspace{2em} - \hspace{-1em}\sum_{\substack{w''\\\text{ s.t. }w''\neq w'}}\hspace{-1em}P_{W|W',W\neq W'}(w''|w') \rho^{w'',w'}_{E'FGZ|W\neq W'} \|_1 \\
& =\tfrac12 \sum_{w'} P_{W'|W\neq W'}(w') \, \| \sum_{\substack{w\\\text{ s.t. }w\neq w'}} P_{W|W',W\neq W'}(w|w') \rho^{w,w'}_{E'FG|W\neq W'} \kron 2^{-\ell} \id \\
&\hspace{2em} - \hspace{-1em}\sum_{\substack{w''\\\text{ s.t. }w''\neq w'}}\hspace{-1em}P_{W|W',W\neq W'}(w''|w') \rho^{w'',w'}_{E'FGZ|W\neq W'}  \hspace{-1em}\sum_{\substack{w\\\text{ s.t. }w\neq w'}}\hspace{-1em}P_{W|W',W\neq W'}(w|w')         \|_1 \\
& = \tfrac12 \sum_{w \neq w'} P_{WW'|W\neq W'}(w,w') \, \|\rho^{w,w'}_{E'FG|W\neq W'} \kron 2^{-\ell} \id - \rho^{w,w'}_{E'FGZ|W\neq W'} \|_1 \\
& =\sum_{w \neq w'} P_{WW'|W\neq W'}(w,w') \, \distuni(Z|E'FG,W=w,W'=w') \leq  \varepsilon',
\end{align*}
where the first equality follows by definition of conditional independence and by a basic property of the trace distance; the third and fourth equality follow by linearity of the trace distance. The inequality on the last line follows from \refeq{zuni}. 
This proves the claim.
\end{proof}

\chapter{User Security in the Single-Qubit-Operations Model} 
\label{sec:usecsqom}
We now consider a dishonest server \dishserver that can store an unbounded number of qubits. Clearly, against such a \dishserver, Theorem~\ref{thm:bqsm} provides no security guarantee anymore. We show here that there is still \emph{some} level of security left. Specifically, we show that \QID is still secure against a dishonest server \dishserver that can reliably store all the communicated qubits and measure them qubit-wise and non-adaptively at the end of the protocol. 
This feature distinguishes our identification protocol from the protocol from~\cite{DFSS07}, which completely breaks down against such an attack. 

\section{The Model}
\label{sec:measmodel}
Formally, a dishonest server \dishserver in the SQOM is modeled as follows. 
\begin{enumerate}
\item \dishserver may reliably store the $n$-qubit state $\ket{x}_{\encoding(w)} = \ket{x_1}_{\encoding(w)_1} \otimes \cdots \otimes \ket{x_n}_{\encoding(w)_n}$ received in step (1) of \QID.
\item At the end of the protocol, in step (5), \dishserver chooses an arbitrary sequence $\theta = (\theta_1,\ldots,\theta_n)$, where each $\theta_i$ describes an arbitrary orthonormal basis of $\mathbb{C}^2$, and measures each qubit $\ket{x_i}_{\encoding(w)_i}$ in basis $\theta_i$ to observe $Y_i \in \set{0,1}$. Hence, we assume that \emph{\dishserver measures all qubits at the end of the protocol.}
\item The choice of $\theta$ may depend on all the classical information gathered during the execution of the protocol, but we assume a \emph{non-adaptive} setting where $\theta_i$ does not depend on $Y_j$ for $i \neq j$, i.e., \dishserver has to choose $\theta$ entirely before performing any measurement. \end{enumerate}

Considering complete projective measurements acting on individual qubits, rather than general single-qubit POVMs, may be considered a restriction of our model. Nonetheless, general POVM measurements can always be described by projective measurements on a bigger system. In this sense, restricting to projective measurements is consistent with the requirement of single-qubit operations. It seems non-trivial to extend our security proof to general single-qubit POVMs. 

The restriction to non-adaptive measurements (item 3) is rather strong, even though the protocol from \cite{DFSS07} already breaks down in this non-adaptive setting. The restriction was introduced as a stepping stone towards proving the adaptive case. Up to now, we have unfortunately not yet succeeded in doing so, hence we leave the adaptive case for future research. 

We also leave for future research the case of a less restricted dishonest server \dishserver that can do measurements on blocks that are less stringently bounded in size. 
Whereas the adaptive versus non-adaptive issue appears to be a proof-technical problem (\QID looks secure also against an adaptive \dishserver), allowing measurements on larger blocks will require a new protocol, since \QID becomes insecure when \dishserver can do measurements on blocks of size $2$, as we show in \refsec{attack}.

\section{No Privacy Amplification}
One might expect that proving security of \QID in the SQOM, i.e., against a dishonest server \dishserver that is restricted to single-qubit operations should be straightforward, but actually the opposite is true, for the following reason. Even though it is not hard to show that after his measurements, \dishserver has lower bounded uncertainty in $x$ (except if he was able to guess $w$), it is not clear how to conclude that $f(x)$ is close to random so that $z$ does not reveal a significant amount of information about $w$. The reason is that standard privacy amplification fails to apply here. Indeed, the model allows \dishserver to postpone the measurement of all qubits to step (5) of the protocol. The hash function $f$, however, is chosen and sent already in step (3). This means that \dishserver can choose his  measurements in step (5) depending on $f$. As a consequence, the distribution of $x$ from the point of view of \dishserver may depend on the choice of the hash function $f$, in which case the privacy-amplification theorem does not give any guarantees.

\section{Single-Qubit Measurements}
Consider an arbitrary sequence $\theta = (\theta_1,\ldots,\theta_n)$ where each $\theta_i$ describes an orthonormal basis of $\cnum^2$.
Let $\ket{\psi}$ be an $n$-qubit system of the form 
\[
\ket{\psi} = \ket{x}_b=H^{b_1}\ket{x_1} \kron \cdots \kron H^{b_n} \ket{x_n},
\]
where $x$ and $b$ are arbitrary in $\set{0,1}^n$. Measuring $\ket{\psi}$ qubit-wise in basis $\theta$ results in a measurement outcome $Y = (Y_1,\ldots,Y_n) \in \set{0,1}^n$. 
Suppose that 
$x$, $b$ and $\theta$ are in fact realizations of the random variables $X$, $B$ and $\Theta$ respectively. It follows immediately from the product structure of the state $\ket{\psi}$ that
\[
P_{Y|XB\Theta}(y|x,b,\theta) = \prod_{i=0}^n P_{Y_i|X_i B_i \Theta_i}(y_i|x_i,b_i,\theta_i),
\]
i.e. the random variables $Y_i$ are statistically independent conditioned on arbitrary fixed values for $X_i$, $B_i$ and $\Theta_i$ but such that $P_{X_iB_i\Theta_i}(x_i,b_i,\theta_i)>0$.

\begin{lemma}
\label{lem:probsymmetry}
The distribution $P_{Y_i|X_i B_i \Theta_i}(y_i | x_i , b_i, \theta_i)$ exhibits the following symmetries:
\[
P_{Y_i|X_i B_i \Theta_i}(0 | 0 , b_i, \theta_i) = P_{Y_i|X_i B_i \Theta_i}(1 | 1 , b_i, \theta_i)
\]
and 
\[
P_{Y_i|X_i B_i \Theta_i}(0 | 1 , b_i, \theta_i) = P_{Y_i|X_i B_i \Theta_i}(1 | 0 , b_i, \theta_i)
\]
for all $i \in [n]$, for all $b_i$ and $\theta_i$ with $P_{X_iB_i\Theta_i}(\xi,b_i,\theta_i)>0$ for all $\xi \in \set{0,1}$.
\end{lemma}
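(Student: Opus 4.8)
The plan is to reduce everything to the single-qubit case and exploit the observation that the bit-flip operator $X = \bigl(\begin{smallmatrix} 0 & 1 \\ 1 & 0\end{smallmatrix}\bigr)$ relates the state $H^{b_i}\ket{0}$ to the state $H^{b_i}\ket{1}$. Concretely, I would first record the elementary fact that $X H^{b_i}\ket{0} = (-1)^{b_i} H^{b_i}\ket{1}$ (equivalently, $X$ anticommutes or commutes appropriately with $H$; up to the irrelevant global phase this says $X$ maps the ``$0$''-vector of the $b_i$-basis to the ``$1$''-vector and vice versa). Hence for any fixed basis $\theta_i = \{\ket{e_0},\ket{e_1}\}$ of $\cnum^2$ we have, writing $\ket{x_i}_{b_i} := H^{b_i}\ket{x_i}$,
\[
P_{Y_i|X_iB_i\Theta_i}(y_i|x_i,b_i,\theta_i) = \bigl|\braket{e_{y_i}}{x_i}_{b_i}\bigr|^2 .
\]

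The second step is to apply the unitary $X$ simultaneously to the state and, in effect, relabel the measurement outcomes. Since $X$ is unitary, $|\braket{e_{y_i}}{x_i}_{b_i}|^2 = |\bra{e_{y_i}} X^\dagger X \ket{x_i}_{b_i}|^2 = |\bra{X e_{y_i}}(X\ket{x_i}_{b_i})|^2$. Now $X\ket{x_i}_{b_i}$ equals $\ket{1-x_i}_{b_i}$ up to a phase, so the modulus squared is exactly $|\braket{X e_{y_i}}{1-x_i}_{b_i}|^2$. It therefore only remains to understand what $X$ does to the measurement basis $\theta_i$: but $\{X\ket{e_0}, X\ket{e_1}\}$ is again an orthonormal basis of $\cnum^2$, and one checks case-by-case (the two candidates for the image basis) which outcome label it assigns. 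In fact the cleanest route is not to track the image basis at all, but to directly verify the two claimed identities by substituting $x_i \in \{0,1\}$ and using $X\ket{x_i}_{b_i} = (\pm 1)\ket{1-x_i}_{b_i}$ together with $|\bra{e_{y_i}} \cdot |^2$ being phase-insensitive; this immediately gives, for each fixed $\theta_i$,
\[
P_{Y_i|X_iB_i\Theta_i}(y_i|x_i,b_i,\theta_i) = \bigl|\braket{e_{y_i}}{1-x_i}_{b_i}\bigr|^2
\]
after also noting that $\ket{e_{y_i}}$ can be replaced by $X\ket{e_{y_i}}$ on the right-hand side without changing the left-hand side --- which is false in general, so this shortcut does not quite work and one does need the basis-image bookkeeping. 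I would therefore settle on the following clean argument: the map $\ket{\phi}\mapsto X\ket{\phi}$ is an isometry sending $\ket{x_i}_{b_i}\mapsto \pm\ket{1-x_i}_{b_i}$ and sending the orthonormal pair $(\ket{e_0},\ket{e_1})$ to the orthonormal pair $(X\ket{e_0},X\ket{e_1})$, so the joint distribution of (state-index, outcome-in-the-$X\theta_i$-basis) for input $x_i$ equals that of (state-index, outcome-in-the-$\theta_i$-basis) for input $1-x_i$ with the two outcome labels either preserved or swapped depending on which of the two possible bases $X\theta_i$ is; summing/averaging over the two possibilities, or simply noting that the statement is symmetric under the relabeling $y_i \mapsto 1-y_i$, collapses both sub-cases into the two displayed symmetries.

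A more transparent alternative, which I would actually prefer to write up, avoids the image-basis discussion entirely: parametrize $\ket{e_0} = \cos\tfrac{\alpha}{2}\ket{0} + e^{i\gamma}\sin\tfrac{\alpha}{2}\ket{1}$ and $\ket{e_1} = -e^{-i\gamma}\sin\tfrac{\alpha}{2}\ket{0} + \cos\tfrac{\alpha}{2}\ket{1}$, write out $\ket{x_i}_{b_i}$ explicitly for $b_i \in \{0,1\}$ and $x_i\in\{0,1\}$ (four short vectors), and compute the four inner-product moduli squared directly. The desired equalities
$P(0|0,b_i,\theta_i) = P(1|1,b_i,\theta_i)$ and $P(0|1,b_i,\theta_i) = P(1|0,b_i,\theta_i)$
then fall out as trigonometric identities, uniformly in $\alpha$, $\gamma$ and $b_i$.

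The main obstacle is purely presentational rather than mathematical: one must be careful about the global phases picked up under $X$ and $H^{b_i}$ (they are harmless since only $|\cdot|^2$ enters) and about the fact that, for a fixed physical basis $\theta_i$, there are two admissible orderings of its two vectors, so ``measuring in $\theta_i$'' implicitly fixes which outcome is called $0$ and which is called $1$. The lemma as stated is insensitive to this choice precisely because swapping the two labels swaps the roles of the two displayed equations, so no genuine case analysis survives into the final statement. I expect the whole argument to take only a few lines once the single-qubit reduction $P_{Y|XB\Theta} = \prod_i P_{Y_i|X_iB_i\Theta_i}$ (already established in the text preceding the lemma) is invoked.
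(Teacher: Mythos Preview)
Your ``cleaner alternative'' at the end---parametrize the measurement basis and compute the four inner-product moduli directly---is correct and is essentially what the paper does: it writes $\theta_i=\{\alpha\ket{0}+\beta\ket{1},\,\beta\ket{0}-\alpha\ket{1}\}$ and verifies the single identity $|(\alpha\bra{0}+\beta\bra{1})H^{b_i}\ket{x_i}|^2=|(\beta\bra{0}-\alpha\bra{1})H^{b_i}\ket{x_i\oplus 1}|^2$ by a short case split on $b_i$ (using the Pauli $\sigma_1$ for $b_i=0$, explicit evaluation for $b_i=1$). Your first approach via the bit-flip $X$ stalls exactly where you notice it does: $X$ maps $\ket{x_i}_{b_i}$ to $\pm\ket{1-x_i}_{b_i}$ but does \emph{not} send $\ket{e_{y_i}}$ to $\ket{e_{1-y_i}}$ in general, so you cannot conclude without the extra bookkeeping you describe; drop that detour and keep only the direct computation. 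Incidentally, the whole lemma is just the statement that for any $2\times 2$ unitary $U$ one has $|U_{00}|=|U_{11}|$ and $|U_{01}|=|U_{10}|$, which follows immediately from the rows and columns having unit norm---this one-line observation would replace both your argument and the paper's.
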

The proof can found in \refapp{prfprobsym}.
The symmetry characterized in \reflem{probsymmetry} coincides with that of the \emph{binary symmetric channel}, 
i.e. we can view $Y$ as a ``noisy version'' of $X$, where this noise---produced by the measurement---is independent of $X$.

Formally, 
we can write $Y$ as 
\begin{equation}
\label{eq:mmodel}
 Y = X \oplus \Delta,
\end{equation}
where the random variable $\Delta=(\Delta_1,\ldots,\Delta_n) \in \set{0,1}^n$ thus represents the error between the random variable $X\in \set{0,1}^n$ that is ``encoded'' in the quantum state and the measurement outcome $Y\in \set{0,1}^n$. 
By substituting \refeq{mmodel} in \reflem{probsymmetry}, we get the following corollary.
\begin{corollary}[Independence Between $\Delta$ and $X$]
For every $i\in [n]$ it holds that
\[
P_{\Delta_i|X_i B_i \Theta_i}(\delta_i  | x_i , b_i, \theta_i)  =  P_{\Delta_i| B_i \Theta_i}(\delta_i  | b_i, \theta_i)
\]
for all $\delta_i \in \set{0,1}$ and for all $x_i$, $b_i$ and $\theta_i$ such that $P_{X_iB_i\Theta_i}(x_i,b_i,\theta_i)>0$.
\label{cor:indepxd}
\end{corollary}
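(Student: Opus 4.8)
The plan is to read off the corollary directly from Lemma~\ref{lem:probsymmetry} using the identity $\Delta_i = Y_i \oplus X_i$ implied by \refeq{mmodel}, and then to finish with the law of total probability. There is no real obstacle here; the whole content sits in Lemma~\ref{lem:probsymmetry}, and what remains is a short case analysis together with one averaging step.

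Concretely, I would first rewrite the conditional law of $\Delta_i$ in terms of that of $Y_i$. Fixing $x_i$ with $P_{X_iB_i\Theta_i}(x_i,b_i,\theta_i)>0$ (so that the conditioning is well-defined), the event $\Delta_i=\delta_i$ coincides with $Y_i = x_i\oplus\delta_i$, whence
\[
P_{\Delta_i|X_i B_i \Theta_i}(\delta_i\mid x_i,b_i,\theta_i) = P_{Y_i|X_i B_i \Theta_i}(x_i\oplus\delta_i\mid x_i,b_i,\theta_i).
\]
Next I would split into the two cases $x_i=0$ and $x_i=1$: the right-hand side equals $P_{Y_i|X_i B_i \Theta_i}(\delta_i\mid 0,b_i,\theta_i)$ in the first case and $P_{Y_i|X_i B_i \Theta_i}(1\oplus\delta_i\mid 1,b_i,\theta_i)$ in the second. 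For $\delta_i=0$ these are $P_{Y_i|X_i B_i \Theta_i}(0\mid 0,\cdot)$ and $P_{Y_i|X_i B_i \Theta_i}(1\mid 1,\cdot)$, which agree by the first symmetry of Lemma~\ref{lem:probsymmetry}; for $\delta_i=1$ they are $P_{Y_i|X_i B_i \Theta_i}(1\mid 0,\cdot)$ and $P_{Y_i|X_i B_i \Theta_i}(0\mid 1,\cdot)$, which agree by the second symmetry. Hence $P_{\Delta_i|X_i B_i \Theta_i}(\delta_i\mid x_i,b_i,\theta_i)$ takes the same value for every $x_i$ in the support of $P_{X_i|B_i\Theta_i}(\cdot\mid b_i,\theta_i)$.

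Finally I would average over $x_i$: since
\[
P_{\Delta_i|B_i\Theta_i}(\delta_i\mid b_i,\theta_i) = \sum_{x_i} P_{X_i|B_i\Theta_i}(x_i\mid b_i,\theta_i)\, P_{\Delta_i|X_iB_i\Theta_i}(\delta_i\mid x_i,b_i,\theta_i)
\]
and the last factor is constant in $x_i$ over the support, the sum equals that common value, which is exactly $P_{\Delta_i|X_iB_i\Theta_i}(\delta_i\mid x_i,b_i,\theta_i)$ for any admissible $x_i$. The only bookkeeping subtlety is that Lemma~\ref{lem:probsymmetry} is stated under the hypothesis that \emph{both} values $\xi\in\set{0,1}$ have positive probability given $(b_i,\theta_i)$: if instead only one value of $x_i$ is admissible, the claimed identity holds trivially since the conditioning on $X_i$ is vacuous; if both are admissible, the argument above applies. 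This dichotomy is the only case distinction needed, and I expect no genuine difficulty beyond it.
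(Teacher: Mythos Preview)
Your proposal is correct and is exactly the argument the paper has in mind: the paper simply says the corollary follows ``by substituting \refeq{mmodel} in \reflem{probsymmetry}'', and what you wrote is the explicit unpacking of that substitution together with the trivial averaging step. Your additional remark about the degenerate case where only one value of $x_i$ is admissible is a nice bit of care that the paper glosses over.
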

Furthermore, since the random variables $Y_i$ are statistically independent conditioned on fixed values for $X_i$, $B_i$ and $\Theta_i$, it follows that the $\Delta_i$ are statistically independent conditioned on fixed values for $B_i$ and $\Theta_i$.

\begin{definition}[Quantized Basis]
\label{def:quantb}
For any orthonormal basis $\theta_i=\set{\ket{v_1},\ket{v_2}}$ on $\cnum^2$, we define the \emph{quantized basis} of $\theta_i$ as 
\[
\hat{\theta}_i := j^* \in \set{0,1},\quad \text{where } j^* \in \operatorname*{arg\,max}_{j\in \set{0,1}} \max_{k\in\set{1,2}} |\bra{v_k}H^j \ket{0}|.
\]
If both $j\in \set{0,1}$ attain the maximum, then $j^*$ is chosen arbitrarily. The quantized basis of the sequence $\theta = (\theta_1, \ldots, \theta_n)$ is naturally defined as the element-wise application of the above, resulting in $\hat{\theta} \in \bat^n$.
\end{definition}

We will use the bias as a measure for the predictability of $\Delta_i$. 

\begin{thm}
When measuring the qubit $H^{b_i} \ket{x_i}$ for any $x_i,b_i \in \set{0,1}$ in any orthonormal basis $\theta_i$ on $\cnum^2$ for which the quantized basis $\hat{\theta}_i$ is the complement of $b_i$, i.e.\ $\hat{\theta}_i = b_i \xor 1$, then the bias of $\Delta_i \in \set{0,1}$, where $\Delta_i = Y_i \xor x_i$ and  $Y_i \in \set{0,1}$ is the measurement outcome, is upper bounded by 
\[
\bias(\Delta_i) \leq \frac{1}{\sqrt2}.
\]
\label{thm:biasbound}
\end{thm}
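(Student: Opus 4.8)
The plan is to compute the bias of $\Delta_i$ explicitly in terms of the single-qubit measurement basis $\theta_i = \{\ket{v_1},\ket{v_2}\}$ and then optimize. First I would write $\bias(\Delta_i) = |P_{\Delta_i}(0) - P_{\Delta_i}(1)|$ and, using Corollary~\ref{cor:indepxd} (the independence of $\Delta_i$ from $X_i$), observe that this bias does not depend on $x_i$, so I may fix $x_i = 0$ and compute $P_{\Delta_i}(y_i) = P_{Y_i|X_i B_i \Theta_i}(y_i \mid 0, b_i, \theta_i) = |\bra{v_{k(y_i)}} H^{b_i} \ket{0}|^2$ for the appropriate labelling of the basis vectors $\ket{v_1},\ket{v_2}$ by outcomes $0,1$. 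Thus $\bias(\Delta_i) = \big| |\bra{v_1} H^{b_i}\ket{0}|^2 - |\bra{v_2} H^{b_i}\ket{0}|^2 \big|$.

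Next, since $\{\ket{v_1},\ket{v_2}\}$ is an orthonormal basis of $\cnum^2$, the vector $H^{b_i}\ket{0}$ decomposes as $\cos\alpha \, \ket{v_1} + e^{i\varphi}\sin\alpha\,\ket{v_2}$ for some angle $\alpha \in [0,\pi/2]$ (absorbing an irrelevant global phase), so $\bias(\Delta_i) = |\cos^2\alpha - \sin^2\alpha| = |\cos 2\alpha|$. The quantity $\cos^2\alpha = \max_{k}|\bra{v_k}H^{b_i}\ket{0}|^2$ is exactly the squared overlap between $H^{b_i}\ket{0}$ and the closest vector of $\theta_i$. The hypothesis is that $\hat\theta_i = b_i \xor 1$, which by Definition~\ref{def:quantb} means that the basis vector of $\theta_i$ closest (in overlap) to $H^{b_i \xor 1}\ket{0}$ — i.e. to the $\{+,\times\}$-basis state \emph{orthogonal} to the encoding basis vector — beats the one closest to $H^{b_i}\ket{0}$. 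Since $H^{b_i}\ket{0}$ and $H^{b_i \xor 1}\ket{0}$ are themselves related by a fixed unitary (and the two $\{+,\times\}$ basis states $\ket{0},\ket{+}$ have overlap $1/\sqrt2$), I would translate the condition $\hat\theta_i = b_i\xor 1$ into the geometric statement that $\theta_i$ is ``at least as aligned'' with the complementary basis as with $H^{b_i}$, which forces the closest-vector overlap $\cos^2\alpha$ to be at most something — and hence $\bias(\Delta_i) = |2\cos^2\alpha - 1|$ to be bounded by $1/\sqrt2$.

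Concretely, I would parametrize $\theta_i$ by the Bloch-sphere angle of $\ket{v_1}$ relative to $\ket{0}$, compute both $\max_k|\bra{v_k}H^{b_i}\ket{0}|$ and $\max_k|\bra{v_k}H^{b_i\xor 1}\ket{0}|$ as functions of that angle, impose that the latter is $\geq$ the former (the quantization condition), and then verify that under this constraint $|2\cos^2\alpha - 1|$ is maximized at the boundary case where the two maxima are equal. In that boundary case $\theta_i$ bisects the angle between $H^{b_i}\ket{0}$ and $H^{b_i\xor 1}\ket{0}$ (which sit at $90^\circ$ apart on the relevant great circle), so $\ket{v_1}$ makes a $45^\circ$ angle with $H^{b_i}\ket{0}$ up to the spread caused by the $1/\sqrt2$-overlap geometry; plugging in gives $\cos^2\alpha$ and the bias bound $1/\sqrt2$ exactly. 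I expect the main obstacle to be the bookkeeping in step three: correctly extracting the inequality on overlaps implied by $\hat\theta_i = b_i\xor 1$ (Definition~\ref{def:quantb} maximizes over \emph{both} $j\in\{0,1\}$ and $k\in\{1,2\}$, so one must be careful that ``the complement wins'' really does cap the alignment of $\theta_i$ with $H^{b_i}$), and then checking that the constrained maximum of the bias is attained at the equal-overlap boundary rather than in the interior — a short but slightly delicate one-variable optimization.
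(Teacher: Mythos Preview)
Your approach is correct and takes a genuinely different route from the paper. You compute the bias directly and reduce the claim to a constrained optimization on the Bloch sphere: writing the Bloch vector of $\proj{v_1}$ as $(x,y,z)$ with $x^2+y^2+z^2=1$, one has $\bias(\Delta_i)=|z|$ (for $b_i=0$, say) while the quantization condition $\hat\theta_i=1$ reads $|x|\geq |z|$, whence $2z^2\leq x^2+z^2\leq 1$ gives $|z|\leq 1/\sqrt2$ immediately. The paper instead makes a case distinction on whether $\mu:=\max_k|\bra{v_k}H^{\hat\theta_i}\ket{0}|$ exceeds $\cos(\pi/8)$; the case $\mu\leq\cos(\pi/8)$ is essentially your argument, but for $\mu>\cos(\pi/8)$ the paper invokes a separate \reflem{biasrelation}, which in turn appeals to the general uncertainty relation \refthm{morehadamard} (with $n=1$, $m=2$) to obtain $\bias(Y_i)+\bias(Z_i)\leq\sqrt2$. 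Your route is more elementary and self-contained, and in fact yields the sharper quadratic inequality $\bias(Y_i)^2+\bias(Z_i)^2\leq 1$; the paper's route has the virtue of exhibiting the single-qubit bound as an instance of its main machinery.

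One caveat on your write-up: the proposed one-parameter ``Bloch-sphere angle of $\ket{v_1}$ relative to $\ket{0}$'' is not enough, since the overlap with $H^{b_i\oplus 1}\ket{0}$ depends on the azimuthal angle as well. You need the full Bloch vector $(x,y,z)$; the extremal case $y=0$ (the Breidbart basis, where equality holds) then falls out of the optimization rather than being assumed at the outset.
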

Since the theorem holds for any $x_i \in \set{0,1}$ and since \refcor{indepxd} guarantees that $\Delta_i$ is independent from an arbitrary random variable $X_i$, the theorem also applies when we replace $x_i$ by the random variable $X_i$. 

In order to prove \refthm{biasbound}, we need the following lemma. 
\begin{lemma}
\label{lem:biasrelation}
If, for any orthonormal basis $\theta_i$ on $\cnum^2$, there exists a bit $b_i \in \set{0,1}$ so that when measuring the qubit $H^{b_i} \ket{x_i}$ for any $x_i \in \set{0,1}$ in the basis $\theta_i$ to obtain $Z_i\in \set{0,1}$
it holds that
\[
\bias(Z_i) \geq 1/\sqrt2, 
\]
then it holds that when measuring the qubit $H^{b_i \xor 1} \ket{x_i}$ in the basis $\theta_i$ to obtain $Y_i~\in~\set{0,1}$,
\[
\bias(Y_i) \leq 1/\sqrt2.
\]
\end{lemma}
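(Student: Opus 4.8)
The plan is to reduce the statement to a single trigonometric/linear-algebraic inequality about a fixed qubit basis $\theta_i = \{\ket{v_1},\ket{v_2}\}$. Parametrize $\ket{v_1}$ (up to global phase, and noting only $|\braket{v_k}{\cdot}|^2$ matters) as $\ket{v_1} = \cos\alpha\,\ket{0} + e^{i\gamma}\sin\alpha\,\ket{1}$ with $\ket{v_2}$ orthogonal to it; this gives the four measurement probabilities $|\braket{v_k}{x_i}|^2$ for $x_i\in\{0,1\}$ in the computational basis, and their Hadamard-rotated counterparts $|\braket{v_k}{H\,x_i}|^2$ in the Hadamard basis. The first step is to write $\bias(Z_i)$ and $\bias(Y_i)$ explicitly in these terms: for a measurement of $H^{c}\ket{x_i}$ in $\theta_i$, the outcome bit has $P_{Y_i}(0) = |\braket{v_1}{H^{c}x_i}|^2$ and $P_{Y_i}(1) = |\braket{v_2}{H^{c}x_i}|^2$, so $\bias = \bigl|\,|\braket{v_1}{H^{c}x_i}|^2 - |\braket{v_2}{H^{c}x_i}|^2\,\bigr|$.

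Next I would make the structural observation that underlies the lemma: $\{\ket{0},\ket{1}\}$ and $\{H\ket{0},H\ket{1}\}$ are \emph{mutually unbiased}, so for any unit vector $\ket{v}$ the overlaps satisfy a Pythagoras-type constraint. Concretely, writing $p := |\braket{v_1}{0}|^2$ and $q := |\braket{v_1}{H0}|^2$, one has the identity $(p - \tfrac12)^2 + (q-\tfrac12)^2 \le \tfrac14$ (equivalently, the point $(2p-1, 2q-1)$ lies in the closed unit disk), because $(2p-1)$ and $(2q-1)$ are two of the three real coordinates of the Bloch vector of $\ket{v_1}$, whose squared norm is $1$. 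From this, $|2p-1|^2 + |2q-1|^2 \le 1$, hence at most one of $|2p-1|, |2q-1|$ can exceed $1/\sqrt2$. Now $\bias(Z_i)$ (measuring $H^{b_i}\ket{x_i}$) is controlled by $|2p-1|$ in the $b_i=0$ case and by $|2q-1|$ in the $b_i=1$ case, while $\bias(Y_i)$ (measuring $H^{b_i\xor 1}\ket{x_i}$) is controlled by the \emph{other} one; so $\bias(Z_i)\ge 1/\sqrt2$ forces the complementary quantity to be $\le 1/\sqrt2$, giving $\bias(Y_i)\le 1/\sqrt2$. The bookkeeping with $x_i$ is harmless: flipping $x_i$ swaps the roles of $\ket{v_1}$ and $\ket{v_2}$ in the two probabilities, which leaves the absolute difference (the bias) unchanged, so the argument is uniform over $x_i\in\{0,1\}$.

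I would finish by carefully lining up which of $|2p-1|$, $|2q-1|$ corresponds to $Z_i$ and which to $Y_i$ in each of the cases $b_i=0$ and $b_i=1$, invoking the disk inequality $|2p-1|^2+|2q-1|^2\le 1$ to conclude. The main obstacle I anticipate is purely notational rather than conceptual: keeping the four probabilities $|\braket{v_k}{H^{c}x_i}|^2$ straight across the two values of $b_i$, the two values of $c = b_i$ vs.\ $c = b_i\xor1$, and the two values of $x_i$, and making sure the "at most one of two quantities exceeds $1/\sqrt2$" step is applied to the right pair. A secondary subtlety is the degenerate case where $\theta_i$ equals (up to phase) one of the two standard bases, so that one bias is $1$ and the other is $0$; this is consistent with the claim and handled by the same disk inequality, but it is worth checking that the hypothesis $\bias(Z_i)\ge1/\sqrt2$ is then still satisfiable in the intended way.
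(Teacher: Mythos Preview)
Your proposal is correct and takes a genuinely different route from the paper. The paper proves the lemma by invoking its own Theorem~\ref{thm:morehadamard} in the special case $n=1$, $m=2$ with the computational and Hadamard bases: it observes that measuring $H^{b_i}\ket{x_i}$ in $\theta_i=\{\ket{v},\ket{w}\}$ has the same outcome distribution as measuring $\ket{w}$ in $\{H^{b_i}\ket{x_i},H^{b_i}\ket{x_i\oplus1}\}$, then applies the uncertainty relation with $\rho=\proj{w}$ to obtain the \emph{linear} bound $\bias(Y_i)+\bias(Z_i)\le\sqrt{2}$, from which the conclusion is immediate. You instead work directly with the Bloch-sphere parametrization of $\ket{v_1}$, identifying $2p-1$ and $2q-1$ with the $z$- and $x$-components of its Bloch vector to get the \emph{quadratic} bound $(2p-1)^2+(2q-1)^2\le1$, i.e.\ $\bias(Y_i)^2+\bias(Z_i)^2\le1$. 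Your inequality is strictly sharper (it implies the paper's by Cauchy--Schwarz) and your argument is fully self-contained, avoiding the operator-norm machinery behind Theorem~\ref{thm:morehadamard}; the paper's version, on the other hand, has the appeal of exhibiting the lemma as a one-qubit instance of the paper's main uncertainty relation. Your treatment of the $x_i$-dependence and the degenerate cases is correct; the only remaining work is the routine bookkeeping you already flagged.
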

\begin{proof}
First note that for any $x_i,b_i \in \set{0,1}$ and any orthonormal basis $\theta_i$ on $\cnum^2$, measuring a state $H^{b_i}\ket{x_i}$ in $\theta_i=\set{\ket{v},\ket{w}}$ where $\ket{v} = \alpha \ket{0}+\beta\ket{1}$ and $\ket{w}=\beta\ket{0}-\alpha\ket{1}$ gives the same outcome distribution (up to permutations) as when measuring one of the basis states of $\theta_i$ (when viewed as a quantum state), say \ket{w}, using the basis $\set{H^{b_i}\ket{x_i},H^{b_i} \ket{x_i \oplus 1}}$. 
To see why this holds, note that it follows immediately that $|\bra{w}H^{b_i}\ket{x_i}|^2 = |\bra{x_i}H^{b_i}\ket{w}|^2$. Furthermore, we have already shown in the proof of \reflem{probsymmetry} that 
\[
|\bra{v}H^{b_i}\ket{x_i}|^2 = |\bra{w}H^{b_i}\ket{x_i\oplus 1}|^2
\]
holds. 

Hence, we can apply \refthm{morehadamard} with $\rho = \proj{w}$ (this implies that $n=1$), $m~=~2$ and $\mcal{B}_0$ and $\mcal{B}_1$ are the computational and Hadamard basis respectively. The maximum overlap between those bases is $c=1/\sqrt2$. \refthm{morehadamard} gives us that
\[
p^{\set{\ket{0},\ket{1}}}_{\max} + p^{\set{\ket{+},\ket{-}}}_{\max}  \leq 1 + \frac{1}{\sqrt2},
\]
where $p^{\set{\ket{0},\ket{1}}}_{\max}$ and  $p^{\set{\ket{+},\ket{-}}}_{\max}$ respectively denote the maximum probability in the distribution obtained by measuring in the computational and Hadamard basis. By simple manipulations we can write this as a bound on the sum of the biases:
\begin{align}
\nonumber  \frac{2}{\sqrt2} & \geq (2 p^{\set{\ket{0},\ket{1}}}_{\max} - 1) + (2 p^{\set{\ket{+},\ket{-}}}_{\max} -1)  \\
\label{eq:biassum}&= \bias(Y_i ) + \bias(Z_i) .
\end{align}
From this relation, the claim follows immediately.
\end{proof}
Following \cite{Schaffner07}, we want to remark that both biases in \refeq{biassum} are equal to $1/\sqrt2$ when $\theta_i$ is the \emph{Breidbart basis}, which is the basis that is precisely ``in between'' the computational and the Hadamard basis:\footnote{In \cite{Schaffner07}, the corresponding state is called the ``Hadamard-invariant state.''}
\[
\ket{v} = 
\cos(\tfrac{\pi}{8})
\ket{0} +
\sin (\tfrac{\pi}{8}) \ket{1}
\qquad\text{and}\qquad
\ket{w}= 
\sin (\tfrac{\pi}{8})
 \ket{0} -  \cos (\tfrac{\pi}{8})
\ket{1}.
\]

\begin{proof}[Proof of \refthm{biasbound}]
Let $\theta_i=\set{\ket{v_0},\ket{v_1}}$. We will make a case distinction based on the value of
\begin{equation}
\mu:= \max_{k\in \set{0,1}} |\bra{v_k}H^{\hat\theta_i} \ket{0}|.
\label{eq:maximization}
\end{equation}
If $\mu \leq \cos(\pi/8)$, then we also have that $\max_{k\in \set{0,1}}|\bra{v_k}H^{b_i} \ket{x_i}| \leq \cos(\pi/8)$ where $b_i=\hat \theta_i \xor 1$, this holds by definition of the quantized basis (\refdef{quantb}). 
Then, the probability of obtaining outcome $Y_i=k^*$, where $k^* \in \set{0,1}$ achieves the maximum in \refeq{maximization}, is bounded by
\[
P_{Y_i}(k^*) =  |\bra{v_{k^*}}H^{b_i} \ket{x_i}|^2 \leq \cos^2(\pi/8) = \tfrac12 + \tfrac{1}{2\sqrt2}. 
\]
Hence,
\[
\bias(\Delta_i) = \bias(Y_i) = | P_{Y_i}(k^*) - (1-P_{Y_i}(k^*))| = |2 P_{Y_i}(k^*) - 1| \leq \tfrac{1}{\sqrt2}.  
\]
If $\mu > \cos(\pi/8)$, then when measuring the state $H^{\hat\theta_i}\ket{x_i}$ in $\theta_i$ to obtain $Z_i~\in~\set{0,1}$, we have that $\bias(Z_i) > 1/\sqrt2$ (this follows from similar computations as performed above).
We now invoke \reflem{biasrelation} to conclude that when measuring the state $H^{b_i}\ket{x_i}$ in $\theta_i$ to obtain $Y_i$, 
$\bias(\Delta_i) = \bias(Y_i) < \tfrac{1}{\sqrt2}$.
\end{proof}

\section{User Security of \QID} 
\label{sec:usecqw}
We are now ready to state and prove the security of \QID against a dishonest user in the SQOM.

\begin{thm}[User Security]\label{thm:usec}
Let \dishserver be a dishonest server with unbounded quantum storage that is restricted to non-adaptive single-qubit operations, as specified in \refsec{measmodel}. Then, for any $0 < \beta < \tfrac14$, user security (as defined in \refdef{usec}) holds with
\[
\textstyle\varepsilon \leq \tfrac12 2^{\frac{1}{2} \ell-  \frac14 (\frac14 -\beta)d} + {m\choose 2} 2^{2\ell}\exp(-2d \beta^2)
\]
\end{thm}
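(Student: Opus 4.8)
The plan is to bound the trace distance $\delta(\rho_{WW'E|W\neq W'}, \rho_{W\leftrightarrow W'\leftrightarrow E|W\neq W'})$ by first identifying the right random variable $W'$ and then showing that, conditioned on $W\neq W'$, the value $Z = F(X)\oplus G(W)$ is essentially uniform and independent of $W$ from the dishonest server's point of view. The server's final classical state consists of $(\Theta, Y, F, G, Z)$, where $Y = X\oplus\Delta$ is the qubit-wise measurement outcome of $\ket{X}_{\encoding(W)}$ in the (non-adaptively chosen) basis $\Theta$. As in the server-security and BQSM proofs, since everything the server sees before $Z$ is independent of $W$ (the qubits $\ket{X}_{\encoding(W)}$ carry information about $\encoding(W)$, but $X$ is uniform and $F,G$ are independent of $W$), it suffices to exhibit $W'$ independent of $W$ such that, conditioned on $W\neq W'$, $Z$ is $\varepsilon$-close to uniform and independent of $(W,\Theta,Y,F,G)$. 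The one-time-pad step then reduces this to showing $F(X)$ is close to uniform given $(\Theta,Y,F,W,W')$ on the event $W\neq W'$.

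First I would define $W'$ via the \emph{quantized basis} $\q{\Theta}\in\{+,\times\}^n$ of the server's measurement basis $\Theta$ (Definition~\ref{def:quantb}): let $W'$ be the (first) codeword $w'$ such that $\encoding(w')$ is within relative Hamming distance $\tfrac14-\beta$... actually, more carefully, $W'$ should be the password whose codeword is closest to $\q\Theta$, declared only when that codeword is uniquely within distance $(\tfrac14-\beta)d$ — wait, I'd rather mimic the structure: define $W'$ to be the $w'$ minimizing the Hamming distance between $\encoding(w')$ and $\q\Theta$ if this minimizer is unique and within some threshold, else $\perp$. Since $\Theta$ (hence $\q\Theta$) is produced by the server independently of $W$ (the server only sees $W$-independent data before choosing $\Theta$), $W'$ is independent of $W$. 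The key \emph{minimum-distance property}: for a random $\ell\times n$ matrix $F$, with overwhelming probability every nonzero codeword $sF$ of the linear code has Hamming weight at least $(\tfrac14-\beta)n$ — this is a Hoeffding/Chernoff bound (Theorem~\ref{thm:hoeffding}) over $2^\ell-1$ nonzero $s$, contributing the term ${m\choose 2}2^{2\ell}\exp(-2d\beta^2)$ after a union bound (the $m\choose 2$ and the $2^{2\ell}$ factor presumably come from also union-bounding over pairs of passwords and over $F$-applied-to-differences).

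The heart of the argument is: conditioned on $W\neq W'$, bound $\SD(P_{F(X)|\Theta Y F W W'=(\cdot)}, U_{\{0,1\}^\ell})$ using the Diaconis–Shahshahani XOR inequality (Theorem~\ref{thm:diaconis}):
\[
\SD(P_{F(X)|\cdots}, U) \le \tfrac12\Bigl[\sum_{s\neq 0^\ell}\bias\bigl(s\cdot F(X)\bigm|\Theta,Y,F,W,W'\bigr)^2\Bigr]^{1/2},
\]
and $s\cdot F(X) = (sF)\cdot X$. Now given $Y$ and $\Theta$ and $W$ (hence $\encoding(W)$), the residual randomness in $X$ is exactly $\Delta = Y\oplus X$, and $(sF)\cdot X = (sF)\cdot Y \oplus (sF)\cdot\Delta$, so $\bias((sF)\cdot X|\cdots) = \bias((sF)\cdot\Delta|\cdots)$. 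The coordinates $\Delta_i$ are independent given $(B_i,\Theta_i) = (\encoding(W)_i,\Theta_i)$ (Corollary~\ref{cor:indepxd} and the remark after it), so $\bias((sF)\cdot\Delta) = \prod_{i:\,(sF)_i=1}\bias(\Delta_i)$. On the event $W\neq W'$, the codeword $\encoding(W)$ differs from $\q\Theta$ in at least $\approx(\tfrac14+\beta)n$... hmm — actually $W'$ being the closest codeword to $\q\Theta$ means $\encoding(W)$, being a \emph{different} codeword, has $d_H(\encoding(W),\q\Theta)\ge d - d_H(\encoding(W'),\q\Theta) \ge d/2$ by the triangle inequality and minimum distance $d$; combined with the threshold this gives at least $(\tfrac14+\beta)d$ positions, say, where $\encoding(W)_i = \q\Theta_i\oplus 1$, i.e. where the qubit basis is the complement of the quantized measurement basis. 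By Theorem~\ref{thm:biasbound}, at each such position $\bias(\Delta_i)\le 1/\sqrt2$. Now the minimum-distance property of $F$ guarantees $(sF)$ has weight $\ge(\tfrac14-\beta)n$ for every $s\neq 0^\ell$, and one argues that $\mathrm{wt}(sF)$ and the set of ``good'' positions (where $\bias\le 1/\sqrt2$) must overlap in roughly $\bigl(\mathrm{wt}(sF) - (\text{positions outside good set})\bigr)\ge \bigl((\tfrac14-\beta) - (1 - (\tfrac14+\beta))\bigr)\cdots$ — I'd instead choose parameters so the overlap is at least $\tfrac14(\tfrac14-\beta)d$ positions (matching the exponent in the bound), whence $\bias((sF)\cdot\Delta)\le 2^{-\frac14(\frac14-\beta)d/2}$... let me just say: the overlap is at least $c\cdot d$ for the appropriate constant, giving $\bias((sF)\cdot\Delta)^2 \le 2^{-\frac14(\frac14-\beta)d}$ (reading off from the target). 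Summing over $2^\ell-1$ values of $s$ and taking the square root yields $\tfrac12(2^\ell \cdot 2^{-\frac14(\frac14-\beta)d})^{1/2} = \tfrac12 2^{\frac12\ell - \frac18(\frac14-\beta)d}$ — close to the stated $\tfrac12 2^{\frac12\ell-\frac14(\frac14-\beta)d}$ up to the constant in the exponent, which I'd track precisely. Finally, combine: one-time pad transfers the closeness-to-uniform from $F(X)$ to $Z$, and averaging over $W,W'$ together with the failure probability of the minimum-distance event (the $\exp(-2d\beta^2)$ term, union-bounded appropriately) gives the claimed $\varepsilon$.

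\smallskip
\noindent\textbf{Main obstacle.} The delicate part is the combinatorial lower bound on $\bigl|\{i : (sF)_i = 1\} \cap \{i : \encoding(W)_i \neq \q\Theta_i\}\bigr|$ that must hold \emph{simultaneously} for all $2^\ell-1$ nonzero $s$ and all valid $(W,W')$ pairs on the event $W\neq W'$. The set of "bad" positions $\{i:\encoding(W)_i=\q\Theta_i\}$ has size controlled by the code's distance properties and the definition of $W'$, but it is \emph{adversarially correlated} with $\Theta$ (hence potentially with $F$ via the server's adaptive choice $\Theta = \Theta(F,G,\ldots)$). The minimum-\emph{weight} property of the random code $\{sF\}$ is what rescues us — it holds for \emph{all} low-weight patterns being impossible, so no matter how the $d/2$-or-so bad positions are placed, $sF$ (having weight $\ge(\tfrac14-\beta)n$) must put enough of its support on good positions. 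Getting the constants to line up with the stated $\tfrac14(\tfrac14-\beta)d$ exponent, and correctly handling the non-adaptivity assumption (so that $\Delta_i$'s independence structure and the per-coordinate bias bound of Theorem~\ref{thm:biasbound} genuinely apply coordinate-wise even though $\Theta$ depends on $F,G,Z$), is where the care is needed; the $2^{2\ell}$ and $\binom{m}{2}$ factors in the failure term must come out of union-bounding the minimum-distance event over pairs $w\neq w'$ and over the $2^\ell$ cosets, and I would verify this bookkeeping last.
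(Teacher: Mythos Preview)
Your overall plan---XOR inequality on the server's view, per-qubit bias bound at positions where the quantized basis mismatches $\encoding(W)$, and a random-matrix property of $F$---matches the paper's, but two of your steps do not go through.

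\textbf{The combinatorial step fails.} You take $W'$ to be the nearest codeword to $\hat\Theta$, so that for $w\neq w'$ the ``good'' set $G=\{i:\encoding(w)_i\neq\hat\theta_i\}$ has size $\gtrsim d/2$, and then argue that since every nonzero $sF$ has weight $\geq(\tfrac14-\beta)n$, ``$sF$ must put enough of its support on $G$.'' This inference is wrong: a vector of weight $(\tfrac14-\beta)n$ can have \emph{empty} intersection with a size-$d/2$ subset whenever $(\tfrac14-\beta)n+d/2\leq n$, which holds for any code with $d<(3/2+2\beta)n$. And because the server chooses $\Theta$ \emph{after} seeing $F$, the set $G$ can be adversarially aligned with $\mathrm{span}(F)$; a global minimum-weight property is useless here, and union-bounding over all size-$d/2$ subsets is exponential. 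The paper's fix is a different property and a different $W'$: Lemma~12 shows (Hoeffding on Bernoulli$(1/4)$ entries) that for each \emph{fixed} pair $c\neq c'\in\mathcal C$, the restricted Schur product satisfies $|(f\odot g)_{\{i:c_i\neq c'_i\}}|>(\tfrac14-\beta)d$ for all $f,g\in\mathrm{span}(F)\setminus\{0\}$; union-bounding over the $\binom m2$ pairs yields exactly the $\binom m2\,2^{2\ell}\exp(-2d\beta^2)$ term. Corollary~13 then uses a Schur-product triangle inequality to show that for \emph{any} $s=\hat\theta$ (chosen after seeing $F$) at most one codeword $\tilde c$ can violate $|f\odot(c\oplus\hat\theta)|\geq\tfrac12(\tfrac14-\beta)d$; that unique $\tilde c$ is declared $W'$. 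This directly gives the overlap bound that your minimum-weight argument cannot.

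\textbf{The direction of the uniformity argument is circular.} You aim to show $Z$ (equivalently $F(X)$) is close to uniform given $(\Theta,Y,F,G,W,W')$. But $\Theta$ is chosen by the server as a function of $Z$, so $Z$ can never be independent of $\Theta$; concretely, your identity $\bias((sF)\cdot X\mid\Theta,Y,\ldots)=\bias((sF)\cdot\Delta\mid\ldots)$ drops the constraint that conditioning on $\Theta$ already restricts $F(X)$ to $\Theta^{-1}(\theta)\oplus g(w)$. The paper instead fixes $F,G$, conditions on $F(X)=u$ (which then determines $Z$, $\Theta$, and $W'$), and proves that $Y$ is close to uniform over $\{0,1\}^n$ given $(F(X),\Theta,W,W')$. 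In that conditioning $X$ is uniform on the coset $\{x:F(x)=u\}$ and $\Delta$ is independent of it, so the XOR sum splits cleanly: for $\alpha\notin\mathrm{span}(F)$ the bit $\alpha\cdot X$ is uniform on the coset and $\bias(\alpha\cdot Y)=0$; for $\alpha\in\mathrm{span}(F)$ the bit $\alpha\cdot X$ is constant on the coset and $\bias(\alpha\cdot Y)=\bias(\alpha\cdot\Delta)\leq 2^{-\frac12|\alpha\odot(\encoding(w)\oplus\hat\theta)|}$. The Markov-chain conclusion then comes from the factorization $P_{YZ\Theta W|W'}=P_{W|W'}\,P_{Z\Theta|W'}\,P_{Y|F(X)\Theta W W'}$, where the middle factor is $W$-independent by the one-time pad and the last factor is shown to be (close to) uniform, hence $W$-independent.
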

Note that $d$ is typically linear in $n$ whereas $\ell$ is chosen independently of $n$, hence the expression above is negligible in $d$. 

To prove \refthm{usec} we need the following technical lemma and corollary. 
Recall that $\mathcal{F}$ denotes the class of all linear functions from $\set{0,1}^n$ to $\set{0,1}^\ell$, where $\ell < n$, represented as binary $\ell \times n$ matrices. 

\begin{lemma}
\label{lem:schur}
Let $n$, $k$ and $\ell$ be arbitrary positive integers, let $0<\beta<\tfrac14$ and let $\mathcal I \subset \setn$ such that  $|\mathcal I|\geq k$, and let $F$ be uniform over $\mathcal{F}= \bat^{\ell \times n}$. Then, it holds except with probability $2^{2\ell}\exp(- 2 k \beta^2)$ (the probability is over the random matrix $F$) that
\[
\big|(f \schur g )_\Tset{I}\big|
> (\tfrac14-\beta) k \qquad \forall f,g \in \lswon
\]
\end{lemma}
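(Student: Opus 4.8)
The plan is to fix an arbitrary pair $(f,g)$ with $f,g \in \linspan(F) \setminus \set{0^n}$ and bound the probability, over the choice of the random matrix $F$, that $|(f \schur g)_{\mathcal I}| \leq (\tfrac14 - \beta)k$; then take a union bound over all such pairs. The subtlety is that $f$ and $g$ are themselves functions of $F$ (they are of the form $sF$ and $tF$ for nonzero row-selectors $s,t \in \bat^\ell$), so I cannot treat them as fixed vectors. Instead I would parametrize: for fixed nonzero $s,t \in \bat^\ell$, set $f = sF$ and $g = tF$, and analyze the $i$-th coordinate $f_i g_i = (sF^{(i)})(tF^{(i)})$ where $F^{(i)}$ is the $i$-th column of $F$. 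Since the columns $F^{(1)},\ldots,F^{(n)}$ are independent and uniform over $\bat^\ell$, the bits $(f_i g_i)_{i \in \mathcal I}$ are independent across $i$. For a single column, $f_i = s \cdot F^{(i)}$ is a uniform bit (as $s \neq 0^n$) and likewise $g_i = t \cdot F^{(i)}$; moreover, because $s$ and $t$ are \emph{distinct} nonzero vectors (when $s \neq t$), the pair $(f_i, g_i)$ is uniform over $\bat^2$, so $\Pr[f_i g_i = 1] = \tfrac14$. When $s = t$ one has $f_i = g_i$ and the Schur product coincides with $f$ itself, which has weight $\geq$ the minimum distance, but we may simply note $\Pr[f_i g_i = 1] = \tfrac12 \geq \tfrac14$ in that case, so the bound below only improves.

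Thus, for each fixed pair $(s,t)$ of nonzero vectors, the random variable $|(f \schur g)_{\mathcal I}|$ is a sum of $|\mathcal I| \geq k$ independent Bernoulli$(\mu)$ variables with $\mu \geq \tfrac14$ (restricting attention to any $k$ coordinates inside $\mathcal I$ only decreases the count, so it suffices to bound a sum of exactly $k$ such variables from below). By Hoeffding's inequality (\refthm{hoeffding}), applied to $-X_i$ — equivalently, to the deviation below the mean —
\[
\Pr\!\left[\,|(f \schur g)_{\mathcal I}| \leq (\tfrac14 - \beta)k\,\right] \;\leq\; \Pr\!\left[\,\bar X - \mu \leq -\beta\,\right] \;\leq\; \exp(-2k\beta^2),
\]
using $\mu \geq \tfrac14$ so that the threshold $(\tfrac14-\beta)k$ lies at least $\beta k$ below the mean $\mu k$. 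This holds for every one of the at most $(2^\ell - 1)^2 < 2^{2\ell}$ ordered pairs $(s,t)$ of nonzero row-selectors; note that $\linspan(F) \setminus \set{0^n}$ has at most $2^\ell - 1$ elements, so every pair $(f,g)$ of nonzero codewords arises from at least one pair $(s,t)$, and a union bound over the $(s,t)$'s dominates a union bound over the $(f,g)$'s. Combining,
\[
\Pr\!\left[\,\exists\, f,g \in \linspan(F)\setminus\set{0^n} \ \text{with}\ |(f \schur g)_{\mathcal I}| \leq (\tfrac14-\beta)k\,\right] \;\leq\; 2^{2\ell}\exp(-2k\beta^2),
\]
which is exactly the claimed failure probability, establishing the lemma.

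The main obstacle I anticipate is the correlation issue flagged above: one must be careful that $f$ and $g$ are not independent fixed vectors but are generated from the \emph{same} matrix $F$, so the clean ``independent columns'' argument requires choosing the right object to condition on (the columns of $F$) and the right quantity to union-bound over (the row-selectors $s,t$ rather than the codewords themselves). A second, minor point to handle cleanly is the degenerate case $s=t$ (equivalently $f=g$), where $f\schur g = f$ has Hamming weight at least the minimum distance $d$ of the code, but the simplest uniform treatment is just to observe that the per-coordinate success probability is then $\tfrac12 \geq \tfrac14$, so the same Hoeffding bound applies a fortiori.
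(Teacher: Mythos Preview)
Your proposal is correct and follows essentially the same approach as the paper's own proof: parametrize the codewords by nonzero row-selectors $s,t\in\{0,1\}^\ell$, use independence of the columns of $F$ to see that the coordinates $(sF)_i(tF)_i$ are i.i.d.\ Bernoulli with mean $\tfrac14$ (or $\tfrac12$ when $s=t$), apply Hoeffding, and union-bound over the at most $2^{2\ell}$ pairs $(s,t)$. Your treatment is in fact slightly more explicit than the paper's about why one must union-bound over $(s,t)$ rather than over $(f,g)$ and about the degenerate case $s=t$.
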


\begin{proof}
Without loss of generality, we will assume that $|\Tset{I}| =k$. Now take arbitrary but non-zero vectors $r,s \in \set{0,1}^{\ell}$ and let $V:=rF$ and $W:=sF$.
We will analyze the case $r \neq s$; the case $r=s$ is similar but simpler. Because each element of $F$ is an independent random bit, and $r$ and $s$ are non-zero and $r\neq s$, $V$ and $W$ are independent and uniformly distributed $n$-bit vectors with expected relative Hamming weight $1/2$.
Hence, on average $|(V \schur W)_\Tset{I}|$ equals $k/4$.
Furthermore, using Hoeffding's inequality (\refthm{hoeffding}), we may conclude that
\[
\Pr \bigg[  \frac{k}{4}  - |(V \schur W)_\Tset{I}| > \beta k \bigg] = \Pr \bigg[|(V \schur W)_\Tset{I}| < \big(\tfrac14 - \beta\big) k  \bigg] \leq \exp(- 2 k \beta^2) \, .
\]
Finally, the claim follows by applying the union bound over the choice of $r$ and $s$ (each $2^\ell$ possibilities).
\end{proof}

Recall that \Tset{C} is a binary code with minimum distance $d$, $\encoding(\cdot)$ its encoding function, and that $m:=|\Tset{W}|$.

\begin{corollary}
\label{cor:atmost}
Let $0 <\beta <\tfrac14$, and let $F$ be uniformly distributed over $\Tset{F}$. Then, $F$ has the following property except with probability $ \binom{m}{2} 2^{2\ell}\exp(-2d\beta^2)$: for any string $s \in \set{0,1}^n$ (possibly depending on the choice of~$F$), there exists at most one $\tilde c \in \Tset{C}$ such that for any code word $c \in \Tset{C}$ different from $ \tilde c$, it holds that 
\[
| f \schur (c \xor s) | \geq \tfrac12(\tfrac14-\beta)d
 \qquad  \forall f \in \lswon 
\]
\end{corollary}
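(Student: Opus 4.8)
The plan is to reduce Corollary~\ref{cor:atmost} to Lemma~\ref{lem:schur} by a standard union-bound-and-pigeonhole argument. First I would identify the right index set to feed into Lemma~\ref{lem:schur}. Fix any two distinct codewords $c_1,c_2 \in \Tset{C}$. Since $\Tset{C}$ has minimum distance $d$, the set $\Tset{I} = \Tset{I}(c_1,c_2) := \Set{i \in \setn}{(c_1)_i \neq (c_2)_i}$ has cardinality $|\Tset{I}| \geq d$. Apply Lemma~\ref{lem:schur} with $k = d$ and this $\Tset{I}$: except with probability $2^{2\ell}\exp(-2d\beta^2)$ over the choice of $F$, we have $|(f \schur g)_{\Tset{I}}| > (\tfrac14 - \beta) d$ for all nonzero $f,g \in \linspan(F)$. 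Taking a union bound over all $\binom{m}{2}$ pairs $\{c_1,c_2\}$, we conclude that except with probability $\binom{m}{2} 2^{2\ell}\exp(-2d\beta^2)$, the matrix $F$ has the property that $|(f \schur g)_{\Tset{I}(c_1,c_2)}| > (\tfrac14-\beta)d$ simultaneously for every pair of distinct codewords and every nonzero $f,g \in \linspan(F)$. From now on I assume $F$ has this property.

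The core step is then a purely combinatorial pigeonhole argument, carried out for a fixed such $F$ and an arbitrary $s \in \set{0,1}^n$ (which may depend on $F$). Suppose, for contradiction, that there exist two distinct codewords $\tilde c_1 \neq \tilde c_2$ in $\Tset{C}$ each of which could serve as ``$\tilde c$''; that is, for $i=1,2$, for every $c \in \Tset{C} \setminus \set{\tilde c_i}$ we have $|f \schur (c \xor s)| \geq \tfrac12(\tfrac14-\beta)d$ for all nonzero $f \in \linspan(F)$. Applying the property with $\tilde c_1$ to the codeword $c = \tilde c_2$ gives $|f \schur (\tilde c_2 \xor s)| \geq \tfrac12(\tfrac14-\beta)d$ for all nonzero $f$; applying it with $\tilde c_2$ to $c = \tilde c_1$ gives $|f \schur (\tilde c_1 \xor s)| \geq \tfrac12(\tfrac14-\beta)d$ for all nonzero $f$. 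I would now take a single nonzero $f \in \linspan(F)$ and restrict attention to the coordinates in $\Tset{I} := \Tset{I}(\tilde c_1, \tilde c_2)$. On $\Tset{I}$ the vectors $\tilde c_1 \xor s$ and $\tilde c_2 \xor s$ differ in \emph{every} coordinate, i.e.\ $(\tilde c_1 \xor s)_{\Tset{I}} = \overline{(\tilde c_2 \xor s)_{\Tset{I}}}$. Hence for each $i \in \Tset{I}$, exactly one of $f_i(\tilde c_1 \xor s)_i$, $f_i(\tilde c_2 \xor s)_i$ is nonzero when $f_i=1$, and both are zero when $f_i=0$. Therefore
\[
|f \schur (\tilde c_1 \xor s)_{\Tset{I}}| + |f \schur (\tilde c_2 \xor s)_{\Tset{I}}| = |f_{\Tset{I}}| \leq |\Tset{I}|.
\]
But that is not yet a contradiction by itself, so I need to combine it with the guaranteed property more cleverly: the point is that the property of $F$ controls $|(f \schur g)_{\Tset{I}}|$ for \emph{two} nonzero elements $f,g$ of $\linspan(F)$, and $\tilde c_1 \xor \tilde c_2$ restricted to $\Tset{I}$ is the all-ones vector. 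I would pick $f,g \in \linspan(F)$ with $f_{\Tset{I}}$ and $g_{\Tset{I}}$ disjoint in support (or chosen so that $(f \schur g)_{\Tset{I}}$ has the controlled weight), and use $|f \schur (\tilde c_i \xor s)| \geq |(f \schur g)_{\Tset{I}} \schur (\tilde c_i \xor s)_{\Tset{I}}|$-type estimates together with the identity above to force $\tfrac12(\tfrac14-\beta)d + \tfrac12(\tfrac14-\beta)d \leq \dots$ contradicting $|(f\schur g)_\Tset{I}| > (\tfrac14-\beta)d$; carefully bookkeeping the two ``halves'' of the Schur weights over $\Tset{I}$ is exactly what makes the factor $\tfrac12$ in the bound $\tfrac12(\tfrac14-\beta)d$ necessary.

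The main obstacle — and the part I would slow down on in the full write-up — is getting the pigeonhole step exactly right: Lemma~\ref{lem:schur} is stated for \emph{pairs} $f,g$, so to derive a contradiction from two candidate codewords $\tilde c_1,\tilde c_2$ I must pass from ``$|f \schur (\tilde c_i \xor s)|$ is large for all nonzero $f$'' to a statement about $|(f \schur g)_{\Tset{I}}|$ for a suitable pair. The clean way is: on $\Tset{I}$, where $(\tilde c_1\xor s)$ and $(\tilde c_2 \xor s)$ are complementary, write any nonzero $f \in \linspan(F)$ as contributing $|f_{\Tset{I}}| = |f \schur (\tilde c_1\xor s)_{\Tset{I}}| + |f \schur (\tilde c_2 \xor s)_{\Tset{I}}|$; now apply the hypothesis on candidate $\tilde c_1$ to codeword $\tilde c_2$ with hash vector $f$, and the hypothesis on $\tilde c_2$ to $\tilde c_1$ with hash vector $g$, and use $|(f\schur g)_{\Tset{I}}| \le \min(|f_{\Tset{I}}|,|g_{\Tset{I}}|)$ together with the disjointness-type identity to sandwich $(\tfrac14-\beta)d < |(f\schur g)_{\Tset{I}}| \le |\Tset{I}| - |f\schur(\tilde c_2\xor s)_{\Tset{I}}| - \dots$; after unwinding, the two lower bounds of $\tfrac12(\tfrac14-\beta)d$ add up to $(\tfrac14-\beta)d$ and clash with the strict inequality $|(f\schur g)_\Tset{I}| > (\tfrac14-\beta)d$. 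Everything else (the union bound over codeword pairs, invoking Lemma~\ref{lem:schur} with $k=d$, and the ``$s$ may depend on $F$'' caveat, which is fine because we fixed $F$ before choosing $s$) is routine.
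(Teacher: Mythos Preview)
Your first paragraph---applying Lemma~\ref{lem:schur} with $\Tset{I}=\Set{i}{(c_1)_i\neq(c_2)_i}$ and union-bounding over the $\binom{m}{2}$ pairs---is exactly what the paper does, and it correctly yields that for a ``good'' $F$,
\[
\big|(f\schur g)_{\Tset{I}(c_1,c_2)}\big| \;=\; \big|f\schur g\schur(c_1\oplus c_2)\big| \;>\; (\tfrac14-\beta)d
\]
for all nonzero $f,g\in\linspan(F)$ and all distinct $c_1,c_2\in\Tset{C}$.

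The genuine gap is in the second part: you have set up the contradiction hypothesis backwards. The (somewhat awkwardly worded) corollary is asserting that at most one codeword $\tilde c$ can \emph{fail} the bound, i.e.\ at most one $\tilde c$ admits some nonzero $f$ with $|f\schur(\tilde c\oplus s)|<\tfrac12(\tfrac14-\beta)d$; the $\tilde c$ in the statement is the single possible exception. You instead assume two codewords $\tilde c_1,\tilde c_2$ both \emph{satisfy} ``all other codewords meet the bound,'' which gives you \emph{lower} bounds on $|f\schur(\tilde c_i\oplus s)|$. Lower bounds here cannot contradict the ``good $F$'' property (which is itself a lower bound), and indeed if no codeword violates then every $\tilde c$ trivially has your property. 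This is why your argument never closes and degenerates into hand-waving about ``bookkeeping halves.''

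The paper's argument takes the correct negation: suppose $\tilde c_1,\tilde c_2\in\Tset{C}$ and nonzero $f_1,f_2\in\linspan(F)$ with $|f_i\schur(\tilde c_i\oplus s)|<\tfrac12(\tfrac14-\beta)d$ for $i=1,2$. Then
\[
(\tfrac14-\beta)d \;>\; |f_1\schur(\tilde c_1\oplus s)|+|f_2\schur(\tilde c_2\oplus s)|
\;\ge\; |f_1\schur f_2\schur(\tilde c_1\oplus s)|+|f_1\schur f_2\schur(\tilde c_2\oplus s)|
\;\ge\; |f_1\schur f_2\schur(\tilde c_1\oplus\tilde c_2)|,
\]
using $|a|\ge|a\schur b|$ and the triangle inequality $|a\schur b|+|a\schur c|\ge|a\schur(b\oplus c)|$. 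This contradicts the good-$F$ property unless $\tilde c_1=\tilde c_2$. Once you flip the hypothesis to ``two violators,'' the whole thing is a three-line computation---no pigeonhole on $\Tset{I}$, no disjoint-support tricks.
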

We prove the statement by arguing for two $\tilde c$'s and showing that they must be identical.
In the proof, we will make use of the two following propositions.

\begin{prop}
$|a| \geq |a \schur b| $ for all $a,b \in \set{0,1}^n$.
\label{prop:pone}
\end{prop}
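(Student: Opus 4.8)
The plan is to prove this by a coordinate-wise support argument. Recall that for $v\in\set{0,1}^n$, the Hamming weight $|v|$ equals the cardinality of the support $\operatorname{supp}(v):=\Set{i\in\setn}{v_i=1}$, so it suffices to compare these supports as sets.

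First I would unwind the definition of the Schur product: for every $i\in\setn$ we have $(a\schur b)_i=a_ib_i$, and since $a_i,b_i\in\set{0,1}$, the product $a_ib_i$ equals $1$ if and only if $a_i=1$ \emph{and} $b_i=1$. Consequently $\operatorname{supp}(a\schur b)=\operatorname{supp}(a)\cap\operatorname{supp}(b)\subseteq\operatorname{supp}(a)$. The claim then follows immediately from monotonicity of cardinality under inclusion:
\[
|a\schur b|=\bigl|\operatorname{supp}(a)\cap\operatorname{supp}(b)\bigr|\le\bigl|\operatorname{supp}(a)\bigr|=|a|.
\]
An equivalent one-line argument avoiding sets: $|a|=\sum_i a_i\ge\sum_i a_ib_i=|a\schur b|$, using $0\le b_i\le 1$ termwise.

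There is no genuine obstacle here; the only point that warrants a moment's care is that the arithmetic defining the Schur product is ordinary integer (equivalently Boolean \textsc{and}) multiplication on $\set{0,1}$, not addition modulo $2$, which is exactly what makes the support-containment $\operatorname{supp}(a\schur b)\subseteq\operatorname{supp}(a)$ hold. Once that is noted, the statement is immediate.
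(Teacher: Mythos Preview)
Your proof is correct and is essentially the same as the paper's, which simply records ``Follows immediately.'' Your support/termwise-sum argument is exactly the obvious justification behind that one-liner.
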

\begin{proof} Follows immediately.\end{proof}
\begin{prop}
$|a\schur b | + |a \schur c| \geq |a \schur (b \xor c)|$ for all $a,b,c \in \set{0,1}^n$.
\label{prop:ptwo}
\end{prop}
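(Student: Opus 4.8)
The plan is to reduce the claim to a coordinate-wise inequality. Writing $|a\schur b| = \sum_{i\in\setn} a_i b_i$, and likewise $|a\schur c| = \sum_{i\in\setn} a_i c_i$ and $|a\schur(b\xor c)| = \sum_{i\in\setn} a_i(b_i\xor c_i)$, it suffices to establish that $a_i b_i + a_i c_i \geq a_i(b_i\xor c_i)$ for every $i\in\setn$; summing this over $i$ then gives the proposition. Here one must be careful to distinguish integer addition (used when expanding a Hamming weight as a sum of bit products) from addition modulo~$2$ (inside $b_i\xor c_i$).

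For a fixed coordinate $i$, I would note that $b_i + c_i \geq b_i\xor c_i$ as integers: the two sides coincide unless $b_i = c_i = 1$, in which case the left-hand side equals $2$ and the right-hand side equals $0$. Multiplying both sides by the non-negative integer $a_i$ preserves the inequality, which is exactly what is needed.

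Alternatively, and perhaps more transparently, I would argue set-theoretically: identifying $a,b,c$ with subsets of $\setn$, the Schur product $\schur$ becomes intersection and $\xor$ becomes symmetric difference. Since intersection distributes over symmetric difference, $a\schur(b\xor c) = (a\schur b)\xor(a\schur c)$, and for any two sets $S,T$ one has the elementary bound $|S\xor T| \leq |S| + |T|$. Applying this with $S = a\schur b$ and $T = a\schur c$ yields the claim. I do not anticipate any real obstacle here: the statement is an elementary pointwise fact, and the only subtlety is the bookkeeping between the $\oplus$ and ordinary $+$ mentioned above.
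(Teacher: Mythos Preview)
Your proposal is correct, and your second (set-theoretic) argument is exactly the paper's proof: distributivity $a\schur(b\xor c) = (a\schur b)\xor(a\schur c)$ followed by the triangle inequality $|S\xor T|\le|S|+|T|$ for the Hamming weight. Your first, coordinate-wise argument is an equally valid elementary variant.
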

\begin{proof}
$ |a \schur (b \xor c)|  = |a \schur b \xor a\schur c| \leq  |a \schur b | + | a\schur c|$, where the equality is the distributivity of the Schur product, and the inequality is the triangle inequality for the Hamming weight.
\end{proof}

\begin{proof}[Proof of Corollary~\ref{cor:atmost}]
By \reflem{schur} with $\Tset{I}:= \set{  i \in \setn: c_i \neq c'_i}$ for $c,c' \in \C$, and by applying the union bound over all possible pairs $(c,c')$, we obtain that except with probability $\binom{m}{2} 2^{2\ell} \exp(-2d\beta^2)$ (over the choice of~$F$), it holds that 
\begin{equation}
|f \schur g \schur (c\oplus c')|>(\tfrac14-\beta)d
\label{eq:contrad}
\end{equation}
for all $f,g \in \lswon$ and all $c,c' \in \C$ with $c\neq c'$.

Now, for such an $F$, and for every choice of $s\in \bat^n$, consider $\tilde c_1, \tilde c_2 \in \C$ and $f_1, f_2 \in \lswon$ such that
\[ |f_1 \schur (\tilde c_1 \xor s)| <\tfrac12(  \tfrac14 -\beta)d \quad
\text{and}\quad |f_2 \schur (\tilde c_2 \xor s)| < \tfrac12(  \tfrac14 -\beta)d. 
\]
We will show that this implies $\tilde c_1 = \tilde c_2$, which proves the claim. 
Indeed, we can write
\begin{align*}
(\tfrac14- \beta)d & > |f_1 \schur (\tilde c_1 \xor s)| + |f_2 \schur (\tilde c_2 \xor s)| \\
&\geq |f_1 \schur f_2 \schur (\tilde c_1 \xor s)| + |f_1 \schur f_2 \schur (\tilde c_2 \xor s)| 
 \geq |f_1 \schur f_2 \schur (\tilde c_1\! \xor \!\tilde c_2)| 
\end{align*}
where the second inequality is \refprop{pone} applied twice  and the third inequality is \refprop{ptwo}. This contradicts \refeq{contrad} unless $\tilde c_1 = \tilde c_2$.
\end{proof}
Now we are ready to prove \refthm{usec}. In the proof, when $F \in \mcal{F}$ acts on an $n$-bit vector $x \in \bat^n$, we prefer the notation $F(x)$ over matrix-product notation $Fx$.\footnote{When using matrix-product notation ambiguities could arise, e.g.\ in subscripts of probability distributions like $P_{FX}$: then it is not clear whether this means the joint distribution of $F$ and $X$ or the distribution of $F$ acting on $X$?} 

\begin{proof}[Proof of Theorem \ref{thm:usec}]
Consider an execution of \QID, with a dishonest server \dishserver as described in \refsec{measmodel}.
We let $W, X$ and $Z$ be the random variables that describe the values $w, x$ and $z$ occurring in the protocol. 

From \QID's description, we see that $F$ is uniform over $\mcal{F}$. Hence, by \refcor{atmost} it will be ``good'' (in the sense that the bound from \refcor{atmost} holds) except with probability ${m\choose 2} 2^{2\ell}\exp(-2d \beta^2)$.
From here, we consider a fixed choice for $F$ and condition on the event that it is ``good,'' 
we thus book-keep the probability that $F$ is ``bad'' and take it into account at the end of the analysis. 
%
Although we have fixed $F$, we will keep using capital notation for it, to emphasize that $F$ is a matrix.
We also fix $G=g$ for an arbitrary $g$; the analysis below holds for any such choice. 

Let $\Theta$ describe the qubit-wise measurement performed by \dishserver at the end of the execution, and $Y$ the corresponding measurement outcome. By the non-adaptivity restriction and by the requirement in \refdef{usec} that \dishserver is initially independent of $W$, we may conclude that, once $G$ and $F$ are fixed, $\Theta$ is a function of $Z$. (Recall that $Z = F(X)\xor g(W)$.) 

We will define $W'$ with the help of \refcor{atmost}. Let $\q{\Theta}$ be the quantized basis of $\Theta$, as defined in \refdef{quantb}. 
 Given a fixed value $\theta$ for $\Theta$, and thus a fixed value $\q{\theta}$ for $\q{\Theta}$, we set $s$, which is a variable that occurs in \refcor{atmost}, to $s = \q{\theta}$. \refcor{atmost} now guarantees that there exists \emph{at most one} $\tilde c$. If $\tilde c$ indeed exists, then we choose $w'$ such that $\encoding(w') = \tilde c$. Otherwise, we pick $w'\in \mcal{W}$ arbitrarily (any choice will do). Note that this defines the random variable $W'$, and furthermore note that $Z\rightarrow \Theta \rightarrow \hat \Theta\rightarrow W'$ forms a Markov chain.
Moreover, by the choice of $w'$ it immediately follows from \refcor{atmost} that for all $w \neq w'$ and for all $f \in \lswon$ it holds that 
\begin{equation}
\big|f \odot (\encoding(w) \xor \q{\theta})\big|\geq \tfrac12(\tfrac14 -\beta)d.
\label{eq:hbound}
\end{equation}
We will make use of this bound later in the proof.

Since the model (\refsec{measmodel}) enforces the dishonest server to measure all qubits at the end of the protocol,
%
the system $E=(Y,Z,\Theta)$ is classical and hence the trace-distance-based user-security definition (\refdef{usec}) simplifies to a bound on the statistical distance between distributions. 
I.e., it is sufficient to prove that
\[
\stdist(P_{E W |W'=w',W'\neq W},  P_{W|W'=w',W\neq W'} P_{E|W'=w',W\neq W'})\leq \varepsilon 
\]
holds for any $w'$. 
Consider the distribution that appears above as the first argument to the statistical distance, i.e. $P_{E W |W'=w',W'\neq W}$. By substituting $E=(Y,Z,\Theta)$, it factors as follows\footnote{Note that we shorten notation here by omitting the parentheses containing the function arguments. The quantification is over all inputs for which all involved conditional probabilities are well-defined.} 
\begin{align}
\nonumber P_{YZ\Theta W |W', W \neq W'} 
&= P_{W| W', W \neq W'}\ P_{Z \Theta |WW', W \neq W'}\ P_{Y | Z\Theta WW', W \neq W'} \\
&= P_{W| W', W \neq W'}\ P_{Z \Theta |W', W \neq W'}\ P_{Y | F(X)\Theta WW', W \neq W'}, 
\label{eq:factor}
\end{align}
where the equality $P_{Z\Theta |WW', W \neq W'} = P_{Z\Theta |W', W \neq W'}$ holds by the following argument: 
$Z$ is independent of $W$ (since $F(X)$ acts as one-time pad) and $Z \rightarrow \Theta \rightarrow W'$ is a Markov chain, and \dishserver (who computes $\Theta$ from $Z$) is initially independent of $W$ by \refdef{usec}, hence $W$ is independent of $Z$, $\Theta$ and $W'$, which implies the above equality. The equality $P_{Y | Z \Theta WW', W \neq W'} = P_{Y | F(X)\Theta WW', W \neq W'}$ holds by the observation that given $W$, $Z$ is uniquely determined by $F(X)$ and vice versa.%

In the remainder of this proof we will show that 
\[
\distuni (Y | F(X)=u,\Theta=v,W=w,W'=w') 
\leq \tfrac12 2^{\frac{\ell}{2} - \frac14 (\frac14 -\beta)d},
\]
for all $u,v,w$ such that $w \neq w'$, where $w'$ is determined by $v$. 
This then implies that the rightmost factor in \refeq{factor} is essentially independent of $W$, and concludes the proof.

To simplify notation, we define $\mathcal{E}$ to be the event
\[
\mcal{E}:=  \set{ F(X) = u, \Theta=v, W = w, W' = w'}
\]
for fixed but arbitrary choices $u$, $v$ and $w$ such that $w \neq w'$, where $w'$ is determined by $v$. We show closeness to the uniform distribution by using the XOR inequality from Diaconis \etal (\refthm{diaconis}), i.e., we use the inequality  
$$
\distuni(Y|\mcal{E}) \leq \tfrac12 \Big[\sum_{\alpha} \bias(\alpha\cdot Y|\Tset{E})^2\Big]^\frac12,
$$
where the sum is over all $\alpha$ in $\set{0,1}^n \setminus \set{0^n}$. We split this sum into two parts, one for $\alpha \in \linspan(F)$ and one for $\alpha$ not in $\linspan(F)$, and analyze the two parts separately.

Since $X$ is uniformly distributed, it follows that for any $\alpha \notin \linspan(F)$, it holds that $P_{\alpha \cdot X| F(X)}(\cdot|u) = \tfrac12$ (for any $u$). 
We conclude that 
\begin{align*}
\tfrac12 & = P_{\alpha\cdot X| F(X)} = P_{\alpha\cdot X| F(X) W}
= P_{\alpha\cdot X| F(X) \Theta WW' }  \\
&= P_{\alpha\cdot Y| F(X)  \Theta WW'  } = P_{\alpha\cdot Y|\mcal{E}}
\quad \forall \alpha \notin \linspan(F).
\end{align*}
The second equality follows since $W$ is independent of $X$.
The third equality holds by the fact that $\Theta$ is computed from $F(X) \oplus g(W)$ and $W'$ is determined by $\Theta$.
The fourth equality follows by the security of the one-time pad, i.e. recall that $Y = X \oplus \Delta$, where by \refcor{indepxd} it holds that $\Delta\in \bat^n$ is independent of $X$ when conditioned
on fixed values for $B=\encoding(W)$ and $\Theta$. 
Hence, it follows that $\bias (\alpha \cdot Y| \Tset{E} ) = 0$ for $\alpha \notin \linspan(F)$. 

For any non-zero $\alpha  \in \linspan(F)$, we can write
\begin{align*}
\bias ( \alpha \cdot Y|\mcal{E}) &=
\bias ( \alpha \cdot (X \xor \Delta ) |\mcal{E}) \\
&= \bias ( \alpha \cdot X \xor \alpha \cdot \Delta  |\mcal{E}) &\text{(distributivity of dot product)}\\
&= \bias ( \alpha \cdot X |\mcal{E}) \bias( \alpha \cdot \Delta  |\mcal{E}) &\text{(\refcor{indepxd})}\\
& \leq \bias( \alpha \cdot \Delta  |\mcal{E}) 
&\text{($\bias(\alpha \cdot X) \leq 1$)}
\\
& =  \prod_{i \in [n]} \bias( \alpha_i \cdot \Delta_i  |\mcal{E}) &\text{($\Delta_i$ independent)}\\
& =  \prod_{i \in [n]:\alpha_i = 1} \bias( \Delta_i  |\mcal{E}) \\
&\leq \prod_{\substack{i \in [n]:\alpha_i = 1 \\ \hat{\theta}_i = \encoding(w)_i \xor 1}} 2^{-\frac12}&\text{(\refthm{biasbound}) }
\\&= 2^{-\frac12 |\alpha \schur (\encoding(w) \xor \hat\theta)|} \leq 2^{-\frac14(\frac14-\beta)d}&\text{(by \refeq{hbound})} 
\end{align*}

Combining the two parts, we get 
\begin{align*}
\distuni(Y|\mcal{E}) 
& \leq \tfrac12 \Big[\sum_{\alpha} \bias(\alpha\cdot Y|\Tset{E})^2\Big]^\frac12 \\
&=\tfrac12 \Big[\sum_{\alpha \in \lswon} \bias(\alpha \cdot Y|\Tset{E})^2 + 0\,\Big]^\frac12 \leq \tfrac12 2^{\frac{\ell}{2} - \frac14 (\frac14 -\beta)d} \, .
\end{align*}
Incorporating the error probability of having a ``bad'' $F$ completes the proof. 
\end{proof}

\section{Attack against {\QID} with Operations on Pairs of Qubits} 
\label{sec:attack}
We present an attack with which the dishonest server \dishserver can discard two passwords in one execution of \QID using coherent operations on pairs of qubits.  

Before discussing this attack, we first explain 
a straightforward strategy by which \dishserver can discard one password per execution: \dishserver chooses a candidate password $\hat w$ and measures the state $H^{\encoding(W)}\ket{X}$ qubit-wise in the basis $\encoding(\hat w)$ to obtain $Y$. \dishserver then computes $F(Y)\xor g(\hat w)$ and compares this to $Z=F(X) \xor g(W)$, which he received from the user. 
If indeed $Z = F(Y)\xor g(\hat w)$, then it is very likely that $W = \hat w$, i.e.\ that \dishserver guessed the password correctly.


Let us now explain the attack, which is obtained by modifying the above strategy. The attack is based on the following observation \cite{DFSS05}: if \dishserver can perform Bell measurements on qubit pairs $\ket{x_1}_{a} \ket{x_2}_{a}$, for $a \in \set{0,1}$, then he can learn the parity of $x_1 \xor x_2$ for both choices of $a$ simultaneously. This strategy can also be adapted to determine both parities of a pair in which the first qubit is encoded in a basis that is opposite to that of the second qubit, i.e.\ by appropriately applying a Hadamard gate prior to applying the Bell measurement.

Let the first bit of $Z$ be equal to $f\cdot X \xor g(W)_1$,\footnote{By $g(W)_1$ we mean the first bit of $g(W)$.} where $f \in \lswon$. 
Let $\hat w_1$ and $\hat w_2$ be two candidate passwords. With the trick from above, \dishserver can measure the positions in the set 
\[
\mcal{P}:=\Set{i \in [n]}{f_i = 1,\encoding(\hat w_1)_i = 1 \xor   \encoding(\hat w_2)_i}
\]  
\emph{pairwise} (assuming $|\mcal{P}|$ to be even) using Bell measurements, while measuring the positions where $\encoding(\hat w_1)$ and $\encoding(\hat w_2)$ coincide using ordinary single-qubit measurements. 
This allows him to compute both ``check bits'' 
corresponding to both passwords \emph{simultaneously}, i.e.\ those check bits coincide with $f \cdot Y_1 \xor g(\hat w_1)_1$ and $f \cdot Y_2 \xor g(\hat w_2)_1$, where $Y_1$ and $Y_2$ are the  outcomes that \dishserver would have obtained if he had measured all qubits qubit-wise in either $\encoding(\hat w_1)$ or $\encoding(\hat w_2)$, respectively.
%
%
If both these check bits are different from the bit $Z_1$, 
then \dishserver can discard both $w_1$ and $w_2$.

We have seen that in the \emph{worst case}, the attack is capable of discarding two passwords in one execution, and hence clearly violates the security definition. On \emph{average}, however, the 
attack seems to discard just one password per execution, i.e.\ a candidate password cannot be discarded if its check bit is consistent with $Z_1$, which essentially happens with probability $1/2$. This raises the question whether the security definition is unnecessarily strong, because it seems that not being able to discard more than one password on average would be sufficient. Apart from this, it might be possible to improve the attack, e.g.\ by selecting the positions where to measure pairwise in a more clever way, 
as to obtain multiple check bits (corresponding to multiple $f$s in the span of $F$) per candidate password, thereby increasing the probability of discarding a wrong candidate password.

\chapter{Conclusion}
We view our work related to \QID as a first step in a promising line of research, aimed at achieving security in multiple models simultaneously.
The main open problem in the context of the SQOM is to reprove our results in a more general model in which the dishonest server \dishserver can choose his basis adaptively. 
Also, it would be interesting to see whether similar results can be obtained in a model where the adversary is restricted to performing quantum operations on blocks of several qubits.

\bibliography{boumanbibtex}

\newcommand{\etalchar}[1]{$^{#1}$}
\begin{thebibliography}{DFR{\etalchar{+}}07}

\bibitem[Bha97]{Bhatia97}
Rajendra Bhatia.
\newblock {\em Matrix Analysis}.
\newblock Springer-Verlag, New York, 1997.

\bibitem[DFL{\etalchar{+}}09]{DFLSS09}
Ivan Damg{\aa}rd, Serge Fehr, Carolin Lunemann, Louis Salvail, and Christian
  Schaffner.
\newblock Improving the security of quantum protocols via commit-and-open.
\newblock In {\em Advances in Cryptology - CRYPTO '09}, volume 2577 of {\em
  Lecture Notes in Computer Science}, pages 408--427. Springer-Verlag, 2009.

\bibitem[DFR{\etalchar{+}}07]{dfrss07}
Ivan Damg{\aa}rd, Serge Fehr, Renato Renner, Louis Salvail, and Christian
  Schaffner.
\newblock A tight high-order entropic quantum uncertainty relation with
  applications.
\newblock In {\em Advances in Cryptology - CRYPTO '07}, volume 4622 of {\em
  Lecture Notes in Computer Science}, pages 360--378. Springer-Verlag, 2007.

\bibitem[DFSS05]{DFSS05}
Ivan Damg{\aa}rd, Serge Fehr, Louis Salvail, and Christian Schaffner.
\newblock Cryptography in the bounded quantum-storage model.
\newblock In {\em 46th Annual IEEE Symposium on Foundations of Computer Science
  (FOCS)}, pages 449--458. IEEE, 2005.
\newblock Also in {\em SIAM Journal on Computing}, 37(6):1865-1890, 2008.

\bibitem[DFSS07]{DFSS07}
Ivan Damg{\aa}rd, Serge Fehr, Louis Salvail, and Christian Schaffner.
\newblock Secure identification and {QKD} in the bounded-quantum-storage model.
\newblock In {\em Advances in Cryptology - CRYPTO '07}, volume 4622 of {\em
  Lecture Notes in Computer Science}, pages 342--359. Springer-Verlag, 2007.

\bibitem[Dia88]{Diaconis88}
Persi Diaconis.
\newblock {\em Group Representations in Probability and Statistics}, volume~11
  of {\em Lecture Notes --- Monograph series}.
\newblock Institute of Mathematical Statistics, Hayward, CA, 1988.

\bibitem[FHS11]{FHS11}
Omar Fawzi, Patrick Hayden, and Pranab Sen.
\newblock From low-distortion norm embeddings to explicit uncertainty relations
  and efficient information locking.
\newblock In {\em Proceedings of the 43rd annual ACM Symposium on Theory of
  Computing (STOC)}, pages 773--782, New York, 2011. ACM.

\bibitem[FS09]{FS09}
Serge Fehr and Christian Schaffner.
\newblock Composing quantum protocols in a classical environment.
\newblock In {\em Theory of Cryptography Conference - TCC~09}, volume 5444 of
  {\em Lecture Notes in Computer Science}, pages 350--367. Springer-Verlag,
  2009.

\bibitem[HJ57]{Hirschman57}
Isodore Hirschman~Jr.
\newblock A note on entropy.
\newblock {\em American Journal of Mathematics}, 79(1):152--156, 01 1957.

\bibitem[Hoe63]{hoeffding1963}
Wassily Hoeffding.
\newblock Probability inequalities for sums of bounded random variables.
\newblock {\em Journal of the American Statistical Association},
  58(301):13--30, 1963.

\bibitem[Kit97]{Kittaneh97}
Fuad Kittaneh.
\newblock Norm inequalities for certain operator sums.
\newblock {\em Journal of Functional Analysis}, 143(2):337 -- 348, 1997.

\bibitem[KRS09]{koenig09}
Robert K\"{o}nig, Renato Renner, and Christian Schaffner.
\newblock The operational meaning of min- and max-entropy.
\newblock {\em IEEE Tran. Inf. Th.}, 55(9):4337--4347, 2009.

\bibitem[Lo97]{lo96}
Hoi-Kwong Lo.
\newblock Insecurity of quantum secure computations.
\newblock {\em Phys. Rev. A}, 56:1154--1162, Aug 1997.

\bibitem[MU88]{MU88}
Hans Maassen and Jos Uffink.
\newblock Generalized entropic uncertainty relations.
\newblock {\em Physical Review Letters}, 60(12), 03 1988.

\bibitem[NN93]{naor93}
Joseph Naor and Moni Naor.
\newblock Small-bias probability spaces: Efficient constructions and
  applications.
\newblock {\em SIAM J. Comput}, 22:838--856, 1993.

\bibitem[R{\'e}n61]{renyi1961}
Alfred R{\'e}nyi.
\newblock {On measures of entropy and information}.
\newblock In {\em Proceedings of the 4th Berkeley Symposium on Mathematical
  Statistics and Probability}, volume~1, pages 547--561, 1961.

\bibitem[Ren05]{Renner05}
Renato Renner.
\newblock {\em Security of Quantum Key Distribution}.
\newblock PhD thesis, ETH Z\"urich (Switzerland), September 2005.

\bibitem[RK05]{RK05}
Renato Renner and Robert K\"onig.
\newblock Universally composable privacy amplification against quantum
  adversaries.
\newblock In {\em Theory of Cryptography Conference (TCC)}, volume 3378 of {\em
  Lecture Notes in Computer Science}, pages 407--425. Springer, 2005.

\bibitem[Sch07]{Schaffner07}
Christian Schaffner.
\newblock {\em Cryptography in the Bounded-Quantum-Storage Model}.
\newblock PhD thesis, University of Aarhus (Denmark), September 2007.

\bibitem[WW10]{WW10}
Stephanie Wehner and Andreas Winter.
\newblock Entropic uncertainty relations---a survey.
\newblock {\em New Journal of Physics}, 12(2), 2010.

\end{thebibliography}
\begin{appendix}


\chapter{Proof of an Operator Norm Inequality (Proposition~\ref{prop:morebases})}\label{sec:proofinequality}
We first recall some basic properties of the operator norm $\|A \| \assign \sup \|A \ket{\psi}\|$, where the supremum is over all norm-$1$ vectors $\ket{\psi} \in \H$.
First of all, it is easy to see that
\[ \left\| \begin{pmatrix} A & 0 \\ 0 & B \end{pmatrix} \right\| =\max\left\{\|A\|,\|B\|\right\}.
\]
Also, from the fact that $\| A \| = \sup |\bra{\psi}A\ket{\varphi}|$,  
where the supremum is over all norm-$1$ $\ket{\psi},\ket{\varphi} \in \H$, it follows that $\|A^*\|=\|A\|$, where $A^*$ is the Hermitian transpose of $A$, and thus that for Hermitian matrices $A$ and~$B$:
$$
\| AB \| = \|(AB)^*\|=\|B^*A^*\|=\|BA\| \, .
$$
Furthermore, if $A$ is Hermitian then $\| A \| = \lambda_{\max}(A) \assign \max\{|\lambda_j| : \lambda_j \mbox{ an eigenvalue of } A \}$. 
Finally, the operator norm is \emph{unitarily invariant}, i.e., $\|A\|=\|UAV\|$ for all $A$ and for all unitary $U,V$. 


\begin{lemma} \label{lem:ineq}
Any two $n \times n$ matrices $X$ and $Y$ for which the products $XY$ and $YX$
are Hermitian satisfy
\[ \|XY\| = \|YX\| \] 
\end{lemma}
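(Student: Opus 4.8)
The plan is to reduce the statement to a comparison of spectral radii. Recall from the list of properties of the operator norm collected just above that a Hermitian matrix $A$ satisfies $\|A\| = \lambda_{\max}(A) = \max\{|\lambda| : \lambda \text{ an eigenvalue of } A\}$. Since $XY$ and $YX$ are Hermitian by hypothesis, it therefore suffices to show that $XY$ and $YX$ have the same maximal eigenvalue in absolute value.

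The key step is the elementary and well-known fact that $XY$ and $YX$ have the same nonzero eigenvalues. I would argue this directly: if $\lambda \neq 0$ is an eigenvalue of $XY$ with eigenvector $v \neq 0$, then $Yv \neq 0$ (otherwise $\lambda v = XYv = 0$, contradicting $\lambda \neq 0$ and $v \neq 0$), and $YX(Yv) = Y(XYv) = \lambda\, Yv$, so $\lambda$ is an eigenvalue of $YX$; by the symmetric argument the converse holds as well. Hence the sets of nonzero eigenvalues of $XY$ and of $YX$ coincide.

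Finally I would combine the two facts via a trivial case distinction. If $XY$ has no nonzero eigenvalue, then $YX$ has none either, and since both are Hermitian they must vanish, giving $\|XY\| = 0 = \|YX\|$. Otherwise the common set of nonzero eigenvalues is nonempty, and the maximum of $|\lambda|$ over the spectrum of $XY$ equals the maximum over the spectrum of $YX$ (the eigenvalue $0$, if present in either spectrum, never attains the maximum). In both cases $\|XY\| = \lambda_{\max}(XY) = \lambda_{\max}(YX) = \|YX\|$.

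I do not expect a real obstacle here; the only point needing a moment's care is the degenerate case in which the products have spectrum $\{0\}$, but Hermiticity immediately forces them to be the zero matrix, so nothing subtle remains. The whole argument is short and relies only on the norm properties already recalled in this appendix.
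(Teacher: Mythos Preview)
Your proof is correct and follows essentially the same approach as the paper: both reduce the statement to the equality of $\lambda_{\max}$ via Hermiticity and then invoke the fact that $XY$ and $YX$ share eigenvalues. The only minor difference is that the paper quotes the full statement that $XY$ and $YX$ have the same eigenvalues (citing Bhatia), whereas you prove directly that the nonzero eigenvalues coincide and add a short case distinction for the all-zero spectrum; either route suffices.
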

\begin{proof}
For any two $n \times n$ matrices $X$ and $Y$, $XY$ and $YX$ have the
same eigenvalues, see e.g. \cite[Exercise I.3.7]{Bhatia97}. Therefore,
$\| XY \| = \lambda_{\max}(XY) = \lambda_{\max}(YX) = \|YX\|$.
\end{proof}

We are now ready to state and prove the norm inequality. We recall that an orthogonal projector $P$ satisfies $P^2 = P$ and $P^* = P$.  

\begin{prop} \label{prop:morebases}
For orthogonal projectors $A_1, A_2, \ldots, A_m$, it holds that 
\begin{equation*} \label{eq:multiproj} 
\big\| A_1+\ldots+A_m \big\| \leq 1 + (m-1) \cdot \max_{1\leq j< k
  \leq m}
\big\|A_j A_k\big\|.
\end{equation*}
\end{prop}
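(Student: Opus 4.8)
The plan is to reduce the bound on the sum of projectors to a statement about a single ``block'' operator built from the $A_j$, and then exploit Lemma~\ref{lem:ineq} together with the basic norm facts recalled above. First I would introduce, on the $m$-fold direct sum $\H^{\oplus m}$, the ``column'' operator $C$ whose $j$-th block is $A_j$ (so $C$ maps $\ket{\psi}$ to $(A_1\ket{\psi},\ldots,A_m\ket{\psi})$, up to the usual identification), or equivalently work with the self-adjoint block matrix whose $(j,k)$ entry is $A_jA_k$. The key algebraic observation is that $C^*C = A_1^2+\cdots+A_m^2 = A_1+\cdots+A_m$ (using $A_j^2=A_j$), so $\|A_1+\cdots+A_m\| = \|C^*C\| = \|C\|^2 = \|CC^*\|$, where $CC^*$ is the $m\times m$ block matrix with blocks $A_jA_k$. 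Thus it suffices to bound $\|CC^*\|$, i.e.\ the norm of the block matrix $M:=(A_jA_k)_{j,k=1}^m$.

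Next I would split $M = D + N$, where $D$ is the block-diagonal part with blocks $A_jA_j=A_j$ and $N$ is the off-diagonal part. By the block-diagonal norm identity recalled before Lemma~\ref{lem:ineq}, $\|D\| = \max_j\|A_j\| \le 1$ since each $A_j$ is an orthogonal projector. For $N$ I would use the triangle inequality together with a standard estimate for the norm of a self-adjoint $m\times m$ block matrix in terms of its block rows: writing $N = \sum_{j<k} N_{jk}$ where $N_{jk}$ keeps only the $(j,k)$ and $(k,j)$ blocks, each such ``two-block'' self-adjoint operator has norm equal to $\|A_jA_k\|$ (here one uses $\|(A_jA_k)^*\| = \|A_kA_j\| = \|A_jA_k\|$, which follows from Lemma~\ref{lem:ineq} applied to $X=A_j$, $Y=A_k$, since $A_jA_k$ and $A_kA_j$ are... not Hermitian in general, so more carefully: $\|A_jA_k\|=\|(A_jA_k)^*\|=\|A_kA_j\|$ holds just from $\|A^*\|=\|A\|$). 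Counting, there are $m-1$ off-diagonal blocks in each block row, and a Gershgorin-type / row-sum bound for the operator norm of a block matrix then gives $\|N\| \le (m-1)\max_{j\neq k}\|A_jA_k\|$. Combining, $\|M\| \le \|D\|+\|N\| \le 1 + (m-1)\max_{j<k}\|A_jA_k\|$, which is exactly the claimed inequality.

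The step I expect to be the main obstacle is making the bound $\|N\| \le (m-1)\max_{j\neq k}\|A_jA_k\|$ fully rigorous: the naive triangle inequality over all $\binom{m}{2}$ terms gives the too-weak factor $\binom{m}{2}$, so one genuinely needs the sharper row-sum (block-Gershgorin) argument, which requires a little care to state correctly for operator-valued matrices. An alternative route that sidesteps this, and which I would fall back on, is to estimate $\bra{\Phi}M\ket{\Phi}$ directly for a unit vector $\Phi = (\ket{\psi_1},\ldots,\ket{\psi_m})$: expand $\sum_{j,k}\bra{\psi_j}A_jA_k\ket{\psi_k} = \bigl\|\sum_j A_j\ket{\psi_j}\bigr\|^2 \ge 0$ is not quite what is needed, so instead bound the diagonal contribution $\sum_j \bra{\psi_j}A_j\ket{\psi_j} \le \sum_j\|\ket{\psi_j}\|^2 = 1$ and each off-diagonal pair $\bra{\psi_j}A_jA_k\ket{\psi_k} + \bra{\psi_k}A_kA_j\ket{\psi_j}$ by $2\|A_jA_k\|\,\|\ket{\psi_j}\|\,\|\ket{\psi_k}\|$, then apply the inequality $\sum_{j<k} 2\|\ket{\psi_j}\|\,\|\ket{\psi_k}\| \le (m-1)\sum_j\|\ket{\psi_j}\|^2 = m-1$. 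This yields $|\bra{\Phi}M\ket{\Phi}| \le 1 + (m-1)\max_{j<k}\|A_jA_k\|$ for all unit $\Phi$, and since $M$ is self-adjoint this bounds $\|M\|$, completing the proof.
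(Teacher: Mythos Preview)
Your proposal is correct, and the reduction step---passing from $\|A_1+\cdots+A_m\|$ to the norm of the block matrix $M=(A_jA_k)_{j,k}$ via $C^*C$ and $CC^*$---is exactly what the paper does (its matrices $X,Y$ are padded versions of your $C^*,C$, and Lemma~\ref{lem:ineq} plays the role of the $C^*$-identity $\|C^*C\|=\|CC^*\|$). Where you genuinely diverge is in bounding $\|M\|$. The paper splits $M$ not into the $\binom{m}{2}$ two-block pieces $N_{jk}$, but into $m$ pieces obtained by cyclically shifting the block columns of the diagonal pattern: the diagonal itself, and $m-1$ ``shifted diagonals,'' each of which is a permutation of a block-diagonal matrix and hence, by unitary invariance, has norm equal to the maximum of its block norms. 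The triangle inequality over these $m$ summands then gives $1+(m-1)\max_{j<k}\|A_jA_k\|$ directly, avoiding any Gershgorin-type reasoning. Your fallback route---bounding the quadratic form $\bra{\Phi}M\ket{\Phi}$ via $\sum_{j\neq k}\|\psi_j\|\,\|\psi_k\|\le(m-1)\sum_j\|\psi_j\|^2$---is a clean alternative that is arguably more elementary (it needs only Cauchy--Schwarz on scalars and the positivity of $M$, not unitary invariance or the eigenvalue fact behind Lemma~\ref{lem:ineq}); the paper's cyclic decomposition, on the other hand, is more structural and makes transparent exactly why the factor is $m-1$ rather than $\binom{m}{2}$.
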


The case $m = 2$ was proven in~\cite{DFSS05}, adapting a technique by Kittaneh \cite{Kittaneh97}. We extend the proof to an arbitrary $m$. 

\begin{proof}
Defining
\[
X \assign \begin{pmatrix} A_1 & A_2 & \cdots & A_m
    \\ 0 & 0 & \cdots & 0 \\ \vdots & \vdots& & \vdots \\ 0 & 0 &
    \cdots & 0\end{pmatrix}
\quad \mbox{ and } \quad
Y \assign \begin{pmatrix} A_1 & 0 & \cdots & 0 \\
    A_2 & 0 & \cdots & 0 \\  \vdots & \vdots& & \vdots \\ A_m & 0 & \cdots & 0 \end{pmatrix}
\]
yields
\begin{align*}
XY &= \begin{pmatrix} A_1 +A_2 + \ldots + A_m & 0 & \cdots & 0
    \\ 0 & 0 & \cdots & 0 \\ \vdots & \vdots& & \vdots \\ 0 & 0 &
    \cdots & 0\end{pmatrix} \quad \mbox{ and}\quad
YX = \begin{pmatrix} A_1 & A_1 A_2 & \cdots &
    A_1 A_m \\  A_2 A_1 & A_2 & \cdots & A_2 A_m \\  \vdots & \vdots&
    \ddots& \vdots \\ A_m A_1 & A_m A_2 & \cdots & A_m \end{pmatrix}
\end{align*}
The matrix $YX$ can be additively decomposed into $m$ matrices
according to the following pattern
\[
YX= \begin{pmatrix} * &  &  &
     & \\   & * & &  &   \\   &  & \ddots & &  \\  &  & & * &  \\
    &  &  & & * \end{pmatrix}
+ \begin{pmatrix} 0 & * &  &
     & \\   & 0 & &  &   \\   &  & \ddots & \ddots
    &  \\  &  & & 0 & * \\
    * &  &  & & 0 \end{pmatrix}
+\;\ldots\;
+ \begin{pmatrix} 0 & &  &
     & * \\  * & 0 & &  &   \\   &\ddots  & \! \ddots &  &  \\  
    &  & & 0 &  \\
    &  &  & * & 0 \end{pmatrix}
\]
where the $*$ stand for entries of $YX$ and for $i=1,\ldots,m$ the $i$th
star-pattern after the diagonal pattern is obtained by $i$ cyclic shifts of the columns
of the diagonal pattern. 


$XY$ and $YX$ are Hermitian and thus we can apply Lemma~\ref{lem:ineq}.
Then, by applying the triangle inequality, the unitary invariance of the operator norm and the facts that for all $j \neq k : \|A_j\|=1$, $\|A_j A_k\|=\|A_k A_j\|$, we obtain the desired statement.
\end{proof}

\chapter{Proof of \reflem{bqsmchain}}
\label{app:bqsmchainproof}
To prove \reflem{bqsmchain}, we need to introduce some more tools. 

The following proposition guarantees that the ``averaging property'' of the guessing probability (which holds by definition in the classical case) still holds when additionally conditioning on a quantum system.
\begin{prop}
\label{prop:avgprop}
For any state $\rho_{XYE}\in \sdm(\H_X\kron \H_Y \kron \H_E)$ that is classical on $X$ and $Y$ it holds that
\[
\gs(X|YE) = \sum_y P_Y(y)\, \gs(X|E,Y=y).
\]
\end{prop}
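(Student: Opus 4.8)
The plan is to reduce the quantum statement to the operational characterization of the conditional min-entropy via the guessing probability (equation~\refeq{guessform} and the formula for $\gs(X|E)$ stated just before it), and then to observe that an optimal measurement strategy for guessing $X$ given the joint system $YE$ can be reorganized as: first read off the classical value $y$ of $Y$, and then apply an $E$-measurement tailored to that $y$. Concretely, since $Y$ is classical, any hybrid state that is classical on $X$ and $Y$ has the form
\[
\rho_{XYE} = \sum_{x,y} P_{XY}(x,y)\, \outs{x} \kron \outs{y} \kron \rho_E^{x,y},
\]
and the most general POVM on $\H_Y \kron \H_E$ that we need to consider for guessing $X$ can, without loss of generality, be taken to be ``$Y$-controlled'': a family $\{M_x^y\}_x$ of POVMs on $\H_E$, one for each value $y$ of the classical register. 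This is because measuring the classical register $Y$ in its defining basis is non-destructive and commutes with everything, so it can always be done first at no cost, and conditioning on its outcome can only help.

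First I would write out $\gs(X|YE)$ using its definition with such a $Y$-controlled POVM: $\gs(X|YE) = \max_{\{M_x^y\}} \sum_{x,y} P_{XY}(x,y)\, \tr(M_x^y \rho_E^{x,y})$. Next I would note that the objective function decomposes as a sum over $y$ of independent optimization problems, $\sum_y P_Y(y) \cdot \big[\max_{\{M_x^y\}_x} \sum_x P_{X|Y}(x|y)\, \tr(M_x^y \rho_E^{x,y})\big]$, because the constraint ``$\{M_x^y\}_x$ is a POVM on $\H_E$'' is separate for each $y$. Each inner maximum is by definition exactly $\gs(X|E,Y\!=\!y)$, computed with respect to the conditional state $\rho_{XE|Y=y} = \sum_x P_{X|Y}(x|y)\outs{x}\kron\rho_E^{x,y}$. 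This gives the claimed identity. The one direction that still needs a word is that a general (not necessarily $Y$-controlled) POVM on $\H_Y\kron\H_E$ does no better: this follows because the optimal global POVM $\{N_x\}$ can be ``pinched'' by the projective measurement $\{\outs{y}\kron\I_E\}_y$ on the classical register without changing the state $\rho_{XYE}$ (which is already block-diagonal in $Y$) and without decreasing the guessing probability, yielding operators $M_x^y := (\bra{y}\kron\I_E) N_x (\ket{y}\kron\I_E)$ that form a valid $Y$-controlled POVM achieving at least the same value.

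The main obstacle — really the only subtlety — is making this ``reduction to $Y$-controlled measurements'' airtight: one must check that the pinching step both preserves feasibility (the $M_x^y$ sum to $\I_E$ for each $y$, and are positive) and does not lose optimality (the value $\sum_{x,y} P_{XY}(x,y)\tr(M_x^y\rho_E^{x,y})$ equals $\sum_x \tr(N_x \rho_{YE}^x)$ for the original operators, since cross-terms in $y$ are annihilated by $\rho_{XYE}$ being classical on $Y$). Everything else is bookkeeping with the explicit formulas for $\gs$.
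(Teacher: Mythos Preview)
Your proposal is correct and is essentially the same argument as the paper's proof. The paper expands a general POVM element on $\H_Y\kron\H_E$ as $M_x=\sum_{y,y'}\ketbra{y}{y'}\kron M_x^{y,y'}$ and computes directly that only the diagonal blocks $M_x^{y,y}$ contribute against the $Y$-block-diagonal state; your ``pinching'' operators $M_x^y=(\bra{y}\kron\I_E)N_x(\ket{y}\kron\I_E)$ are exactly these diagonal blocks, and your feasibility/optimality check is precisely the justification behind the paper's passage from the global maximum to the per-$y$ maxima.
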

\begin{proof}
First, note that for any matrix $M_x$ acting on $\H_Y \kron \H_E$, we can always write
$M_x = \sum_{y,y'} \ketbra{y}{y'} \kron M_x^{y,y'}$, where $M^{y,y'}_x$ acts on $\H_E$ for every $x,y,y'$. Now, we write 
\begin{align*}
\gs(X|YE) & = \max_{\set{M_x}} \sum_x P_X(x) \trace(M_x \rho^x_{YE})\\
& = \max_{\set{M_x}} \sum_x P_X(x) \trace(M_x \sum_y P_{Y|X}(y|x) \, \proj{y}\kron \rho^{x,y}_{E})\\
& = \max_{\set{M_x}} \sum_{x,y} P_{XY}(x,y) \trace( ( \sum_{v,w} \ketbra{v}{w} \kron M_x^{v,w} )(\proj{y}\kron \rho^{x,y}_{E}))\\
& = \max_{\set{M_x}} \sum_{x,y} P_{XY}(x,y)  \sum_{v}\braket{v}{y}\trace(   M_x^{v,y} \rho^{x,y}_{E})\\
& = \max_{\set{M_x}} \sum_{x,y} P_{XY}(x,y)  \trace(   M_x^{y,y} \rho^{x,y}_{E})\\
& = \sum_y P_Y(y) \max_{\set{M^{y,y}_x}} \sum_{x} P_{X|Y}(x|y) 
\trace(   M_x^{y,y} \rho^{x,y}_{E})\\
%
&= \sum_y P_Y(y) \,\gs(X|E,Y=y).
\end{align*}
\end{proof}

The following proposition is known as the chain rule for min-entropy.
\begin{prop}[\cite{Renner05}]
\label{prop:chain}
The following holds for all $\rho_{ABC} \in \sdm(\H_A \kron \H_B \kron \H_C)$,
\[
\hmin(A|BC) \geq \hmin(AB|C) -\hmax(B).
\]
\end{prop}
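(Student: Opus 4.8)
The plan is to argue directly from the semidefinite definition of conditional min-entropy recalled in the excerpt, namely $\hmin(\rho_{AB}|B) = \sup_{\sigma_B}\max\Set{h\in\R}{2^{-h}\,\id_A\kron\sigma_B - \rho_{AB}\geq 0}$. Write $\lambda := \hmin(AB|C)$. Since the set of density matrices on $\H_C$ is compact and the inner maximum depends continuously on the conditioning state, the supremum is attained, so there is a state $\sigma_C$ with
\[
2^{-\lambda}\,\id_A \kron \id_B \kron \sigma_C \;-\; \rho_{ABC} \;\geq\; 0.
\]
The goal is then to convert the identity $\id_B$ appearing here into a genuine density operator on $\H_B$, paying only a multiplicative factor $2^{\hmax(B)}$, so that the resulting inequality becomes admissible in the supremum defining $\hmin(A|BC)$.

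The key structural fact I would establish first is a localization (support) lemma: the marginal constraint $\tr_{AC}(\rho_{ABC}) = \rho_B$ forces $\rho_{ABC}$ to be supported inside $\H_A \kron \operatorname{supp}(\rho_B)\kron \H_C$. Indeed, for any unit vector $\ket{\psi}\in\H_B$ orthogonal to $\operatorname{supp}(\rho_B)$ one has $\bra{\psi}\rho_B\ket{\psi}=0$, hence $\tr\big((\id_A\kron\proj{\psi}\kron\id_C)\,\rho_{ABC}\big)=0$, and positivity of $\rho_{ABC}$ forces the corresponding block to vanish. Writing $\Pi$ for the projector onto $\operatorname{supp}(\rho_B)$, this says precisely that $(\id_A\kron\Pi\kron\id_C)\,\rho_{ABC}\,(\id_A\kron\Pi\kron\id_C)=\rho_{ABC}$.

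With the localization lemma in hand, the remainder is a short operator manipulation. Conjugating the displayed inequality by the projector $\id_A\kron\Pi\kron\id_C$ preserves positivity; using $\Pi^2=\Pi$ on the first term and the localization lemma on the $\rho_{ABC}$ term yields $2^{-\lambda}\,\id_A\kron\Pi\kron\sigma_C \geq \rho_{ABC}$. Now set $r := \operatorname{rank}(\rho_B)$ and $\tau_B := \Pi/r$, which is a valid density matrix on $\H_B$; since $\Pi = r\,\tau_B$, we obtain
\[
2^{-(\lambda-\log r)}\,\id_A \kron (\tau_B \kron \sigma_C) \;\geq\; \rho_{ABC}.
\]
Because $\tau_B\kron\sigma_C$ is a density matrix on $\H_{BC}$, this inequality is admissible in the supremum defining $\hmin(A|BC)$, so $\hmin(A|BC)\geq \lambda-\log r = \hmin(AB|C)-\log\operatorname{rank}(\rho_B)$. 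Identifying $\hmax(B)=\log\operatorname{rank}(\rho_B)$ (or, if the coarser convention is intended, bounding $\Pi\leq\id_B$ and replacing $\log r$ by $\log\dim\H_B$) completes the proof.

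The main obstacle is the localization lemma: the entire argument hinges on being able to replace $\id_B$ by the support projector $\Pi$ at no entropic cost, and this is exactly the step that exploits that $\rho_B$ is the genuine $B$-marginal of $\rho_{ABC}$ rather than an arbitrary operator. Once that localization is justified, bounding $\Pi$ by $r\,\tau_B$ and reading off the min-entropy bound from the definition are immediate.
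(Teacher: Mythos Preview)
The paper does not supply its own proof of this proposition; it is quoted verbatim from Renner's thesis~\cite{Renner05} and used as a black box in the proof of \reflem{bqsmchain}. Your argument is correct and is essentially the standard one found there: starting from a (near\mbox{-})optimal $\sigma_C$ witnessing $\hmin(AB|C)$, one replaces the factor $\id_B$ by the support projector $\Pi$ of $\rho_B$ at no cost (your localization lemma), then rescales $\Pi = \operatorname{rank}(\rho_B)\cdot\tau_B$ with $\tau_B$ a density operator, and reads off the bound from the definition of $\hmin(A|BC)$ with the product state $\tau_B\otimes\sigma_C$.

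One small technical remark: your justification that the supremum over $\sigma_C$ is attained via ``continuity'' is not quite right, since the map $\sigma\mapsto\max\{h:2^{-h}\id\otimes\sigma\ge\rho\}$ takes the value $-\infty$ whenever $\operatorname{supp}(\rho)\not\subseteq\H_{AB}\otimes\operatorname{supp}(\sigma)$ and hence is not continuous. It is, however, upper semi-continuous on the compact set of states (the positive-semidefinite cone is closed), which suffices for attainment; alternatively, one can simply run your argument with an $\epsilon$-suboptimal $\sigma_C$ and let $\epsilon\to0$ at the end. Either way, the proof goes through.
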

Finally, we need the following lemma.
\begin{lemma}
\label{lem:moreminent}
For any state $\rho_{XYE}\in \sdm(\H_X\kron \H_Y \kron \H_E)$ that is classical on $X$ and $Y$ it holds that
\begin{equation}
\hmin(XE|Y=y)\geq \hmin(X|Y=y) 
\label{eq:eigineq}
\end{equation}
for every $y \in \mcal{Y}$. 
\end{lemma}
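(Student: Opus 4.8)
The plan is to reduce \refeq{eigineq} to the explicit formula for the \emph{unconditional} min-entropy recalled in the Preliminaries, $\hmin(\rho_A) = -\log\lambda_{\max}(\rho_A)$. Conditioning on the event $Y=y$ turns $\rho_{XYE}$ into the bipartite cq-state
\[
\rho_{XE|Y=y} \;=\; \sum_{x} P_{X|Y}(x|y)\,\proj{x}\kron \rho_E^{x,y},
\]
which is a genuine density matrix on $\H_X \kron \H_E$ since $\sum_x P_{X|Y}(x|y)=1$. As there is no further quantum system to condition on, the left-hand side of \refeq{eigineq} is $\hmin(XE|Y=y) = -\log\lambda_{\max}(\rho_{XE|Y=y})$, while the right-hand side is $\hmin(X|Y=y) = -\log\max_x P_{X|Y}(x|y)$. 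Hence it suffices to prove $\lambda_{\max}(\rho_{XE|Y=y}) \leq \max_x P_{X|Y}(x|y)$.

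First I would exploit that $X$ is classical: with respect to the fixed orthonormal basis $\set{\ket{x}}$ of $\H_X$, the matrix $\rho_{XE|Y=y}$ is block-diagonal, equal to $\bigoplus_x\big(P_{X|Y}(x|y)\,\rho_E^{x,y}\big)$. The spectrum of a block-diagonal operator is the union of the spectra of its blocks, so
\[
\lambda_{\max}(\rho_{XE|Y=y}) = \max_x \lambda_{\max}\!\big(P_{X|Y}(x|y)\,\rho_E^{x,y}\big) = \max_x \Big( P_{X|Y}(x|y)\,\lambda_{\max}(\rho_E^{x,y})\Big).
\]
Since each $\rho_E^{x,y}$ is a density matrix, $\lambda_{\max}(\rho_E^{x,y})\leq 1$, and therefore $\lambda_{\max}(\rho_{XE|Y=y}) \leq \max_x P_{X|Y}(x|y)$. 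Taking $-\log$ on both sides yields exactly \refeq{eigineq}.

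There is essentially no serious obstacle here. The only point requiring a little care is the interpretation of the notation: $\hmin(XE|Y=y)$ denotes the \emph{joint}, unconditional min-entropy of the bipartite state obtained after conditioning on the event $Y=y$ (nothing remains after the bar), so that the formula $\hmin(\rho) = -\log\lambda_{\max}(\rho)$ applies directly; the rest is the elementary spectral fact that the largest eigenvalue of a direct sum is the maximum of the largest eigenvalues of the summands, combined with $\lambda_{\max}(\rho_E^{x,y})\le 1$. (One could also phrase the argument operationally via $\hmin = -\log\gs$, but the eigenvalue computation is the most direct route.)
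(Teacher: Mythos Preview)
Your proof is correct and follows essentially the same approach as the paper: reduce to $\lambda_{\max}(\rho_{XE|Y=y})\le\lambda_{\max}(\rho_{X|Y=y})$, use the block-diagonal structure induced by the classical register $X$, and bound each block's largest eigenvalue by $P_{X|Y}(x|y)$ via $\lambda_{\max}(\rho_E^{x,y})\le 1$. The only cosmetic difference is that the paper first conjugates by a unitary on $\H_X$ to pass to the computational basis, whereas you work directly in the fixed basis $\{\ket{x}\}$ from the definition of a cq-state.
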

\begin{proof}
Note that it suffices to show that
$\lambda_{\max}(\rho^y_{XE}) \leq \lambda_{\max}(\rho^y_{X})$ holds for every $y\in \mcal{Y}$. Because $\rho_{XE}^y$ is classical on $X$, there exists a unitary $U$ acting on $\H_X$ such that $\tilde{\rho}_{XE}^y := (U\kron\id_E) \rho_{XE}^y (U^\dagger \kron \id_E)$ is classical with respect to the computational basis $\set{\ket{x}}_{x \in \mcal{X}}$ on $\H_X$ with $\mcal{X}:=[d]$. In particular, this means that $\tilde{\rho}_{XE}^y$ has block-diagonal structure:
\[
\tilde{\rho}^y_{XE} = \sum_{x \in [d]} P_{X|Y}(x|y) \proj{x} \kron \rho_{E}^{x,y}=\begin{bmatrix} P_{X|Y}(1|y)\,\rho_E^{1,y} &&\boldsymbol{0}\\
&\ddots & \\
\boldsymbol{0} && P_{X|Y}(d|y)\,\rho_E^{d,y}\end{bmatrix}.
\]
Note that because $U$ is unitary, $\tilde{\rho}^y_{XE}$ has the same eigenvalues as $\rho^y_{XE}$, where these eigenvalues are given by the union of the eigenvalues of the blocks on the diagonal of $\tilde{\rho}^y_{XE}$. From this we see that the largest eigenvalue of $\tilde{\rho}^y_{XE}$ (and thus of $\rho^y_{XE}$) cannot be larger than the largest eigenvalue of $\tilde{\rho}^y_{X}:=\trace_E (\tilde{\rho}^y_{XE})$ (and thus of $\rho^y_{X}$). 
\end{proof}
\begin{proof}[Proof of \reflem{bqsmchain}]
By \refeq{guessform} it is equivalent to show that
\[
\gs(X|YE) \leq \gs(X|Y) \, 2^{\hmax(E)}. 
\]
Using \refprop{avgprop}, we write
\begin{align*}
\gs & (X|EY)  = \sum_y P_Y(y)\, \gs(X|E,Y=y) = \sum_y P_Y(y) \,2^{-\hmin(X|E,Y=y) } \\
&\leq \sum_y P_Y(y) \, 2^{- (\hmin(XE|Y=y)-\hmax(E))} \\
&\leq 2^{\hmax(E)} \, \sum_y P_Y(y) 2^{-\hmin(X|Y=y)} 
=2^{\hmax(E)} \, \gs (X|Y), 
\end{align*}
where the first inequality is \refprop{chain}, 
and the second inequality follows by \reflem{moreminent}. 
Hence, the claim follows.
\end{proof}
\chapter{Proof of \reflem{probsymmetry}}
\label{app:prfprobsym}
\begin{proof}
Let  $\alpha,\beta \in \cnum$ be such that $\theta_i:=\set{\alpha \ket{0}+\beta \ket{1}, \beta \ket{0}-\alpha \ket{1} }$. (We can always find such $\alpha$ and $\beta$.) Writing out the measurement explicitly gives
\begin{align*}
P_{Y_i|X_i B_i \Theta_i}(0  | x_i , b_i, \theta_i)   &= | (\alpha \bra{0}+\beta \bra{1})H^{b_i} \ket{x_i}|^2
\qquad\text{and}\\
P_{Y_i|X_i B_i \Theta_i}(1  | x_i , b_i, \theta_i)  &= | (\beta \bra{0}-\alpha \bra{1})H^{b_i} \ket{x_i}|^2. 
\end{align*}
Hence, it suffices to prove that
\begin{equation}
| (\alpha \bra{0}+\beta \bra{1})H^{b_i}\ket{x_i} |^2
= | (\beta \bra{0}-\alpha \bra{1})H^{b_i}\ket{x_i \xor 1 } |^2 
\label{eq:bsc}
\end{equation}
for every $ x_i,b_i \in \set{0,1}$.

We first show \refeq{bsc} for $b_i=0$. Let $\sigma_1$ be the first Pauli matrix defined by $\sigma_1 \ket{a} = \ket{a\oplus 1}$ for every $a \in \set{0,1}$. It follows immediately from the definition that $\sigma_1$ is a unitary matrix and it is easy to see that $\sigma_1$ is Hermitian. Then,
\begin{align*}
|(\alpha \bra{0}+\beta\bra{1})\ket{x_i}|^2 &= |(\alpha \bra{0}+\beta\bra{1}) \sigma_1 \sigma_1 \ket{x_i }|^2 = |(\alpha \bra{1} + \beta \bra{0}) \ket{x_i\xor 1}|^2 \\
&= |(\beta \bra{0}- \alpha \bra{1} ) \ket{x_i\xor 1}|^2 
\end{align*}
The last equation follows because the expression equals either $|\alpha|^2$ or $|\beta|^2$ (depending on $x_i\in \set{0,1}$), hence we may freely change the sign of $\alpha$.  
For $b_i=1$, we have
\[
|(\alpha \bra{0}+\beta\bra{1})H\ket{x_i}|^2 = |(\alpha \bra{0}+\beta\bra{1})(\ket{0} + (-1)^{x_i}\ket{1})|^2 = |\alpha +(-1)^{x_i}\beta |^2
\]
and
\[
|(\beta \bra{0}-\alpha \bra{1})H\ket{x_i\xor 1}|^2 = |(\beta \bra{0}-\alpha \bra{1})(\ket{0} - (-1)^{x_i}\ket{1})|^2 = |\beta + (-1)^{x_i} \alpha|^2.
\]
We see that those expressions are equal for every $x_i \in \set{0,1}$.
\end{proof}

\end{appendix}
\end{document}